\renewcommand\footnotetextcopyrightpermission[1]{}
\newlist{steps}{enumerate}{1}
\setlist[steps, 1]{label = Step \arabic*:}
	 \definecolor{haskellblue}{rgb}{0.0, 0.0, 1.0}
	 \definecolor{haskellstr}{rgb}{0.2, 0.2, 0.6}
	 \definecolor{haskellred}{rgb}{1.0, 0.0, 0.0}
  \definecolor{gray_ulisses}{gray}{0.55}
  \definecolor{castanho_ulisses}{rgb}{0.71,0.33,0.14}
  \definecolor{preto_ulisses}{rgb}{0.41,0.20,0.04}
  \definecolor{green_ulises}{rgb}{0.2,0.75,0}
	\definecolor{haskellblue}{gray}{0.1}
	\definecolor{haskellstr}{gray}{0.1}
	\definecolor{haskellred}{gray}{0.1}
	\definecolor{gray_ulisses}{gray}{0.1}
	\definecolor{castanho_ulisses}{gray}{0.1}
	\definecolor{preto_ulisses}{gray}{0.1}
	\definecolor{green_ulisses}{gray}{0.1}
\definecolor{lcolor}{gray}{0.0}
\definecolor{lappcolor}{gray}{0.0}
\definecolor{lappascolor}{gray}{0.0}
\def\codesize{\normalsize}
\newcommand\showfocus[1]{\color{purple}{\textbf{#1}}}
\lstdefinelanguage{HaskellUlisses} {
	basicstyle=\ttfamily\small,
	moredelim=[is][\showfocus]{\#}{\#},
	sensitive=true,
	morecomment=[l][\color{gray_ulisses}\ttfamily\itshape\codesize]{--},
	morecomment=[s][\color{gray_ulisses}\ttfamily\itshape\codesize]{\{-}{-\}},
	morestring=[b]",
	stringstyle=\color{haskellstr},
	showstringspaces=false,
	numberstyle=\codesize,
	numberblanklines=true,
	showspaces=false,
	breaklines=true,
	showtabs=false,
  literate={ {quals}{{$\mathbb{Q}$}}1
             {iquals}{{$\mathbb{Q}$}}2
             {ltsolzero}{{$A_0$}}2
             {band}{{$\textbf{\texttt{and}}$}}2
             {->}{{$\rightarrow$}}2
             {not}{{$\neg\!\!\!$}}2
             {GType}{{\tilde{\mathit{\texttt{Type}}}}}4
           },
	emph=
	{[1]
		FilePath,IOError,abs,acos,acosh,all,and,any,appendFile,approxRational,asTypeOf,asin,
		asinh,atan,atan2,atanh,basicIORun,break,catch,ceiling,chr,compare,concat,concatMap,
		const,cos,cosh,curry,cycle,decodeFloat,denominator,digitToInt,div,divMod,drop,
		dropWhile,either,elem,encodeFloat,enumFrom,enumFromThen,enumFromThenTo,enumFromTo,
		error,even,exp,exponent,fail,mapMaybe,filter,flip,floatDigits,floatRadix,floatRange,floor,
		fmap,foldl,foldl1,foldr,foldr1,fromDouble,fromEnum,fromInt,fromInteger,fromIntegral,
		fromRational,fst,gcd,getChar,getContents,getLine,head,id,inRange,index,init,intToDigit,
		interact,ioError,isAlpha,isAlphaNum,isAscii,isControl,isDenormalized,isDigit,isHexDigit,
		isIEEE,isInfinite,isLower,isNaN,isNegativeZero,isOctDigit,isPrint,isSpace,isUpper,iterate,
		last,lcm,length,lex,lexDigits,lexLitChar,lines,log,logBase,lookup,map,mapM,mapM_,max,
		maxBound,posMax,negMax,maximum,maybe,min,minBound,minimum,mod,negate,not,notElem,null,numerator,odd,
		or,ord,pi,pred,primExitWith,print,product,properFraction,putChar,putStr,putStrLn,quot,
		quotRem,range,rangeSize,read,readDec,readFile,readFloat,readHex,readIO,readInt,readList,readLitChar,
		readLn,readOct,readParen,readSigned,reads,readsPrec,realToFrac,recip,rem,repeat,replicate,return,
		reverse,round,scaleFloat,scanl,scanl1,scanr,scanr1,seq,sequence,sequence_,show,showChar,showInt,
		showList,showLitChar,showParen,showSigned,showString,shows,showsPrec,significand,signum,sin,
		sinh,snd,span,splitAt,sqrt,subtract,succ,sum,tail,take,takeWhile,tan,tanh,threadToIOResult,toEnum,
		toInt,toInteger,toLower,toRational,toUpper,truncate,uncurry,undefined,unlines,until,unwords,unzip,
		unzip3,userError,words,writeFile,zip,zip3,zipWith,zipWith3,listArray,doParse,empty,for,initTo,
        assert,compose,checkGE,maxEvens,empty,create,get,set,initialize,idVec,fastFib,fibMemo,
        ex1,ex2,ex3,incr,inc,dec,isPos,positives,find,insert,len,size,union,fromList,initUpto,trim,
        insertSort,decsort,qsort,reverse,append,upperCase, ifM, whileM, get, decrM, diff,
        project, select, leq, elts, keys, dkeys, dfun, addKey, pTrue, emptyRD, rFalse,
        	dom, rng, isI, isD, isS, movie1, movie2,  toI, toS, toD, good_titles, runState, ret,
        	update, getCtr, setCtr, ctr, rdCtr, wrCtr, ifTest, whileTest, posCtr, zeroCtr, decr, decCtr,
        	pread , pwrite , plookup , pcontents, pcreateF , pcreateFP, pcreateD, active, caps, pset, eqP,
        	write, contents, alloc, derivP, copyP, createDir, store, copyRec, copySpec,
        	forM_, when, flookup, fread, createDir, pcreateFile, isFile, copyFrame, ?
	},
	emphstyle={[1]\color{haskellblue}},
	emph=
	{[2] 	Show,Eq,Ord,Num,UpClosed,Comp,Wit,Witness,Inductive,Meet,Flip,TRUE,Nat,Pos,Neg,IntGE,Plus,List,
        Bool,Char,Double,Either,Float,IO,Integer,Int,Maybe,
        Ordering,Rational,Ratio,ReadS,ShowS,String,Word8,
        InPacket,Tree,Vec,NullTerm,IncrList,DecrList,
        UniqList,BST,MinHeap,MaxHeap,World,RIO,IO,HIO,Post,Pre,
        Privilege, Prop, Chain, ChainTy, Range, Dict, RD, Dom, Set, P, Univ, Schema, MovieSchema, RT,
        TDom, TRange, MoviesTable, RTSubEqFlds, RTEqFlds, Disjoint, Union, Ret, Seq, Trans, Map,
        Pure, Then, Else, Exit, Inv, OneState, Priv, Path, FH, Stable,
				Prop, Nat,
	},
	emphstyle={[2]\color{castanho_ulisses}},
	emph=
	{[3]
		case,class,data,deriving,do,else,if,import,in,infixl,infixr,instance,let,
		module,of,primitive,then,refinement,type,where,forall,bound,and
		measure,reflect,predicate, assume
	},
	emphstyle={[3]\color{preto_ulisses}\textbf},
	emph=
	{[4]
		quot,rem,div,mod,elem,notElem,seq
	},
	emphstyle={[4]\color{castanho_ulisses}\textbf},
	emph=
	{[5]
		EQ,GT,LT,Left,Right
	},
	emphstyle={[5]\color{preto_ulisses}\textbf},
	emph=
	{[6]
	    axiomatize, measure, inline
	},
	emphstyle={[6]\color{lcolor}}
}
\lstdefinelanguage{Pseudo} {
	basicstyle=\ttfamily\codesize,
	sensitive=true,
  mathescape=true,
	morecomment=[l][\color{gray_ulisses}\ttfamily\codesize]{--},
	morecomment=[s][\color{gray_ulisses}\ttfamily\codesize]{\{-}{-\}},
	morestring=[b]",
	showstringspaces=false,
	numberstyle=\codesize,
	numberblanklines=true,
	showspaces=false,
	breaklines=true,
	showtabs=false
}
\title{Gradual Liquid Type Inference}
\author{Niki Vazou}
\affiliation{
  \institution{University of Maryland}
  \department{Dept.\ of Computer Science}
  \streetaddress{8223 Paint Branch Drive}
  \city{College Park}
  \state{Maryland}
  \postcode{20742}
  \country{USA}
}
\author{{\'E}ric Tanter}
\affiliation{
  \institution{University of Chile \& Inria Paris}
  \department{Computer Science Department (DCC)}
  \streetaddress{Beauchef 851}
  \city{Santiago}
  \country{Chile}
}
\author{David Van Horn}
\affiliation{
  \institution{University of Maryland}
  \department{Dept.\ of Computer Science}
  \streetaddress{8223 Paint Branch Drive}
  \city{College Park}
  \state{Maryland}
  \postcode{20742}
  \country{USA}
}
\begin{document}

\begin{abstract}
Refinement types allow for lightweight program verification by enriching types types with logical predicates. Liquid typing provides a decidable refinement inference mechanism that is convenient but subject to two major issues: 
(1) inference is global and requires top-level
annotations, making it unsuitable for inference of modular code
components and prohibiting its applicability to library code, and (2)
inference failure results in obscure error messages. These difficulties seriously hamper the migration of existing code to use refinements.

This paper shows that \emph{gradual liquid type inference}--a novel combination of liquid inference and gradual refinement types--addresses both issues.
Gradual refinement types, which support imprecise predicates that are optimistically interpreted, can be used in argument positions to constrain liquid inference so that the global inference process effectively infers modular specifications usable for library components.
Dually, when gradual refinements appear as the result of inference, they
signal an inconsistency in the use of static refinements.  
Because liquid refinements are drawn from a
finite set of predicates, in gradual liquid type inference
we can enumerate the \emph{safe concretizations} of each imprecise refinement, 
\ie the static refinements that justify why a program is gradually well-typed.
This enumeration is useful for static liquid type error explanation, since the safe concretizations exhibit all the potential inconsistencies that lead to
static type errors.

We develop the theory of gradual liquid type inference and explore its pragmatics in the setting of Liquid Haskell. 
To demonstrate the utility of our approach, we develop an
interactive tool, \toolname, for gradual liquid type inference in
Liquid Haskell that both infers modular types and explores safe
concretizations of gradual refinements. We report on the use of \toolname for error reporting
and discuss a case study on
the migration of three commonly-used Haskell list manipulation
libraries into Liquid Haskell.

\keywords{liquid types, refinement types, gradual typing, error explanation}
 \end{abstract}

\maketitle           

\section{Introduction}\label{sec:intro}

Refinement types~\citep{Freeman91} allow for 
lightweight program verification
by decorating existing program types 
with logical predicates. 
For instance, the type @{x:Int | 0 < x}@
denotes strictly positive integer values
and can be used to validate {\em at compile time}
the absence of division-by-zero errors by refining the type of division:
\begin{code}
  (/) :: Int -> {x:Int | 0 < x} -> Int 
\end{code}

A major challenge with refinement types is to support decidable automatic checking and {\em inference}. To this end, {\em \mbox{liquid} types} restrict refinement predicates to decidable theories~\citep{LiquidPLDI08}.
The type @{x:Int | k}@ 
describes integer values refined with some predicate @k@, 
that is automatically solved based on unifying the constraints generated at each use of @x@, resulting in a concrete refinement drawn from 
a {\em finite} domain of template refinements.
The attractiveness of liquid types for programmers is usability.
Verification only requires specification of top-level functions, 
while all intermediate types can be automatically inferred
and checked. 

It has long been recognized that automatic type inference makes error reporting very challenging~\citep{Wand86}, resulting in severe usability problems. In particular, at an ill-typed function application, the system should decide whether to blame the function definition or the client; the particular choice will unfortunately and inevitably expose some of the internals of the inference procedure on to the user. 
The situation is even worse with liquid type inference, because refinement verification is more involved than standard Hindley-Milner style unification. 
The difficulty of understanding error messages from liquid inference in turn makes it really hard to progressively migrate non-refined code to refined code, \eg~from Haskell to Liquid Haskell~\citep{Vazou14}.

For instance, consider the following function @divIf@ that either inverts  its argument @x@ if it is positive, or inverts @1-x@ otherwise:
\begin{mcode}  
  divIf x = if isPos x then 1/x else 1/(1-x)
\end{mcode}
The function relies on an imported function @isPos@ of type @Int -> Bool@. 
If we want to give @divIf@ a refined type, liquid inference starts from the signature:
\begin{mcode}  
  divIf :: x:{ Int | k$_x$ } -> {o:Int | k$_o$ } 
\end{mcode}
and solves the predicate variables @k$_x$@ and @k$_o$@ based on the use sites. However, without any uses of @divIf@ in the considered source code, and for reasons we will clarify in due course, liquid inference infers the useless refinements:
\begin{mcode}  
  divIf :: x:{ Int | false } -> {o:Int | false } 
\end{mcode}
The false refinement on the argument means that @divIf@ is dead code, 
since liquid type inference relies on a {\em closed world assumption}; and 
@divIf@ has, at inference time, no clients.  

Conversely, if the code base at inference time includes a ``positive'' client call @divIf 1@, the inferred precondition would be @0 < x@, triggering a type error in the {\em definition}
of @divIf@ due to the lack of information about @isPos@.
This hard-to-predict and moving blame is not unique 
to liquid type inference; it frequently appears in 
type inference engines, yielding hard-to-debug error messages.

A key contribution of this paper is to recognize that, in such situations, treating the inferred refinement as an {\em unknown} refinement --- in the sense of gradual typing~\citep{siekTaha:sfp2006} --- allows us to 
eliminate the closed world assumption, as well as globally
explain liquid type errors. 
Specifically, we adapt the gradual refinement types of \citet{Lehmann17} to the setting of liquid type inference, yielding gradual liquid type inference. 
Programmers can introduce gradual refinements such as @{x: Int | ?}@
and inference exhaustively searches for 
\textit{safe concretizations} (\scshorts for short), \ie the concrete refinements that can
replace each occurrence of a gradually-refined variable to make the program well-typed. 
These \scshorts
can then be used to understand liquid type errors and assist in migrating programs to adopt liquid types.

\paragraph{Contributions} This paper makes the following contributions:
\begin{itemize}[leftmargin=*]
\item We give a semantics and inference algorithm for gradual liquid types (\S~\ref{sec:theory}), by exploiting the abstract interpretation approach to gradual language design~\citep{adt}.  We prove that inference is correct and satisfies the static criteria for gradual languages~\citep{siek15}. The latter result is, to the best of our knowledge, novel for a gradual inference system.
\item We implement gradual liquid type inference in \toolname,
as an extension of \liquidHaskell (\S~\ref{sec:implementation}). The implementation integrates a number of optimizations and heuristics in order to be applicable to existing Haskell libraries.
\item \toolname takes as input a Haskell
program annotated with gradual refinements
and generates an interactive program
that presents all valid static choices to interpret 
each separate occurrence of an unknown refinement, if any.
The user can explore suggested predicates and decide which ones to use
so that their program is well-typed (\S~\ref{sec:error-explanation}).

\item We use \toolname for user-guided migration of three
existing Haskell libraries (1260 LoC) to \liquidHaskell (\S~\ref{sec:migration}),
demonstrating that
gradual liquid type inference can feasibly be supported and used 
interactively for program migration.
\end{itemize}

The essence of the novelty of our approach---that gradual inference based on abstract interpretation can be fruitfully used for error reporting and program migration---is not restricted to refinement types. 
We conjecture that the principles of gradual types can be used for 
type error explanation in further typing systems. 
This paper proceeds as follows.
We first provide an informal overview of liquid type inference,
gradual refinement types, and gradual liquid type inference
(\S~\ref{sec:overview}).  Next, we formally establish the necessary
background on liquid types and gradual refinements
(\S~\ref{sec:background}), before presenting the main results
(\S~\ref{sec:theory}--\ref{sec:migration}).  Finally, we discuss
related work (\S~\ref{sec:related}) and conclude
(\S~\ref{sec:conclusion}). Auxiliary
definitions as well as proofs of theorems can be found in Appendix~\ref{appendix:section:metatheory}. 
Source code for \toolname is available at:
\begin{center}
\url{https://github.com/ucsd-progsys/liquidhaskell/tree/gradual}
\end{center}

\section{Background and Overview}\label{sec:overview}

We start by recalling the basic notions of refinement types (\S~\ref{subsec:overview:refinementtypes}), 
their decidable fragment known as liquid types (\S~\ref{subsec:overview:liquid}), 
and recent work on gradual refinements (\S~\ref{subsec:overview:gradual}). 
Then (\S~\ref{subsec:overview:glt}) we combine gradual and liquid types 
to build a usable interactive inference procedure that
can be used for both error explanation and program migration.

\subsection{Refinement Types}\label{subsec:overview:refinementtypes}

To explain refinement type checking, assume the below type signatures for the example of~\S~\ref{sec:intro}:
\begin{mcode}
  isPos :: x:Int -> {b:Bool | b $\Leftrightarrow$ 0 < x}
  divIf :: Int -> Int 
\end{mcode}
With these specifications, @divIf@ is well-typed because the 
positive restriction in the precondition of @(/)@ is provably satisfied.
Refinement type checking proceeds in three steps, described below.

\paragraph{Step 1: Constraint Generation}
Based on the code and the specifications, refinement subtyping constraints are generated; for our example, two subtyping constraints are generated, 
one for each call to @(/)@.
They stipulate that, in the environment with
the argument @x@ and the boolean branching guard @b@, 
the second argument to @(/)@ (\ie @v = x@ and @v = 1-x@, \textit{resp.})
is {\em safe}, \ie it respects the precondition @0 < v@.
\begin{mcode}	
  b:{b $\Leftrightarrow$ 0 < x $\land$ b}   ${\vdash}$ {v | v = x}   $\preceq$ {v | 0 < v}	
  b:{b $\Leftrightarrow$ 0 < x $\land$ not b} ${\vdash}$ {v | v = 1-x} $\preceq$ {v | 0 < v}	
\end{mcode}
For space, we write @{v | p}@ to denote @{v:t | p}@
when the type @t@ is clear; we omit the refinement variables from the environment, simplifying @x:{x | p}@ to @x:{p}@, and we skip uninformative 
refinements such as @x:{true}@.

In both constraints the branching guard @b@
is refined with the result refinement of @isPos@: 
@b $\Leftrightarrow$ 0 < x@. 
Also, the environment is strengthened with the value of the condition in each branch: @b@ in the @then@ branch,@not b@ in the @else@ branch.

\paragraph{Step 2: Verification Conditions}
Each subtyping constraint is reduced to a logical verification condition (\textit{VC}), 
that validates if,
assuming all the refinements in the environment, 
the refinement on the left-hand side implies the one on the right-hand side. 
For instance, the two constraints above reduce to the following VCs.
\begin{mcode}	
  b $\Leftrightarrow$ 0 < x $\land$  b $\Rightarrow$ v = x   $\Rightarrow$ 0 < v 	
  b $\Leftrightarrow$ 0 < x $\land$ not b $\Rightarrow$ v = 1-x $\Rightarrow$ 0 < v 	
\end{mcode}
 
\paragraph{Step 3: Implication Checking}
Finally, an SMT solver is used to check the 
validity of the generated VCs, 
and thus determine if the program is well-typed. 
Here, the SMT solver determines that both VCs are valid, 
thus @divIf@ is well-typed.

\subsection{Liquid Types}\label{subsec:overview:liquid}
When not all refinement types are spelled out explicitly, one can use {\em inference}. For instance, the well-typedness of @divIf@ crucially relies on the guard predicate, as propagated by the refinement type 
of @isPos@. We now explain how liquid typing~\citep{LiquidPLDI08} 
infers a type for @divIf@ in case @isPos@ is an imported, 
{\em unrefined} function. 
Liquid types introduce refinement type variables, known as {\em liquid variables}, for unspecified refinements. 
As in \S~\ref{sec:intro}, the type of @divIf@ 
is assigned the liquid variables @k$_x$@ and @k$_o$@ for the 
input and the output refinements, \resp:
\begin{mcode}  
   divIf :: x:{ Int | k$_x$ } -> {o:Int | k$_o$ } 
\end{mcode}
After generating subtyping constraints as in~\S~\ref{subsec:overview:refinementtypes}, the inference procedure attempts to find a solution for the liquid variables such that all the constraints are satisfied. 
If no solution can be found, the program is deemed ill-typed.

\paragraph{Step 1: Constraint Generation}
After introduction of the liquid variables, the following subtyping constraints are generated for @divIf@.
\begin{mcode}	
  x:{k$_x$}, b:{b}  ${\vdash}$ {v | v = x  } $\preceq$ {v | 0 < v}	
  x:{k$_x$}, b:{not b} ${\vdash}$ {v | v = 1-x} $\preceq$ {v | 0 < v}	
\end{mcode}

\paragraph{Step 2: Constraint Solving}
Liquid inference then solves the liquid variables
@k@ so that the subtyping constraints are satisfied. 
The solving procedure takes as input 
a {\em finite} set of refinement {\em templates} $\quals^\star$
abstracted over program variables.
For example, the template set $\quals^\star$ below describes ordering predicates, 
with $\star$ ranging over program variables.
\begin{mcode}
  $\quals^\star$ = {$0<\star$, $0\leq\star$, $\star<0$, $\star\leq 0$, $\star < \star$, $\star \leq \star$}
\end{mcode} 

Next, for each liquid variable, 
the set $\quals^\star$ is instantiated
with all program variables in scope, to generate well-sorted predicates.
Instantiation of $\quals^\star$ 
for the liquid variables @k$_x$@ and @k$_o$@
leads to the following concrete predicate candidates. 
\begin{mcode}
  $\quals^x$ = { $0\!<\!x$, $0\!\leq\!x$, $x\!<\!0$, $x\!\leq\!0$ }
  $\quals^o$ = { $0\!<\!o$, $0\!\leq\!o$, $o\!<\!0$, $o\!\leq\!0$, $o\! <\! x$, $x\!<\!o$, $\dots$ }
\end{mcode} 

Finally, inference iteratively computes the strongest solution 
for each liquid variable that satisfies the constraints. 
It starts from an initial solution 
that maps each variable to the logical conjunction of all the 
instantiated templates
@$\ltsol$ = {k$_x \mapsto \bigwedge \quals^x$, k$_o \mapsto \bigwedge \quals^o$}@.
It then repeatedly filters out predicates of the solution until all constraints are satisfied. 

In our example, the initial solution includes the predicates $0<x$ and
$x<0$ is \emph{contradictory}, the conjunction of solutions implies
false, so both liquid variables are solved to false.  As discussed
in~\S~\ref{sec:intro}, this inferred as false, solution is based on a
closed world assumption and is practically useless.  With client code
that imposes additional constraints, the inferred solution can be more
useful, though the reported errors can be hard to interpret.

\subsection{Gradual Refinement Types}\label{subsec:overview:gradual}
We observe that we can exploit gradual typing in order to assist inference and provide better support for error explanation and program migration. 
Instead of interpreting an unspecified refinement as a liquid variable, let us use the unknown gradual refinement @{ Int | ? }@ for the
argument type of @divIf@~\citep{Lehmann17}.
\begin{mcode}  
   divIf :: x:{ Int | ? } -> Int 
\end{mcode}
This gradual precondition specifies that 
for each {\em usage occurrence} of the argument @x@, 
there must \textit{exist} a concrete refinement (which we call a {\em safe concretization}, \scshort)
for which the (non-gradual) program type checks. Key to this definition is that the refinement that exists need not be unique to all occurrences of the identifier. 
Gradual refinement type checking proceeds as follows.

\paragraph{Step 1: Constraint Generation}
First, we generate the subtyping constraints derived from the definition of 
@divIf@ that now contain gradual refinements.
\begin{mcode}	
  x:{?}, b:{b}  ${\vdash}$ {v | v = x  } $\preceq$ {v | 0 < v}
  x:{?}, b:{not b} ${\vdash}$ {v | v = 1-x} $\preceq$ {v | 0 < v}	
\end{mcode}

\paragraph{Step 2: Gradual Verification Conditions}
Each subtyping reduces to a VC, 
where each gradual refinement such as @x:{?}@ translates intuitively 
to an existential refinement (@$\exists$ p. p x@).
The solution of these existentials are 
the safe concretizations (\scshorts) of the program.
Here, we informally use @$\exists^{\egrad}$ p@
to denote such existentials over predicates, and call the corresponding verification conditions {\em gradual VCs} (GVCs).
For example, the GVCs for @divIf@ are the following.
\begin{mcode}	
  ($\exists^{\egrad}$ p$_{\text{then}}$. p$_{\text{then}}$ x) $\land$ b  $\Rightarrow$ v=x   $\Rightarrow$ 0 < v	
  ($\exists^{\egrad}$ p$_{\text{else}}$. p$_{\text{else}}$ x) $\land$ not b $\Rightarrow$ v=1-x $\Rightarrow$ 0 < v	
\end{mcode}

\paragraph{Step 3: Gradual Implication Checking}
Checking the validity of the generated GVCs
is an open problem. 
In the @divIf@ example, 
we can, by observation, find the \scshorts
that render the GVCs valid: @p$_{\text{then}}$ x $\mapsto$ 0 < x@ and 
@p$_{\text{else}}$ x $\mapsto$ x $\leq$ 0@.

More importantly, we can present these solutions to the user 
as the conditions under which @divIf@ is well-typed.
We use the \scshorts to explain to the user that 
for @divIf@ to type check under a static type, 
the gradual precondition @?@ should be replaced with a refinement 
that implies @0 < x@ in the @then@ branch and @x $\leq$ 0@ in the @else@ 
branch.
Since such a refinement cannot exist, we use \scshorts to explain to the user 
that the @?@ cannot be replaced in the type of @divIf@, 
unless the user modifies the code 
(here, strengthen the postcondition of @isPos@, as in~\S~\ref{subsec:overview:refinementtypes}).

Our goal is to find an algorithmic procedure 
that solves GVCs. 
\citet{Lehmann17} describe how GVCs over linear arithmetic 
can be checked, while 
\citet{Courcelle12} describe a more general logical fragment 
(monadic second order logic) with an algorithmic decision procedure. 
Yet, in both cases we lose the justifications, \ie~the \scshorts,  
and thus the opportunity to use such \scshorts for error explanation and migration assistance.

\subsection{Gradual Liquid Type Inference}\label{subsec:overview:glt}

To algorithmically solve GVCs we can exhaustively search for \scshorts in 
the {\em finite} predicate domain of liquid types.
In between constraint generation and constraint solving, 
gradual liquid type inference concretizes the constraints
by instantiating gradual refinements with each possible 
liquid template.
 
\paragraph{Step 1: Constraint Generation} 
Constraint generation is performed exactly like gradual refinement types, 
leading to the constraints of~\S~\ref{subsec:overview:gradual}
for the @divIf@ example.

\paragraph{Step 2: Constraint Concretization}
We exhaustively generate all the possible concretizations 
of the constraints. 
For example, @x:{?}@ can be concretized to any predicate from the $\quals^x$ set of~\S~\ref{subsec:overview:liquid}, 
yielding $|\quals^x|^2 = 16$ concrete constraint sets, among which the following two:

\begin{enumerate}[leftmargin=*]
\item\label{concrete:invalid}
\textbf{Concretization for} @p$_{\text{then}}$ x $\mapsto$ 0 < x@, @p$_{\text{else}}$ x $\mapsto$ 0 < x@:

\begin{mcode}	
 x:{0<x}, b:{b}  ${\vdash}$ {v | v = x  } $\preceq$ {v | 0 < v}
 x:{0<x}, b:{not b} ${\vdash}$ {v | v = 1-x} $\preceq$ {v | 0 < v}	
\end{mcode}

\item\label{concrete:valid}
\textbf{Concretization for} @p$_{\text{then}}$ x $\mapsto$ 0 < x@, @p$_{\text{else}}$ x $\mapsto x \leq 0$@:
\begin{mcode}	
 x:{0<x},  b:{b}  ${\vdash}$ {v | v = x  } $\preceq$ {v | 0 < v}
 x:{x$\leq$0}, b:{not b} ${\vdash}$ {v | v = 1-x} $\preceq$ {v | 0 < v}	
\end{mcode}
\end{enumerate}

\paragraph{Step 3: Constraint Solving}
After concretization, liquid constraint solving finds out the valid ones. 
In our example, the constraint~\ref{concrete:invalid} above is invalid
while~\ref{concrete:valid} is valid. 
Out of the 16 concrete constraints, only two are valid, 
with @p$_{\text{then}}$ x $\mapsto$ 0 < x@ and
@p$_{\text{else}}$ x $\mapsto x \leq 0$@
or @p$_{\text{else}}$ x $\mapsto x < 0$@.
Thus, @divIf@ type checks and also the inference 
provides to the user the \scshorts 
of @p$_{\text{then}}$@ and  @p$_{\text{else}}$@
as an explanation of type checking.

\paragraph{Application to Error Explanation}
The contradictory solutions of the gradual refinement of the argument 
indicates to the user that the code cannot statically type check. The user needs to  edit either the code or the refinement types.
For example,
the code can be fixed by providing a precise type for @isPos@ 
(following~\S~\ref{subsec:overview:refinementtypes})
or by restricting @divIf@'s domain to positive numbers 
(rendering the @else@ branch as dead code). 

Unlike current liquid type inference, 
our algorithm does not assume a closed world, since the \scshorts
it provides do not depend on call sites of the functions. 
Thus, the output of the algorithm is modular: 
it explains the code contradictions that lead to type errors
but these contradictions are generated in the function definition 
and do not rely on the arguments of function calls. 

Importantly, gradual liquid types are used for error explanation 
using the language of contradictions in refinement predicates that the user understands. 
In this example, the output of our algorithm 
informs the user that the program cannot type check because 
the same refinement should be solved to 
@0 < x@ in the @then@ branch and to 
@x < 0@ in the @else@ branch, which is impossible.  
This explanation is much more informative 
than the current liquid type algorithm, which would, 
generate a type error either in the definition of 
@divIf@ or at its call sites, following the closed world assumption,
as discussed in~\S~\ref{sec:intro}.

In~\S~\ref{sec:error-explanation} we use gradual liquid types
to explain to the user the infamous off-by-one bug. 
Since the algorithm is exhaustively searching all potential solutions, 
it is exponentially slow on the number of potential solutions. 
Yet, in~\S~\ref{sec:migration} we show that we can apply our technique on 
real Haskell code due to the three following reasons: 
\begin{itemize}[leftmargin=*]
\item Our algorithm is a modular, per-function analysis. Thus
inference time depends on the size of the analyzed function, and not 
on the size of the whole codebase to be type checked. 
\item Our implementation is user interactive, thus
the exponential complexity is not a problem in practice,
since our algorithm runs in the background while expecting the user input.  
Once a \scshort is found, the system presents it to the user, 
while looking for the next \scshorts in the background. 
\item Finally, in~\S~\ref{sec:implementation} we discuss further
technical optimizations that make our theoretically-exponential
algorithm tractable and usable in practice.   
\end{itemize}

In~\S~\ref{sec:theory} we formalize the inference steps 
and prove the correctness and the gradual criteria 
of our algorithm. Various implementation considerations necessary for the algorithm to scale are described in \S~\ref{sec:implementation}. We report on its use for error explanation and program migration in \S~\ref{sec:error-explanation} and \S~\ref{sec:migration}.

\section{Liquid Types and Gradual Refinements}\label{sec:background}

We briefly provide the technical background 
required to describe gradual liquid types. 
We start with the semantics and rules of a generic refinement type system 
(\S~\ref{subsec:refinementtypes}) which we then adjust to describe both
liquid types (\S~\ref{subsec:liquidtypes}) 
and gradual refinement types (\S~\ref{subsec:gradualtypes}).

\subsection{Refinement Types}\label{subsec:refinementtypes}
\begin{figure}[t!]
\centering
\small
\begin{minipage}{0.4\textwidth}
$
\begin{array}{rrcl}
\emphbf{Constants} \quad
  & c
  & ::=
  & \land \spmid \lnot \spmid = \spmid \dots \\
  && \spmid & \etrue \spmid \efalse \\
  && \spmid & 0, 1,-1, \dots
\\[0.03in]

\emphbf{Values} \quad
  & \vparam & ::=&  c
  \spmid \efun{x}{\typ}{\expr}
\\[0.03in]

\emphbf{Expressions} \quad
  & \expr & ::=& \vparam \spmid x  \spmid \eapp{\expr}{x} \\
  &   & \spmid & \eif{x}{\expr}{\expr} \\
  &   & \spmid & \elet{x}{\typ}{\expr}{\expr}\\
  &   & \spmid & \elettyp{x}{\typ}{\expr}{\expr}
\\[0.03in]

\emphbf{Predicates} \quad
  & \pred
  & ::=
  & \eparam
\end{array}
$
\end{minipage}\hspace{1cm}
\begin{minipage}{0.4\textwidth}
$
\begin{array}{rrcl}
\emphbf{Basic Types} \quad
  & \btyp
  & ::=
  & \tint \spmid \tbool
\\[0.03in]

\emphbf{Types} \quad
  & \tparam
  & ::= &   \tref{x}{\btyp}{\pred} 
            \spmid \tfun{x}{\tparam}{\tparam} 
\\[0.05in]

\emphbf{Environment} \quad
  & \env
  & ::=  & \cdot \spmid \env, \bind{x}{\typ}
\\[0.05in]

\emphbf{Substitution} \quad
  & \esubst
  & ::=  & \cdot \spmid \esubst, \ebind{x}{\expr}
\\[0.05in]

\emphbf{Constraint} \quad
  & C
  & ::=  & \iswellformed{\rparam}{\envparam}{\tref{v}{\btyp}{\pred} } \\
  & & \spmid & \issubtype{\rparam}{\envparam}{\tref{v}{\btyp}{\pred}}{\tref{v}{\btyp}{\pred}}
\end{array}
$
\end{minipage}
\vspace{-3mm}
\caption{Syntax of \reflang.}
\vspace{-3mm}
\label{fig:syntax}
\end{figure}
\begin{figure*}[t]
\small
\judgementHead{Typing}{\hastype{\rparam}{\envparam}{\eparam}{\tparam}}\\
$$
\inference{
	\envparam (x) = \tref{v}{\btyp}{\_}
}{
	\hastype{\rparam}{\envparam}{x}{\tref{v}{\btyp}{\refparam{\rparam}{v=x}}}
}[\ruletvarbase]
\qquad
\inference{
	\envparam (x)\ \text{is a function type}
}{
	\hastype{\rparam}{\envparam}{x}{\envparam (x)}
}[\ruletvar]
\qquad
\inference{
}{
	\hastype{\rparam}{\envparam}{c}{\tc{c}}
}[\ruletconst]
$$

$$
\inference{
	\hastype{\rparam}{\envparam}{\expr}{\typ_e}
	&& 
	\iswellformed{\rparam}{\envparam}{\typ}
	&&
	\issubtype{\rparam}{\envparam}{\tparam_e}{\tparam}
}{
	\hastype{\rparam}{\envparam}{\expr}{\typ}
}[\ruletsub]
\qquad
\inference{
	\hastype{\rparam}{\envparam,\bind{x}{\inferred{\tparam_x}}}{\eparam}{\tparam} &&
	\iswellformed{\rparam}{\envparam}{\tfun{x}{\inferred{\tparam_x}}{\tparam}}
}{
	\hastype{\rparam}{\envparam}{\efun{x}{\typ_x}{\eparam}}{(\tfun{x}{\inferred{\tparam_x}}{\tparam})}
}[\ruletfun]
$$

$$
\inference{
	\hastype{\rparam}{\envparam}{y}{\tparam_x}
	\\ 
	\hastype{\rparam}{\envparam}{\eparam}{(\tfun{x}{\tparam_x}{\tparam})}
}{
	\hastype{\rparam}{\envparam}{\eapp{\eparam}{y}}{\tparam\sub{x}{y}}
}[\ruletapp]
\qquad
\inference{
    \text{fresh}\ x'
    &&
    \envparam_1 \defeq \envparam, \bind{x'}{\tref{v}{\tbool}{\refparam{\rparam}{x}}}
    &&
    \envparam_2 \defeq \envparam, \bind{x'}{\tref{v}{\tbool}{\refparam{\rparam}{\lnot x}}}
    \\
	\hastype{\rparam}{\envparam}{x}{\tref{v}{\tbool}{\_}}
	&& 
	\hastype{\rparam}{\envparam_1}{\eparam_1}{\inferred{\typ}}
	&& 
	\hastype{\rparam}{\envparam_2}{\eparam_2}{\inferred{\typ}}
	&&
	\iswellformed{\rparam}{\envparam}{\inferred{\typ}}
}{
	\hastype{\rparam}{\envparam}{\eif{x}{\eparam_1}{\eparam_2}}{\inferred{\typ}}
}[\ruletif]
$$

$$
\inference{
	\hastype{\rparam}{\envparam}{\eparam_x}{\tparam_x}
	&&
	\hastype{\rparam}{\envparam,\bind{x}{\tparam_x}}{\eparam}{\inferred{\typ}}
    &&
	\iswellformed{\rparam}{\envparam}{\inferred{\typ}}
}{
	\hastype{\rparam}{\envparam}{\elet{x}{\typ_x}{\eparam_x}{\eparam}}{\inferred{\typ}}
}[\ruletlet]
\qquad
\inference{
	\hastype{\rparam}{\envparam}{\eparam_x}{\tparam_x}
	&&
	\hastype{\rparam}{\envparam,\bind{x}{\tparam_x}}{\eparam}{\inferred{\typ}}
	&&
	\iswellformed{\rparam}{\envparam}{\inferred{\typ}}
	&&
	\iswellformed{\rparam}{\envparam}{\tparam_x}
}{
	\hastype{\rparam}{\envparam}{\elettyp{x}{\tparam_x}{\eparam_x}{\eparam}}{\inferred{\typ}}
}[\ruletspec]
$$

\judgementHead{Sub-Typing}{\issubtype{\rparam}{\envparam}{\tparam}{\tparam}}\\
$$
\inference{
	\valid{\issubtype{\rparam}{\envparam}{\tref{v}{\btyp}{\pred_1}}{\tref{v}{\btyp}{\pred_2}}}
}{
	\issubtype{\rparam}{\envparam}{\tref{v}{\btyp}{\pred_1}}{\tref{v}{\btyp}{\pred_2}}
}[\rulesbase]
\qquad
\inference{
	\issubtype{\rparam}{\envparam}{\tparam_{x2}}{\tparam_{x1}}
	&&
	\issubtype{\rparam}{\envparam, \bind{x}{\tparam_{x2}} }{\tparam_1}{\tparam_2}
}{
	\issubtype{\rparam}{\envparam}{\tfun{x}{\tparam_{x1}}{\tparam_1}}{\tfun{x}{\tparam_{x2}}{\tparam_2}}
}[\rulesfun]
$$

\judgementHead{Well-Formedness}{\iswellformed{\rparam}{\envparam}{\tparam}}\\
$$
\inference{
	\valid{\iswellformed{\rparam}{\envparam}{\tref{v}{\btyp}{\pred}}}
}{
	\iswellformed{\rparam}{\envparam}{\tref{v}{\btyp}{\pred}}
}[\rulewbase]
\qquad
\inference{
	\iswellformed{\rparam}{\envparam}{\tparam_x}
	&&
	\iswellformed{\rparam}{\envparam, \bind{x}{\tparam_x}}{\tparam}
}{
	\iswellformed{\rparam}{\envparam}{\tfun{x}{\tparam_x}{\tparam}}
}[\rulewfun]
$$
\caption{Static Semantics of \reflang. (Types colored in blue need to be inferred.)}
\label{fig:rules}
\end{figure*} 
\paragraph{Syntax}
Figure~\ref{fig:syntax} presents the syntax of 
a standard functional language with refinement types, \reflang. 
The expressions of the language include constants,
lambda terms, variables, function applications,
conditionals, and let bindings.
Note that the argument of a function application needs to be syntactically a variable, as must the condition of a conditional; this normalization is standard in refinement types as it simplifies the formalization~\citep{LiquidPLDI08}. There are two let binding forms, one where the type of the bound variable is inferred and one where it is explicitly declared.

\reflang types include {\em base refinements} $\tref{\rbind}{\btyp}{\pred}$ where 
$\btyp$ is a base type (\tint or \tbool) refined with the logical predicate $\pred$. A predicate can be any 
expression $\expr$, which can refer to $\rbind$. 
Types also include dependent function types 
$\tfun{x}{\typ_x}{\typ}$, where $x$ is bound to the function argument 
and can appear in the result type $\typ$.
As usual, we write $\btyp$ as a shortcut for $\tref{\rbind}{\btyp}{\etrue}$ and $\typ_x \rightarrow \typ$ as a shortcut for $\tfun{x}{\typ_x}{\typ}$
when $x$ does not appear in $\typ$.

\paragraph{Denotations}
Following~\citet{Knowles10},
each type of \reflang denotes a set of expressions. 
The denotation of a base refinement includes all 
expressions that either diverge or evaluate to base values that satisfy the associated predicate.
We write $\totrue{e}$ to represent that $e$ is (operationally) valid
and $\term{e}$ to represent that $e$ terminates:
\[ 
\begin{array}{ccc}
\totrue{e} \defeq \evals{e}{\etrue} & \quad &
\term{e} \defeq \exists v. \evals{e}{v} 
\end{array}
\]
where \evals{\cdot}{\cdot} is the reflexive, transitive closure of the small-step reduction relation. Denotations are naturally extended to function types and environments (as sets of substitutions).
\[ \arraycolsep=0.5pt %
\begin{array}{rcl}
\embed{\tref{\rbind}{\btyp}{\pred}} &\defeq& \{e \spmid \vdash e :\btyp,\text{if}\  \term{e}\ \text{then}\ \totrue{\pred\sub{\rbind}{e}} \}\\
\embed{\tfun{x}{\tparam_x}{\tparam}} &\defeq& \{\eparam \spmid \forall \eparam_x\in\embed{\tparam_x}. \eapp{\eparam}{\eparam_x}\in\embed{\tparam\sub{x}{\eparam_x}} \} \\
\embed{\envparam} &\defeq& \{\esubst \spmid \forall \bind{x}{\tparam}\in\envparam. \ebind{x}{\eparam} \in \esubst \land \eparam \in \embed{\applysub{\tparam}{\esubst}} \}
\end{array}
\]

\paragraph{Static semantics}
Figure~\ref{fig:rules} summarizes the standard typing
rules that characterize whether an expression belongs to the denotation of a type~\citep{LiquidPLDI08,Knowles10}. 
Namely, $\expr\in\embed{\typ}$ \textit{iff} $\hastype{}{}{\expr}{\typ}$.
We define three kinds of relations 
1.~typing, 
2.~subtyping, and 
3.~well-formedness.

\begin{enumerate}[leftmargin=*]
\item\label{rules:typechecking}
\textit{Typing:}
\hastype{\rparam}{\env}{\expr}{\typ}
\textit{iff} 
$\forall \esubst \in \embed{\env}. 
\applysub{\expr}{\esubst} \in \embed{\applysub{\typ}{\esubst}}
$.\\
Rule \ruletvarbase refines the type of a variable with its exact value.
Rule \ruletconst types a constant $c$ using the function $\tc{c}$ 
that is assumed to be sound, \ie we assume that for each constant 
$c$, $c\in\embed{\tc{c}}$. 
Rule \ruletsub allows to weaken the type of a given expression by subtyping, discussed below.
Rule \ruletif achieves path sensitivity by typing 
each branch under an environment strengthened with the value of the condition.
Finally, the two let binding rules \ruletlet and \ruletspec only differ in whether the type of the bound variable is inferred or taken from the syntax. Note that the last premise, a well-formedness condition, ensures that the bound variable does not escape (at the type level) the scope of the let form.

\item\label{rules:subtyping}
\textit{Subtyping:}
\issubtype{}{\env}{\typ_1}{\typ_2}
\textit{iff} 
$\forall\esubst\!\in\!\embed{\env},\expr\!\in\!\embed{\applysub{\typ_1}{\esubst}}$.
$\expr\!\in\!\embed{\applysub{\typ_2}{\esubst}}
$.\\
Rule \rulesbase uses the relation \valid{\cdot}
to check subtyping on basic types; we leave this relation abstract for now since we will refine it in the course of this section.
\citet{Knowles10}
define subtyping between base refinements as:
$$
\begin{array}{c}
\valid{\issubtype{\rparam}{\envparam}{\tref{\rbind}{\btyp}{\pred_1}}{\tref{\rbind}{\btyp}{\pred_2}}}\qquad
\textit{iff}\qquad
    \forall \esubst\in\embed{\envparam, \bind{\rbind}{\btyp}}. 
    \text{if}\ \totrue{\applysub{\pred_1}{\esubst}} \ \text{then}\ \totrue{\applysub{\pred_2}{\esubst}}
\end{array}
$$
This definition makes checking undecidable, as it quantifies over all substitutions. We come back to decidability below.

\item\label{rules:wellformedness}
\textit{Well-Formedness:}
Rule \rulewbase overloads %
\valid{\cdot}
to refer to well-formedness on base refinements. 
A base refinement 
$\tref{\rbind}{\btyp}{\pred}$ is well-formed only when 
$\pred$ is typed as a boolean:

$$
\begin{array}{c}
\valid{\iswellformed{}{\env}{\tref{\rbind}{\btyp}{\pred}}}
\qquad\textit{iff}\qquad
\hastype{}{\env,\bind{\rbind}{\btyp}}{\pred}{\tbool}
\end{array}
$$
\end{enumerate}

\paragraph{Inference}
In addition to being undecidable, the typing rules in Figure~\ref{fig:rules} are not syntax directed: several types do not come from the syntax of the program, but have to be guessed---they are colored in blue in Figure~\ref{fig:rules}. These are: the argument type of a function (Rule \ruletfun), the common (least upper bound) type of the branches of a conditional (Rule \ruletif), and the resulting type of let expressions (Rules \ruletlet and \ruletspec), which needs to be weakened to not refer to variable $x$ in order to be well-formed. 
Thus, to turn the typing relation into a type checking algorithm, one needs to address both decidability of subtyping judgments and inference of the aforementioned types.

\subsection{Liquid Types}\label{subsec:liquidtypes}

Liquid types~\citep{LiquidPLDI08} provide a decidable and efficient inference algorithm for the typing relation of Figure~\ref{fig:rules}. 
For decidability, the key idea of liquid types is to restrict refinement predicates to be drawn
from a \textit{finite} set of {\em predefined}, SMT-decidable predicates 
$\qual\in\iquals$.

\paragraph{Syntax}
The syntax of {\em liquid predicates}, written \lpred, is:
$$
\begin{array}{rcll}
  \lpred
  & ::=     & \etrue                        & \text{True}\\
  &\spmid  & \qual                         & \text{Predicate, with}\ \qual\in\iquals \\
  &\spmid  & \lpred \land \lpred           & \text{Conjunction}\\
  & \spmid & \lvar                         & \text{Liquid Variable}\\
[0.03in]

  \ltsol
  & ::=    & \cdot \spmid \ltsol, \kvar \mapsto \overline{q}  & Solution
\end{array}
$$
A liquid predicate can be 
true (\etrue), 
an element from the predefined set of predicates (\qual), 
a conjunction of predicates ($\lpred \land \lpred$), or 
a predicate variable (\lvar), called a {\em liquid variable}.
A {\em solution} \ltsol is a mapping from 
liquid variables to a set of elements of \iquals.
The set $\overline{\qual}$ represents a variable-free liquid predicate 
using \etrue for the empty set and conjunction to combine the elements otherwise.

\paragraph{Checking}
When all the predicates in \iquals belong to SMT-decidable theories,
validity checking of \reflang:
$\valid{\issubtype{}{\env}{\tref{\rbind}{\btyp}{\pred_1}}{\tref{\rbind}{\btyp}{\pred_2}}}
$
which quantifies over all embeddings of the typing environment, can be SMT automated in a sound and complete way. 
Concretely, a subtyping judgment  
\issubtype{}{\env}{\tref{\rbind}{\btyp}{\lpred_1}}{\tref{\rbind}{\btyp}{\lpred_2}}
is valid \textit{iff}
under all the assumptions of $\env$, 
the predicate $\lpred_1$ implies the predicate $\lpred_2$.
$$
\begin{array}{c}
\valid{\issubtype{}{\env}{\tref{\rbind}{\btyp}{\lpred_1}}{\tref{\rbind}{\btyp}{\lpred_2}}}\\
\textit{iff}\\
    \smtvalid{
    \bigwedge \{\lpred \mid \bind{x}{\tref{\rbind}{\btyp}{\lpred}} \in \env \}
    \Rightarrow \lpred_1
    \Rightarrow \lpred_2
    }
\end{array}
$$

\paragraph{Inference}
The liquid inference algorithm, defined in Figure~\ref{fig:infer}, first 
applies the rules of Figure~\ref{fig:rules} 
using liquid variables as the refinements
of the types that need to be inferred 
and then uses an iterative algorithm to solve 
the liquid variable as a subset of \iquals~\citep{LiquidPLDI08} 
(steps 1 and 2 of~\S~\ref{subsec:overview:liquid}). 

More precisely, given a typing environment $\ltenv$, 
an expression $\ltexpr$, and the fixed set of predicates 
$\iquals$, the function $\infer{\ltenv}{\ltexpr}{\iquals}$
returns the type of the expression $\ltexpr$ under the environment $\ltenv$, 
if it exists, or nothing otherwise.
It first generates a template type $\lttyp$ and a set of constraints @C@
that contain liquid variables in the types to be inferred. 
Then it generates a solution $\ltsol$ that satisfies all the constraints in @C@.
Finally, it returns the type $\lttyp$
in which all the liquid variables have been substituted by concrete refinements from the mapping in $\ltsol$.

\begin{figure}[t]
\begin{mcode}
  Infer :: Env -> Expr -> Quals -> Maybe Type 
  Infer $\ltenv$ $\ltexpr$ iquals = $\maybeapplysub{\lttyp}{\ltsol}$
    where 
      $\ltsol$ = Solve C $\ltsol_0$
      ($\lttyp$, C) = Cons $\ltenv$ $\ltexpr$ 
\end{mcode}
\begin{mcode}
  Cons  :: Env -> Expr -> (Maybe Type, [Cons])
  Solve :: [Cons] -> Sol -> Maybe Sol  
\end{mcode}
\vspace{-2mm}
\caption{\textbf{Liquid Inference Algorithm} (\texttt{Cons} and \texttt{Solve} are defined in \citep{appendix}).}
\vspace{-4mm}
\label{fig:infer}
\end{figure}
The function \cons{\ltenv}{\ltexpr}
uses the typing rules in Figure~\ref{fig:rules}
to generate the template type @Just@ $\lttyp$ of the expression $\ltexpr$, 
\ie a type that potentially contains liquid variables, 
and the basic constraints that appear in the leaves of the derivation tree
of the judgment \hastype{}{\ltenv}{\ltexpr}{\lttyp}.
If the derivation rules fail, then $\cons{\ltenv}{\ltexpr}$ 
returns @Nothing@ and an empty constraint list. 
The function \solve{C}{\ltsol} uses the decidable validity checking to iteratively 
pick a constraint in $c\in C$ that is not satisfied, 
while such a constraint exists, 
and weakens the solution $\ltsol$ so that $c$ is satisfied. 
The function $\maybeapplysub{\lttyp}{\ltsol}$
applies the solution \ltsol to the type \lttyp, if both 
contain @Just@ values, otherwise returns @Nothing@.
Here and in the following, we pose: 
@ ltsolzero = $\lambda \kvar$.iquals@.

The algorithm \infer{\ltenv}{\ltexpr}{\iquals} is terminating 
and sound and complete with respect to the typing relation 
\hastype{}{\ltenv}{\ltexpr}{\lttyp} as long as all the predicates 
are conjunctions of predicates drawn from \iquals.

\subsection{Gradual Refinement Types}\label{subsec:gradualtypes}
Gradual refinement types~\citep{Lehmann17} extend the refinements of \reflang 
to include imprecise refinements like $x > 0 \wedge \egrad$. While they describe the static and dynamic semantics of gradual refinements, inference is left as an open challenge. 
Our work extends liquid inference to gradual refinements, therefore we hereby summarize their basics.

\paragraph{Syntax}
The syntax of gradual predicates in \graduallang is 
$$
\begin{array}{rcll}
  \gradual{\pred}
  & ::=    & \pred              &\quad \text{Precise Predicate} \\
  &\spmid & \pred \land \egrad &\quad \text{Imprecise Predicates, where } \pred \text{ is local}
\end{array}
$$
A predicate is either {\em precise} or {\em imprecise}. The syntax of an imprecise predicate $\pred \land \egrad$ allows for a {\em static part} $p$. Intuitively, with the predicate $x > 0 \wedge \egrad$, $x$ is statically (and definitely) positive, but the type system can optimistically assume stronger, non-contradictory requirements about $x$. To make this intuition precise and derive the complete static and dynamic semantics of gradual refinements, \citet{Lehmann17} follow the Abstracting Gradual Typing methodology (AGT)~\citep{adt}. Following AGT, a gradual refinement type (resp. predicate) is given meaning by {\em concretization} to the set of static types (resp. predicates) it represents. Defining this concretization requires introducing two important notions.

\paragraph{Specificity}
First, we say that $\pred_1$ is {\em more specific} than $\pred_2$, 
written $\specific{\pred_1}{\pred_2}$,
\textit{iff} $\pred_2$ is true when $\pred_1$ is:
$$
\specific{\pred_1}{\pred_2}
\defeq \forall \esubst . 
\ \text{if}\ \totrue{\applysub{\pred_1}{\esubst}} 
     \ \text{then}\ \totrue{\applysub{\pred_2}{\esubst}}
$$

\paragraph{Locality}
Additionally, in order to prevent imprecise formulas from introducing contradictions---which would defeat the purpose of refinement checking---\citet{Lehmann17} identify the need for the static part of an imprecise refinement to be {\em local}. 
Using an explicit syntax \bpred{\pred}{\rbind} 
to explicitly declare the 
variable $\rbind$ refined by the predicate \pred, a refinement is local 
if there exists a value $\vparam$ for which $\pred\sub{\rbind}{\vparam}$ is true; and this, for any (well-typed) substitution that closes the predicate: 
$$
\islocal{\rbind}{\pred}
\defeq 
\forall \esubst, \exists \vparam. \totrue{\applysub{\pred\sub{\rbind}{\vparam}}{\esubst}}
$$

\paragraph{Concretization}
Using specificity and locality, the concretization function \concrete{\cdot} maps 
gradual predicates to the set of the static predicates they represent. 
\[
\begin{array}{rcl}
\concrete{\bpred{\pred}{\rbind}} &\defeq&  \{ \pred \}  \\
\concrete{\bpred{(\pred \land \egrad)}{\rbind}} &\defeq&
  \{ \pred' \mid \pred' \preceq \pred, \islocal{\rbind}{\pred'} 
  \} 
\end{array}
\]
A precise predicate concretizes to itself (singleton), while an imprecise predicate denotes all the local predicates more specific than its static part.
This definition extends naturally to types and environments.
\[
\begin{array}{rcl}
\concrete{\tref{\rbind}{\btyp}{\gradual{\pred}}} &\defeq&  
  \{ \tref{\rbind}{\btyp}{\pred} \mid \pred\in\concrete{\bpred{\gradual{\pred}}{\rbind}} \}  \\
\concrete{\tfun{x}{\gradual{\tparam_x}}{\gradual{\tparam}}} &\defeq&
  \{ \tfun{x}{\tparam_x}{\tparam} \mid \tparam_x \in \concrete{\gradual{\tparam_x}}, 
                                       \tparam   \in \concrete{\gradual{\tparam}} \} \\
\concrete{\gradual{\envparam}}
&\defeq& \{\envparam \mid 
\bind{x}{\tparam}\in\envparam\ \textit{iff}\
\bind{x}{\gradual{\tparam}}\in\gradual{\envparam},
\gradualinstance{\tparam} \}
\end{array}
\]

The denotations of gradual refinement types are similar to those from \S~\ref{subsec:refinementtypes}. The denotation of a base {\em imprecise} gradual refinement $\tref{\rbind}{\btyp}{\pred \land \egrad}$ includes all (gradually-typed) expressions that satisfy at least $\pred$.

\paragraph{Type Checking}
Figure~\ref{fig:rules} is used ``as is'' to 
type gradual expressions $\hastype{}{\genv}{\gexpr}{\gtyp}$, save for the fact that the validity predicate must be lifted to operate on gradual types. \gradualvalid{\cdot} holds if there exists a justification, by concretization, that the static judgment holds. Precisely:
$$\arraycolsep=0.5pt 
\begin{array}{rcl}
\gradualvalid{\issubtype{}{\genv}{\gtyp_1}{\gtyp_2}}
&\defeq&
\exists
\gradualinstance{\env}, 
\gradualinstance{\typ_1},
\gradualinstance{\typ_2}. 
\valid{\issubtype{}{\env}{\typ_1}{\typ_2}}\\
\gradualvalid{\iswellformed{}{\genv}{\gtyp}}
&\defeq&
\exists
\gradualinstance{\env}, 
\gradualinstance{\typ}.
\valid{\iswellformed{}{\env}{\typ}}
\end{array}
$$

\section{Gradual Liquid Types}
\label{sec:gradual-liquid-types}
\label{sec:theory}

We now formalize the combination of liquid type inference and gradual refinements
to later use gradual liquid types for both error explanation and program migration.
We extend the work of \citet{Lehmann17} by adapting the liquid type inference algorithm to the gradual setting. To do so, we apply the abstract interpretation approach of AGT~\citep{adt} to lift the @Infer@ function (defined in~\S~\ref{subsec:liquidtypes}) so that it operates on gradual liquid types. 

Below is the syntax of predicates in \gradualliquidlang, a gradual liquid core language whose predicates are 
gradual predicates 
where the static part of an imprecise predicate is a liquid predicate,
with the additional requirement that it is {\em local} (def. in \S~\ref{subsec:gradualtypes}).
$$\arraycolsep=0.5pt 
\begin{array}{lcll}
  \glpred
   & ::=    & \lpred              &\quad \text{Precise Liquid Predicate} \\
  &\spmid & \lpred \land \egrad &\quad \text{Imprecise Liquid Predicate, where } \lpred \text{ is local}
\end{array}
$$
The elements of \gradualliquidlang
are both gradual and liquid; \ie~expressions \gliquid{\expr}
could also be written as $\gradual{\liquid{\expr}}$. 
Also, we write $\egrad$ as a shortcut for the imprecise predicate 
@true $\wedge~\egrad$@.

Our goal is to define \ginfer{\glenv}{\glexpr}{\iquals} so that it returns a type $\gltyp$ such that $\hastype{}{\glenv}{\glexpr}{\gltyp}$. After deriving  
\ginfername using AGT (\S~\ref{subsec:adt}), 
we provide an algorithmic characterization of \ginfername (\S~\ref{subsec:algorithmic}), which serves as the basis for our implementation. We present the properties that \ginfername satisfies in \S~\ref{subsec:metatheory}. 

\subsection{Lifting Liquid Inference}\label{subsec:adt}
We define the function $\ginfername$ using the abstracting gradual typing 
methodology~\citep{adt}.
In general, AGT defines the consistent lifting of a function {\tt f} as:
$\ttgradual{f}\ \gradual{t} = \alpha(\{\texttt{f}\ t \spmid t \in\concrete{\gradual{t}}\})$, 
where $\alpha$ is the sound and optimal abstraction function that, together with $\gamma$, forms a Galois connection.

The question is how to apply this general approach to the liquid type inference algorithm. We answer this question via trial-and-error.

\paragraph{Try 1. Lifting {\tt Infer}} 
Assume we lift @Infer@ in a similar manner, \ie~we pose
$$
\ginfer{\glenv}{\glexpr}{\iquals} = \alpha(\{\infer{\env}{\expr}{\iquals} \spmid 
\lgradualinstance{\env}, 
\lgradualinstance{\expr}\})
$$
This definition of $\ginfername$ is too strict: 
it rejects expressions that should be accepted. 
Consider for instance the following expression $\glexpr$
that defines a function @f@ with an imprecisely-refined argument:
\begin{code}
  // onlyPos :: {v:Int | 0 < v} -> Int
  // check :: Int -> Bool
  let f :: x:{Int | ?} -> Int 
      f x = if check x then onlyPos x else onlyPos (-x)
  in f 42
\end{code}
There is no single static expression $\lgradualinstance{\expr}$
such that the definition of @f@ above type checks. 
For any $\iquals$ we will get 
$\infer{\{\}}{\expr}{\iquals} = \texttt{Nothing}$ which denotes a type inference failure.  
This behavior occurs because by the definition of $\ginfername$
the gradual argument of @f@ needs  to be concretized before calling $\infername$,
which breaks the flexibility programmers expect from gradual refinements (this example is based on the motivation example of \citet{Lehmann17}). One expects that 
$\ginfer{\{\}}{\glexpr}{\iquals}$ should simply return @Int@.
A similar scenario appears in \citet{adt}, where lifting the typing relation as a whole
would be too imprecise and instead the 
lifted typing relation is defined by lifting the type functions and predicates 
used to define typing.
Here, as described in \S~\ref{subsec:liquidtypes}, @Infer@ calls the functions @Cons@ and @Solve@, 
which in turn calls the function @isValid@. Which of these functions should we lift?
Since @Cons@ is merely an algorithmic definition of the typing rules, 
it is not affected by the gradualization of the system, thus does not require lifting.
On the contrary, @Solve@ is calling @isValid@ that operates on gradual refinements. 
To get a precise inference system we first attempted to lift @isValid@.

\paragraph{Try 2. Lifting {\tt isValid}}
To our surprise, using gradual validity checking 
(the lifting of @isValid@, \S~\ref{subsec:gradualtypes}) leads to an unsound inference algorithm!
This is because, soundness of static inference
implicitly relies on the property of validity checking
that if two refinements $\pred_1$ and $\pred_2$
are right-hand-side valid, then so is their conjunction, \ie:
$$
\begin{array}{c}
\valid{\issubtype{}{\env}{\tref{v}{\btyp}{\pred}}{\tref{v}{\btyp}{\pred_1}}}\qquad
\text{and}\qquad\valid{\issubtype{}{\env}{\tref{v}{\btyp}{\pred}}{\tref{v}{\btyp}{\pred_2}}}\\
\Rightarrow\\
\valid{\issubtype{}{\env}{\tref{v}{\btyp}{\pred}}{\pred_1 \land \pred_2}}
\end{array}
$$
But this property does not hold for gradual validity checking, 
because for any logical predicate $q$, it is true that 
$(q \Rightarrow p_1 \land q \Rightarrow p_1)$
implies $(q \Rightarrow p_1 \land p_2) $, but 
$(\exists q. (q \Rightarrow p_1)) \land (\exists q. (q \Rightarrow p_1))$ does not imply that $\exists q. (q \Rightarrow p_1 \land p_2)$.

\paragraph{Try 3. Lifting {\tt Solve}}
Let us try to lift @Solve@: 
\begin{mcode}
  $\gsolve{\ltsol}{\gliquid{C}} = \{\solve{\ltsol}{C} \spmid \lgradualinstance{C}\}$
\end{mcode}
where $\gliquid{C}$ denotes a gradual constraint (from Figure~\ref{fig:syntax}).
This approach is successful and leads to a provably sound and complete inference algorithm (\S~\ref{subsec:metatheory}). 

Note that in the definition of \gsolvename, we do not appeal to abstraction. This is because we can directly define \ginfername to consider all produced solutions instead.
\begin{mcode}
  $\ginfername$ :: $\glenvtyp$ -> $\glexprtyp$ -> Quals -> Set $\texttt{Type}$ 
  $\ginfer{\glenv}{\glexpr}{\iquals}$ = {$\gltyp'$ | Just $\gltyp'$ <- $\maybeapplysub{\gltyp}{\ltsol}$, $\ltsol \in \gsolve{\ltsol_0}{\gliquid{C}}$}
    where ($\gltyp$, $\gliquid{C}$) = Cons $\glenv$ $\glexpr$ 
\end{mcode}
First, function @Cons@
derives the typing constraints $\gliquid{C}$, and if successful, 
the template type $\gltyp$ (step 1 of~\S~\ref{subsec:overview:glt}).\footnote{Note that \texttt{Cons} is unchanged from the static system, because it only depends on the structure of the types, and not on the refinements themselves.}
Next, we use the lifted \gsolvename 
to concretize and solve all the derived constraints 
(steps 2 and 3 of~\S~\ref{subsec:overview:glt}, \resp). 
By keeping track of the concretizations 
that return non-@Nothing@ solutions, we derive the
safe concretizations of~\S~\ref{sec:overview}. 
Finally, we apply each solution
to the template gradual type, yielding a set of inferred types.
We do not explicitly abstract the set of inferred types back to a single gradual type. Adding abstraction by exploiting the abstraction function defined by \citet{Lehmann17} is left for future work. 

The core of the gradual inference algorithm is a combination 
of the liquid and gradual refinement type systems. 
Yet, as we exposed by the failing attempt to lift @isValid@, 
this combination was not trivial, since 
blind application of the AGT~\citep{adt} methodology could lead to unsound inference. 
In~\S~\ref{subsec:metatheory}, we prove that our algorithm is sound 
and in~\S~\ref{sec:implementation} we discuss an optimized implementation 
and its applications.
The applications discussed in \S~\ref{sec:error-explanation} and \S~\ref{sec:migration} make explicit use of the inferred set in order to assist users in understanding errors and migrating programs.\footnote{In standard gradual typing, the set of static types denoted by a gradual type can be infinite, hence abstraction is definitely required. In contrast, here the structure of types is fixed, and the set of possible liquid refinements, even if potentially large, is finite. We can therefore do without abstraction. We discuss implementation considerations in \S~\ref{sec:error-explanation}.}

\subsection{When Does Gradual Liquid Inference Fail?}

Given the flexibility induced by gradual refinements, at this point the reader might wonder when an actual type inference failure can occur.
First, as we will formally prove later in this section, gradual liquid inference is a conservative extension of liquid inference, and therefore, in the absence of imprecise refinements, gradual liquid inference fails exactly when standard liquid inference fails.

More crucially, in presence of imprecise refinements, gradual inference only succeeds when there exist possible justifications for each individual occurrences of gradually-refined variables. Said otherwise, gradual liquid inference fails when there exists at least one occurrence of a gradually-refined variable for which there does not exist any valid concretization.

Consider the example below:
\begin{mcode}
  // onlyPos :: {v:Int | 0 < v} -> Int
  f :: x:{Int | x <= 0 $\land$ ?} -> Int 
  f x = onlyPos x 
\end{mcode}
Each valid concretization of the imprecise refinement
@x <= 0 $\land$ ?@ should imply the precondition of @onlyPos@, namely 
@0 < v@. This however contradicts the static part of the imprecise refinement, @x <= 0@. 
Therefore, gradual liquid inference fails for this program.

\subsection{Algorithmic Concretization}\label{subsec:algorithmic}
To make \ginfername algorithmic, we need to define an algorithmic concretization function 
of a set of constraints, $\concrete{\gliquid{C}}$. 
We do so, by using the the finite domain of predicates $\iquals$.

Recall the concretization of gradual 
predicates: %
$
\concrete{\bpred{(\pred \land \egrad)}{\rbind}}$ $\defeq
  \{ \pred' \mid \pred' \preceq \pred, \islocal{\rbind}{\pred'} 
  \}  
$.
In general, this function cannot be algorithmically 
computed, since it ranges over the infinite domain of predicates. 
In gradual liquid refinements, 
the domain of predicates is restricted to the powerset of the 
{\em finite} domain $\iquals$. 
We define the algorithmic concretization function
$\lconcrete{\bpred{\glpred}{\rbind}}$ as the intersection of 
the powerset of the finite domain $\iquals$ with the gradual concretization function.  
$$
\lconcrete{\bpred{\glpred}{\rbind}} \defeq
  \powerset{\iquals} \cap \concrete{\bpred{\glpred}{\rbind}}
$$
Concretization of gradual predicates 
reduces to (decidable) locality and specificity checking on 
the elements of $\iquals$.
$$
\lconcrete{\bpred{(\lpred \land \egrad)}{\rbind}} \defeq
  \{ \lpred' \mid \lpred'\in\powerset{\iquals}, \lpred' \preceq \lpred, \islocal{\rbind}{\lpred'} 
  \}  
$$
We naturally extend the algorithmic concretization function 
to typing environments, constraints, and list of constraints.
\[ \arraycolsep=1pt 
\begin{array}{rcl}
\lconcrete{\gliquid{\env}}
&\defeq& \{\lenv \mid 
\bind{x}{\ltyp}\in\lenv \ \textit{iff}\
\bind{x}{\gliquid{\typ}}\in\gliquid{\env},
\algradualinstance{\typ} \}\\
\lconcrete{\issubtype{}{\glenv}{\gltyp_1}{\gltyp_2}}
&\defeq& \{\issubtype{}{\lenv}{\ltyp_1}{\ltyp_2} \mid 
\algradualinstance{\env},
\algradualinstance{\typ_i} \}\\
\lconcrete{\iswellformed{}{\glenv}{\gltyp}}
&\defeq& \{\iswellformed{}{\lenv}{\ltyp} \mid 
\algradualinstance{\env},
\algradualinstance{\typ}, \}\\
\lconcrete{\gliquid{C}}
&\defeq& \{\liquid{C} \mid 
c\in \liquid{C}\ \textit{iff}\
\gliquid{c}\in\gliquid{C},
\algradualinstance{c} \}
\end{array}
\]
We use \lconcrete{\cdot}
to define an algorithmic version of \gsolvename: 
\begin{mcode}
  $\gsolve{\ltsol}{\gliquid{C}} = \{\solve{\ltsol}{\liquid{C}} \spmid \algradualinstance{C}\}$
\end{mcode}
which in turn yields an algorithmic version of \ginfername.

\subsection{Properties of Gradual Liquid Inference}\label{subsec:metatheory}

We prove that the inference algorithm \ginfername satisfies the  
the correctness criteria of~\citet{LiquidPLDI08}, as well as the static 
criteria for gradually-typed languages~\citep{siek15}.\footnote{Because this work focuses on the static semantics, \ie~the inference algorithm, we do not discuss the dynamic part of the gradual guarantee, which has been proven for gradual refinement types~\citep{Lehmann17}.} The corresponding proofs can be found in supplementary material~\citep{appendix}.

\subsubsection{Correctness of Inference}
The algorithm \ginfername is sound, complete, and terminates.

\begin{theorem}[Correctness]~\label{theorem:correctness}
Let $\iquals$ be a finite set of predicates from an SMT-decidable logic, 
\glenv a gradual liquid environment, and
\glexpr\;a gradual liquid expression. Then 
\begin{itemize}[leftmargin=*]
\item\textbf{Termination}
\ginfer{\glenv}{\glexpr}{\iquals} terminates.
\item\textbf{Soundness} 
If $\gltyp \in \ginfer{\glenv}{\glexpr}{\iquals}$, 
then \hastype{}{\glenv}{\glexpr}{\gltyp}. 
\item\textbf{Completeness} 
If $\ginfer{\glenv}{\glexpr}{\iquals} = \emptyset$, 
then $ \not \exists\gltyp.\
\hastype{}{\glenv}{\glexpr}{\gltyp}$.
\end{itemize}
\end{theorem}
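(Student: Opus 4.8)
The plan is to reduce all three properties to the corresponding properties of the static liquid inference algorithm $\infername$ (terminating, sound, and complete, as recalled at the end of \S~\ref{subsec:liquidtypes}), exploiting two structural facts: that \texttt{Cons} is literally the static constraint generator, so the template type $\gltyp$ and the constraint set $\gliquid{C}$ are exactly those of the static system; and that the only genuinely new ingredient, $\gsolvename$, merely iterates the static \texttt{Solve} over the finite family $\lconcrete{\gliquid{C}}$ of liquid concretizations of the gradual constraints. The bridge between the algorithm and the gradual typing relation is a per-constraint correspondence lemma, which I would prove first: a liquid concretization $\liquid{C}\in\lconcrete{\gliquid{C}}$ together with a liquid-variable solution $\ltsol$ makes every constraint $c\in\liquid{C}$ statically valid \emph{iff} each corresponding leaf of the gradual derivation is gradually valid, witnessed by the per-occurrence concretization recorded in $\liquid{C}$ and the refinements $\ltsol$ assigns to the liquid variables. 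The proof of the lemma is by inspection of the base rules \rulesbase and \rulewbase, where $\valid{\cdot}$ (resp.\ $\gradualvalid{\cdot}$) is discharged, matching the existential in $\gradualvalid{\cdot}$ to the choice made by $\lconcrete{\cdot}$.

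For termination, since $\iquals$ is finite, $\powerset{\iquals}$ is finite, hence $\lconcrete{\gliquid{C}}$ is a finite set, and membership in it is decidable because it is computed by checking specificity ($\preceq$) and locality ($\islocal{\rbind}{\cdot}$) on elements of $\powerset{\iquals}$, both of which reduce to decidable SMT queries (\S~\ref{subsec:algorithmic}). The call $\texttt{Cons}\ \glenv\ \glexpr$ terminates because it is the static generator, and each $\solve{\ltsol_0}{\liquid{C}}$ terminates by the static result. Thus $\ginfername$ enumerates a finite set and performs finitely many terminating computations, so it terminates.

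For soundness, suppose $\gltyp'\in\ginfer{\glenv}{\glexpr}{\iquals}$. Unfolding the definition of $\ginfername$ yields some $\ltsol\in\gsolve{\ltsol_0}{\gliquid{C}}$ with $\maybeapplysub{\gltyp}{\ltsol}=\texttt{Just}\ \gltyp'$, and unfolding $\gsolvename$ yields a concretization $\liquid{C}\in\lconcrete{\gliquid{C}}$ with $\ltsol=\solve{\ltsol_0}{\liquid{C}}\neq\texttt{Nothing}$. Because $\liquid{C}$ is a purely liquid constraint set obtained from the static generator, soundness of \texttt{Solve} gives that $\ltsol$ makes every $c\in\liquid{C}$ statically valid; note that the right-hand-side conjunction property on which static soundness relies is invoked here only on already-concretized liquid constraints, where it does hold, so the failure of ``Try 2'' is avoided. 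Applying the correspondence lemma in the sound direction, each leaf of the gradual derivation is gradually valid, and reassembling the derivation with the liquid variables resolved by $\ltsol$ yields $\hastype{}{\glenv}{\glexpr}{\gltyp'}$.

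For completeness I argue the contrapositive: assuming $\hastype{}{\glenv}{\glexpr}{\gltyp}$ for some $\gltyp$, I must exhibit a nonempty result. The gradual derivation certifies that each base leaf is gradually valid; by the correspondence lemma, each gradual leaf admits a per-occurrence witness drawn from $\powerset{\iquals}$, and assembling these occurrence-wise witnesses produces a single $\liquid{C}\in\lconcrete{\gliquid{C}}$ for which the static (liquid) typing holds. Completeness of $\infername$ then guarantees that $\solve{\ltsol_0}{\liquid{C}}$ returns a solution rather than \texttt{Nothing}, so that concretization contributes a type to $\ginfer{\glenv}{\glexpr}{\iquals}$, which is therefore nonempty. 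The main obstacle lies exactly in this paragraph: I must show that restricting witnesses to the finite liquid domain $\powerset{\iquals}$ loses no solutions, and---more delicately---that the independent, per-occurrence existential freedom of gradual validity can be reconciled with the single global solution $\ltsol$ that \texttt{Solve} must find over all liquid variables simultaneously. The resolution is precisely the design choice of lifting \texttt{Solve} rather than \texttt{isValid}: concretization is performed once per occurrence at the constraint level, producing a fully static $\liquid{C}$ before any solving, so that \texttt{Solve} only ever reasons about ordinary liquid constraints; this is why the conjunction property that broke ``Try 2'' is never needed at the gradual level.
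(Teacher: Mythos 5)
Your proposal is correct and follows essentially the same route as the paper's proof: your ``per-constraint correspondence lemma'' is precisely the paper's Gradual Validity lemma ($\gradualvalid{\applysub{\gliquid{c}}{\ltsol}}$ \textit{iff} $\exists \algradualinstance{c}.\ \valid{\applysub{\liquid{c}}{\ltsol}}$), and the three bullets are then discharged exactly as the paper does---termination from finiteness of $\lconcrete{\gliquid{C}}$ together with termination of the static \texttt{Solve}, soundness by unfolding $\ginfername$ and chaining the ported \texttt{Cons}/\texttt{Solve} correctness results through that lemma, and completeness by contraposition where the paper phrases the same argument as a contradiction. One clarification: the ``main obstacle'' you flag---that restricting witnesses to $\powerset{\iquals}$ loses no solutions---is not what the lifting-\texttt{Solve} design choice resolves (that choice addresses only your second concern, reconciling per-occurrence existentials with a single global solution); in the paper this point is absorbed into the Gradual Validity lemma itself, which holds because the theorem concerns gradual \emph{liquid} expressions, whose imprecise predicates concretize, by definition of $\lconcrete{\cdot}$, inside the finite domain $\powerset{\iquals}$, so the typing judgment being matched never appeals to witnesses outside that domain.
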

The proof of termination is straightforward from the careful definition of the 
$\ginfername$ algorithm. Soundness and completeness rely on 
the property that a constraint is gradually valid \textit{iff} there 
exists a concrete solution that renders it valid (\ie
\gradualvalid{\applysub{\gliquid{c}}{\ltsol}}
\textit{iff}
$\exists \algradualinstance{c}. \valid{\applysub{\gliquid{c}}{\ltsol}}$). 
We use this property to prove soundness by the definition of the algorithm 
and completeness by contradiction.
We note that, 
unlike the \infername algorithm that provably returns the strongest possible solution, 
it is not clear how to relate the set of solutions 
returned by \ginfer{\glenv}{\glexpr}{\iquals} with
the rest of the types that satisfy \hastype{}{\glenv}{\glexpr}{\gltyp}.

\subsubsection{Gradual Typing Criteria} 

\citet{siek15} list three criteria for the static semantics of a gradual language, which the \ginfername algorithm satisfies. 
These criteria require the gradual type system 
\emph{(i)} is a conservative extension of the static type system, 
\emph{(ii)} is flexible enough to accommodate the dynamic end of the typing spectrum (in our case, unrefined types), and \emph{(iii)} supports a smooth connection between both ends of the spectrum.

\paragraph{(i) Conservative Extension}
The gradual inference algorithm \ginfername
coincides with the static algorithm \infername on terms that only rely on precise predicates.
More specifically, 
if \infername infers a static type for a term, 
then \ginfername returns only that type, for the same term. 
Conversely, if a term is not typeable with \infername, it is also not typeable with \ginfername.

\begin{theorem}[Conservative Extension]
If $\infer{\lenv}{\lexpr}{\iquals} = \texttt{Just}\ \ltyp$, 
then $\ginfer{\lenv}{\lexpr}{\iquals} = \{\ltyp\}$. 
Otherwise, 
$\ginfer{\lenv}{\lexpr}{\iquals} = \emptyset$. 
\end{theorem}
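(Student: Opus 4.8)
The plan is to unfold both algorithms on a precise (gradual-free) input and observe that the entire gradual apparatus degenerates: when no $\egrad$ occurs, concretization is the identity, so the set-valued $\gsolvename$ collapses to the single $\texttt{Solve}$ call that $\infername$ performs. I would then match the two outputs through a case split on whether $\infername$ succeeds or fails, checking that the set-comprehension defining $\ginfername$ faithfully tracks the $\texttt{Maybe}$ returned by $\infername$.

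First I would record that constraint generation is shared. Since $\lenv$ and $\lexpr$ carry no imprecise refinements, and since (by the footnote on $\texttt{Cons}$) constraint generation depends only on the structure of types and not on their refinements, $\cons{\lenv}{\lexpr}$ returns the very same pair $(\lttyp, C)$ in both systems. Moreover every refinement occurring in $\lenv$ and in $C$ is precise: liquid variables and program-derived predicates such as $v = x$ are liquid predicates $\lpred$, hence precise, as none contains $\egrad$.

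The crux is a concretization lemma: for a gradual-free constraint set $C$, $\lconcrete{\gliquid{C}} = \{C\}$. I would prove this by structural induction, reducing it to the base case for predicates. On a precise base refinement $\tref{\rbind}{\btyp}{\lpred}$ the gradual concretization $\concrete{\bpred{\lpred}{\rbind}} = \{\lpred\}$ is a singleton by definition, and its algorithmic restriction $\lconcrete{\bpred{\lpred}{\rbind}}$ is the same singleton---the $\powerset{\iquals}$ intersection filters only the image of $\egrad$, which is absent here. Lifting this through the definitions of $\lconcrete{\cdot}$ on environments, on types (including function types), on subtyping and well-formedness constraints, and on lists of constraints yields $\lconcrete{\gliquid{C}} = \{C\}$. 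Consequently $\gsolve{\ltsol_0}{\gliquid{C}} = \{\solve{\ltsol_0}{C}\}$ is a singleton whose single element is exactly the solution $\infername$ computes from $C$ and $\ltsol_0$.

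It then remains to match outputs by unfolding both algorithms on the shared $(\lttyp, C)$. In the success case $\infer{\lenv}{\lexpr}{\iquals} = \maybeapplysub{\lttyp}{\ltsol} = \texttt{Just}\ \ltyp$, where $\ltsol = \solve{\ltsol_0}{C}$ is a non-$\texttt{Nothing}$ solution; since $\gsolve{\ltsol_0}{\gliquid{C}} = \{\ltsol\}$ is the same singleton, the comprehension defining $\ginfername$ ranges over exactly this one solution and, applying it to $\lttyp$, produces the single type $\ltyp$, so $\ginfer{\lenv}{\lexpr}{\iquals} = \{\ltyp\}$. In the complementary case $\maybeapplysub{\lttyp}{\ltsol} = \texttt{Nothing}$, which arises either because $\texttt{Cons}$ failed ($\lttyp = \texttt{Nothing}$) or because the single $\texttt{Solve}$ call failed ($\ltsol = \texttt{Nothing}$); in both situations the sole element of $\gsolve{\ltsol_0}{\gliquid{C}}$ likewise makes the applied substitution $\texttt{Nothing}$, the $\texttt{Just}$ pattern in the comprehension never matches, and hence $\ginfer{\lenv}{\lexpr}{\iquals} = \emptyset$. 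I expect the only genuinely delicate step to be the concretization lemma---specifically, arguing that the $\powerset{\iquals}$ intersection does not discard precise refinements (liquid variables, or equalities like $v = x$) that themselves need not lie in $\iquals$; once that is pinned down, the remainder is definitional bookkeeping.
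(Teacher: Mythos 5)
Your proof follows essentially the same route as the paper's: both unfold \texttt{Infer} and $\ginfername$ on the shared output of \texttt{Cons}, reduce the claim to the fact that a gradual-free constraint set concretizes to the singleton $\{\lC\}$, and then case-split on whether constraint generation and \texttt{Solve} succeed or fail. The paper simply asserts $\lconcrete{\lC} = \{\lC\}$ when no $\egrad$ occurs, so your extra care in justifying that the $\powerset{\iquals}$ intersection filters only instantiations of $\egrad$ and does not discard precise refinements (liquid variables, equalities like $v = x$) is a worthwhile elaboration of a step the paper leaves implicit, not a departure from its argument.
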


The proof follows by the definition of $\ginfer$ and the concretization function.

\paragraph{(ii) Embedding of imprecise terms}

We then prove that given a well-typed unrefined term (\ie~simply-typed), refining all base types with the unknown predicate\egrad yields a well-typed gradual term. This property captures the fact that it is possible to ``import'' a simply-typed term into the gradual liquid setting. (In contrast, this is not possible without gradual refinements: just putting \texttt{true} refinements to all base types does not yield a well-typed program.)

To state this theorem, we use $t_s$ to denote simple types ($\btyp$ and $t_s \rightarrow t_s$) and similarly $e_s$ and $\Gamma_s$ for terms and environments. The simply-typed judgment is the standard one.

The $\dyn\cdot$ function turns simple types into gradual liquid  types by introducing the unknown predicate on every base type (and naturally extended to environments and terms):
\begin{align*}
\dyn{\btyp} &= \tref{v}{\btyp}{\egrad} &
\dyn{\typ_1 \rightarrow \typ_2} &= \tfun{x}{\dyn{\typ_1}}{\dyn{\typ_2}}
\end{align*}
\begin{theorem}[Embedding of Unrefined Terms]\label{thm:embedding}
If $\hastype{}{\env_s}{\expr_s}{\typ_s}$, 
then $\ginfer{\dyn{\env_s}}{\dyn{\expr_s}}{\iquals}\not = \emptyset$.
\end{theorem}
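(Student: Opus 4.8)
The plan is to bypass the internals of the constraint solver and instead reduce the statement to the \emph{Completeness} clause of Theorem~\ref{theorem:correctness}. By contraposition, completeness yields: if there exists \emph{some} gradual liquid type $\gltyp$ with $\hastype{}{\dyn{\env_s}}{\dyn{\expr_s}}{\gltyp}$, then $\ginfer{\dyn{\env_s}}{\dyn{\expr_s}}{\iquals} \neq \emptyset$. Hence it suffices to exhibit one such typing derivation, and I would take $\gltyp \defeq \dyn{\typ_s}$. The single concretization that witnesses every side condition throughout is the one mapping each occurrence of $\egrad$ to $\etrue$ (the empty subset $\emptyset \in \powerset{\iquals}$, which denotes the empty conjunction). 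This is always a legal concretization, independently of $\iquals$ (even if $\iquals = \emptyset$): the static part of $\egrad$ is $\etrue$, we have $\specific{\etrue}{\etrue}$, and $\islocal{\rbind}{\etrue}$ holds trivially, so $\etrue \in \lconcrete{\bpred{\egrad}{\rbind}}$.

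\textbf{Key steps.} I would establish $\hastype{}{\dyn{\env_s}}{\dyn{\expr_s}}{\dyn{\typ_s}}$ by induction on the (standard) simply-typed derivation of $\hastype{}{\env_s}{\expr_s}{\typ_s}$, mirroring each simple-typing rule with its refinement counterpart in Figure~\ref{fig:rules}. Wherever a rule requires an inferred (blue) type---the argument type in \ruletfun, the join of branches in \ruletif, or the result type in \ruletlet and \ruletspec---I would instantiate it with the $\dyn{\cdot}$-image of the corresponding simple type, so that every base refinement that must be \emph{checked} is literally $\egrad$. The remaining work is to discharge the $\gradualvalid{\cdot}$ subtyping and well-formedness obligations. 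For each, I would supply the $\egrad \mapsto \etrue$ concretization above as the required static witness: it turns every gradual refinement into $\etrue$, so each base obligation collapses to one of the form $\issubtype{}{\env}{\tref{v}{\btyp}{\pred}}{\tref{v}{\btyp}{\etrue}}$ or $\iswellformed{}{\env}{\tref{v}{\btyp}{\etrue}}$, both of which are statically valid. With all side conditions discharged, the induction closes, and completeness delivers the theorem.

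\textbf{Main obstacle.} The crux is to confirm that, under the $\egrad \mapsto \etrue$ concretization, every \emph{right-hand} side of a base subtyping obligation is indeed $\etrue$. The delicacy is that genuinely strong refinements do arise in the derivation---the singleton $v = x$ introduced by \ruletvarbase, the branch guards $x$ and $\lnot x$ added by \ruletif, and the precise constant types $\tc{c}$ from \ruletconst---yet I would argue that each of these occurs only on the \emph{left} (as the subtype, or as an environment assumption), never as a checked right-hand refinement, precisely because every checked position originates either from a $\dyn{\cdot}$ annotation or binding (hence carries $\egrad$) or from an inferred position (instantiated at $\egrad$). The one genuinely subtle case is application via \ruletapp\ of a constant whose type $\tc{c}$ carries a non-trivial \emph{domain} refinement: such a precondition sits in a contravariant (right-hand) position and need not be justified by $\etrue$. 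I would discharge this by the standing assumption that the constants of the simply-typed fragment are precondition-free---i.e.\ $\tc{c}$ imposes no refinement beyond what its simple type demands---which is exactly what lets a simply-typed term be imported wholesale; this mirrors the paper's own treatment, where a refined library function such as \texttt{onlyPos} contributes only its \emph{simple} type (with $\egrad$ domains) when it appears in $\env_s$.
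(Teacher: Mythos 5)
Your proposal takes essentially the same route as the paper: the paper likewise proves $\hastype{}{\dyn{\env_s}}{\dyn{\expr_s}}{\dyn{\typ_s}}$ by induction on the simply-typed derivation (factored through two lemmas stating that $\dyn{\cdot}$-types are well-formed and that any locally-refined type and its $\dyn{\cdot}$-image are mutual subtypes, with $\etrue$ serving as the witness concretization exactly as in your argument), and then concludes by completeness of \ginfername. The only divergence is bookkeeping for constants: where you assume constants are precondition-free, the paper defines $\dyn{c}$ to be a constant whose assigned type is $\dyn{\tc{c}}$ (thus erasing preconditions by construction) under the assumption that constants' refinements are local --- the same effect achieved inside the translation rather than as a side hypothesis.
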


Let $\typ_s$
denote the unrefined version of the gradually refined type $\gltyp$. 
We first prove that for any gradual type $\gltyp$ with only local refinement 
both \issubtype{}{\env}{\gltyp}{\dyn{\typ_s}} and 
\issubtype{}{\env}{\dyn{\typ_s}}{\gltyp} hold.
Then, we prove that 
if $\hastype{}{\env_s}{\expr_s}{\typ_s}$, 
then $\hastype{}{\dyn{\env_s}}{\dyn{\expr_s}}{\dyn{\typ_s}}$.
Finally, we use completeness of the inference algorithm to prove theorem~\ref{thm:embedding}.

For the expression $\dyn{\expr_s}$, the inference algorithm 
will generate a set of refinement types, all of which have the structure  
of $\typ_s$. 
The liquid algorithm infers types with strongest postconditions
and the gradual inference algorithm is not generating any fresh @?@,  
so, the inference returns all valid types with shape $\typ_s$
with the dynamic refinements instantiated to all valid local predicates
and for each instantiation, the unknown refinements are solved 
following the principle of strongest postconditions.

\paragraph{(iii) Static Gradual Guarantee}
Finally, the solid foundations for gradualization based on abstract interpretation allow us to effectively satisfy the {\em gradual guarantee}, which we believe was never proved for any gradual inference work. The gradual guarantee stipulates that typeability is monotonic in the precision of type information. In other words, making type annotations less precise cannot introduce new type errors. 

We first define the notion of precision in terms of algorithmic concretization:
\begin{definition}[Precision of Gradual Types]
$\gltyp_1$ is less precise than $\gltyp_2$, written as 
$\gless{\gltyp_1}{\gltyp_2}$, \textit{iff}
$\lconcrete{\gltyp_1} \subseteq  \lconcrete{\gltyp_2}$.
\end{definition}
\noindent
Precision naturally extends to type environments and terms.
\begin{theorem}[Static Gradual Guarantee]\label{theorem:gradual-guarantee}
If 
$\glenv_1\sqsubseteq\glenv_2$ and
$\glexpr_1\sqsubseteq\glexpr_2$,
then 
for every 
$\gltyp_{1i}\in\ginfer{\glenv_1}{\glexpr_1}{\iquals}$
there exists \gless{\gltyp_{1i}}{\gltyp_{2i}} so that 
$\gltyp_{2i}\in \ginfer{\glenv_2}{\gltyp_2}{\iquals}$.
\end{theorem}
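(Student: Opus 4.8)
The plan is to factor $\ginfername$ into its three stages---constraint generation @Cons@, the lifted solver \gsolvename, and the final substitution $\maybeapplysub{\cdot}{\cdot}$---and show each stage is monotone with respect to precision, reading $\sqsubseteq$ throughout as inclusion of algorithmic concretizations. The hypotheses then unfold to $\lconcrete{\glenv_1}\subseteq\lconcrete{\glenv_2}$ and $\lconcrete{\glexpr_1}\subseteq\lconcrete{\glexpr_2}$, the latter being inclusion of the concretizations of the refinements carried by the type annotations inside the terms. Since precision is structural (the two programs share a common skeleton and differ only in annotation refinements), and since @Cons@ depends only on the structure of types, not their refinements (as noted in \S\ref{subsec:adt}), I fix a deterministic fresh-variable discipline so that the two runs of @Cons@ introduce \emph{identical} liquid variables; the two templates and constraint sets then differ only in the gradual refinements threaded in from the annotations and the environment.

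First I would establish a monotonicity lemma for @Cons@: if $\lconcrete{\glenv_1}\subseteq\lconcrete{\glenv_2}$, $\lconcrete{\glexpr_1}\subseteq\lconcrete{\glexpr_2}$, and $\cons{\glenv_i}{\glexpr_i}=(\gltyp_i,\gliquid{C_i})$, then $\lconcrete{\gltyp_1}\subseteq\lconcrete{\gltyp_2}$ and $\lconcrete{\gliquid{C_1}}\subseteq\lconcrete{\gliquid{C_2}}$. The proof is by induction on the constraint-generation derivation over the rules of Figure~\ref{fig:rules}. Every refinement deposited into a template or a constraint is either a freshly introduced liquid variable (identical in both runs, hence contributing equal concretizations) or comes from an annotation/environment binding, whose concretizations already satisfy the required inclusion by hypothesis. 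The rules needing care are \ruletif, where \emph{both} branch environments are strengthened by the same \emph{precise} guard refinements $\tref{v}{\tbool}{x}$ and $\tref{v}{\tbool}{\lnot x}$ (singleton concretizations, trivially preserving inclusion), and \ruletapp and \ruletlet, where a dependent substitution is threaded through. For the latter I would prove a substitution lemma: uniform substitution preserves concretization inclusion, so $\lconcrete{\gltyp}\subseteq\lconcrete{\gltyp'}$ implies both $\lconcrete{\gltyp\sub{x}{y}}\subseteq\lconcrete{\gltyp'\sub{x}{y}}$ (substituting a program variable) and $\lconcrete{\applysub{\gltyp}{\ltsol}}\subseteq\lconcrete{\applysub{\gltyp'}{\ltsol}}$ (substituting liquid variables by a common solution), because the substitution acts identically on the two types and leaves the $\egrad$ markers---and hence the concretization content---in place.

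Next I would exploit the set-theoretic definition of the lifted solver, $\gsolve{\ltsol_0}{\gliquid{C}}=\{\solve{\ltsol_0}{\liquid{C}}\mid \liquid{C}\in\lconcrete{\gliquid{C}}\}$. Since @Solve@ is a deterministic function of the \emph{concrete} constraint set alone, a given $\liquid{C}$ yields the very same (possibly @Nothing@) result regardless of which gradual constraint it was concretized from. Consequently the image of @Solve@ over the smaller set $\lconcrete{\gliquid{C_1}}$ is contained in its image over $\lconcrete{\gliquid{C_2}}$, so $\gsolve{\ltsol_0}{\gliquid{C_1}}\subseteq\gsolve{\ltsol_0}{\gliquid{C_2}}$; in particular, any solution used to produce an element of $\ginfer{\glenv_1}{\glexpr_1}{\iquals}$ is also available when inferring for program~2.

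Finally I would assemble the argument. Fix $\gltyp_{1i}\in\ginfer{\glenv_1}{\glexpr_1}{\iquals}$; by definition it is the @Just@ value of $\maybeapplysub{\gltyp_1}{\ltsol}$ for some $\ltsol\in\gsolve{\ltsol_0}{\gliquid{C_1}}$. The solver step places this same $\ltsol$ in $\gsolve{\ltsol_0}{\gliquid{C_2}}$, and since $\gltyp_1$ and $\gltyp_2$ share structure (so @Cons@ succeeds on both or neither---the failing case making both inferred sets empty and the statement vacuous), $\maybeapplysub{\gltyp_2}{\ltsol}$ is likewise a @Just@ value $\gltyp_{2i}\in\ginfer{\glenv_2}{\glexpr_2}{\iquals}$. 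Applying the liquid-variable instance of the substitution lemma to $\lconcrete{\gltyp_1}\subseteq\lconcrete{\gltyp_2}$ under the common $\ltsol$ yields $\lconcrete{\gltyp_{1i}}\subseteq\lconcrete{\gltyp_{2i}}$, i.e.\ $\gltyp_{1i}\sqsubseteq\gltyp_{2i}$, as required. I expect the @Cons@ monotonicity lemma to be the main obstacle: the induction must be set up so that the fresh liquid variables genuinely coincide across the two derivations and so that precision inclusion is stable under every environment extension and dependent substitution performed by the typing rules; once that is in place, the solver and substitution steps are bookkeeping over the definitions of \gsolvename and $\lconcrete{\cdot}$.
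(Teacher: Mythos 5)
Your proposal is correct and follows essentially the same route as the paper's own proof: unfold $\ginfername$ to extract the witnessing solution $\ltsol$, establish that \texttt{Cons} is monotone under precision (the paper's terse ``since \texttt{Cons} preserves $\egrad$'' step, yielding $\gless{\gltyp_1}{\gltyp_2}$ and $\gless{\glC_1}{\glC_2}$), transport the same $\ltsol$ through the lifted solver via concretization inclusion ($\lC_1\in\lconcrete{\glC_2}$), and conclude by noting that applying the common solution to both templates preserves precision. The only difference is level of detail: you sketch the induction behind \texttt{Cons}-monotonicity (fresh-variable discipline, the \ruletif and substitution cases) and the substitution lemma, both of which the paper simply asserts.
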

\noindent

The intuition of the proof is that
since $\gltyp_{1i}\in\ginfer{\glenv_1}{\glexpr_1}{\iquals}$, 
then the algorithm inferred a solution of the unknown refinements. 
We prove that this solution is also inferred for $\glexpr_1$.

For a given term and type environment, 
the theorem ensures that, for every inferred type, 
the algorithm infers a less precise type when run on a less precise term and environment.

\section{Implementation}
\label{sec:implementation}

We implemented \ginfername as \toolname, 
an extension to \liquidHaskell~\citep{Vazou2014LiquidHaskellEW}
that takes a Haskell program annotated with gradual 
refinement specifications and returns an @.html@ 
interactive file that lets the user explore all safe concretizations.

Concretely, \toolname 
uses the existing API of \liquidHaskell to 
implement the three steps of gradual liquid type checking 
steps described in~\S~\ref{subsec:overview:glt} 
(and formalized in~\S~\ref{sec:theory}):
\textit{1)} First, \toolname calls the \liquidHaskell API 
to generate subtyping constraints 
that contain both liquid variables and imprecise predicates.
\textit{2)} Next, it calls the liquid API to collect all the refinement templates. The templates are used to map each occurrence of imprecise predicates in the constraints to 
a set of concretizations. These concretizations are combined 
to generate the possible concretizations of the constraints. 
\textit{3)} Finally, using \liquidHaskell's constraint solving it decides 
the validity of each concretized constraint, while all the safe concretizations \scshorts are interactively presented to the user.  

The implementation of \toolname closely follows the theory of \S~\ref{sec:theory}, 
apart from a syntactic detail (we use @??@ instead of @?@ to denote the unknown part of a
refinement), 
and several practical adjustments that we discuss here and 
are crucial for real-world applicability.

\paragraph{Templates}
To generate the refinement templates we use 
\liquidHaskell's existing API.
The generated templates consist of 
a predefined set of predicates for 
linear arithmetic ($v\ [< | \leq | > | \geq | =  | \not =]\ x$), 
comparison with zero ($ v\ [< | \leq | > | \geq | =  | \not =]\ 0$) 
and length operations ($\text{len}\ v [\geq | > ] 0$, $\text{len}\ v = \text{len}\ x$, $v = \text{len}\ x$, $v = \text{len}\ x + 1$), 
where @v@ and @x@ respectively range over the refinement variable 
and any program variable.
Application-specific templates are automatically abstracted from user-provided specifications
and the user can explicitly define custom templates. 
For instance, 
assume a user defined data type @Tree a@
and an uninterpreted function @size :: Tree a -> Int@.
If the user writes a specification 
@x: Tree a -> {v:Int | v < size x}@, 
then the template @v < size x@ is generated 
where @v@ and @x@ respectively range over the refinement variable 
and any program variable.

\paragraph{Depth}

For completeness in the theory,
we check the validity of any solution, including all possible {\em conjunctions} of elements of $\quals$, which is not tractable in practice. 
The implementation uses an instantiation depth parameter
that is 1 by default, meaning each @??@ 
ranges over single templates.
At depth 2, @??@ ranges over conjunctions of (single) templates, \textit{etc.}

\paragraph{Sensibility Checking}
Each @??@ can be instantiated 
with any templates that are local and specific.
The implementation uses the SMT solver to check both. 
As an \textit{optimization}, we perform a 
syntactic locality check to reject templates that are ``non-sensible'', 
for instance, syntactic contradictions of the form @x < v && v < x@.
As a \textit{heuristic}, we further filter out as non-sensible 
type-directed instantiations of the templates that, 
based on our experience, 
a user would not write, such as arithmetic operations on lists and booleans (\eg @x < False@)---although potentially correct in Haskell through overloading.

\paragraph{Locality Checking}
To encode Haskell functions (\eg @len@) in the refinements, 
\liquidHaskell is using uninterpreted SMT functions~\citep{Vazou18}.
As~\citet{Lehmann17} note, locality checking breaks under the presence of uninterpreted functions. 
For instance, the predicate @0 < len i@ is not local on @i@,
because @$\exists$i. 0 < len i@ is not SMT valid due to a model 
in which @len@ is always negative. 
To check locality under uninterpreted functions
we define a fresh variable (\eg @leni@)
for each function application (\eg @len i@). 
For instance, @0 < len i@ is local on @i@ because 
under the new encoding 
@$\exists$leni. 0 < leni@ is SMT valid.

\paragraph{Partitions}
A critical optimization for efficiency is that
after generation and before solving, 
the set of constraints is partitioned 
based on the constraint dependencies 
so that each partition is solved independently.
Two constraints depend on each other when they contain the same liquid variable
or when they contain different variables (\eg @k$_1$@ and @k$_2$@)
that depend on each other 
(\eg @x:{k$_1$} $\vdash${v|true}$\preceq${v|k$_2$}@).
That way, we reduce the number of @??@ that appear in each set of independent constraints %
and thus the number of concretization combinations that need to be checked 
(which increases exponentially with the number of @??@ accounting for all combinations of all concretizations).

\section{Application I: Error Explanation}\label{sec:error-explanation}

\begin{figure*}
  \centering
    \includegraphics[width=0.45\textwidth]{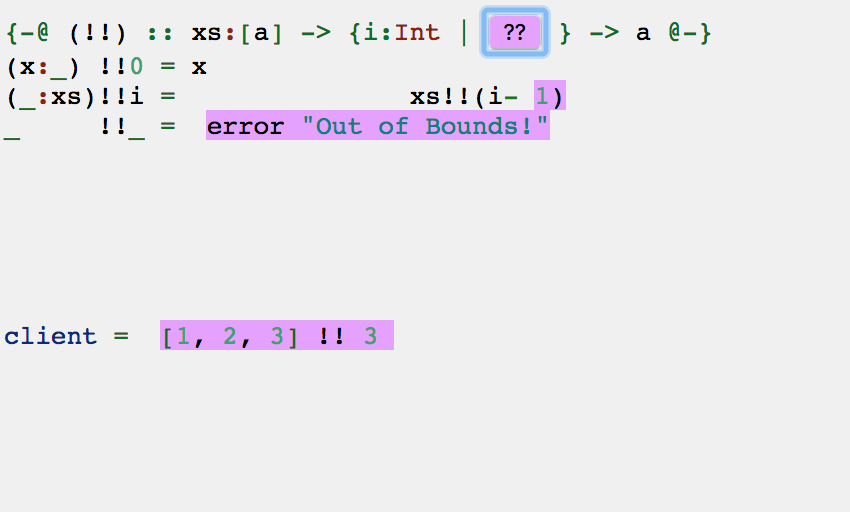}
    \hfill
    \includegraphics[width=0.45\textwidth]{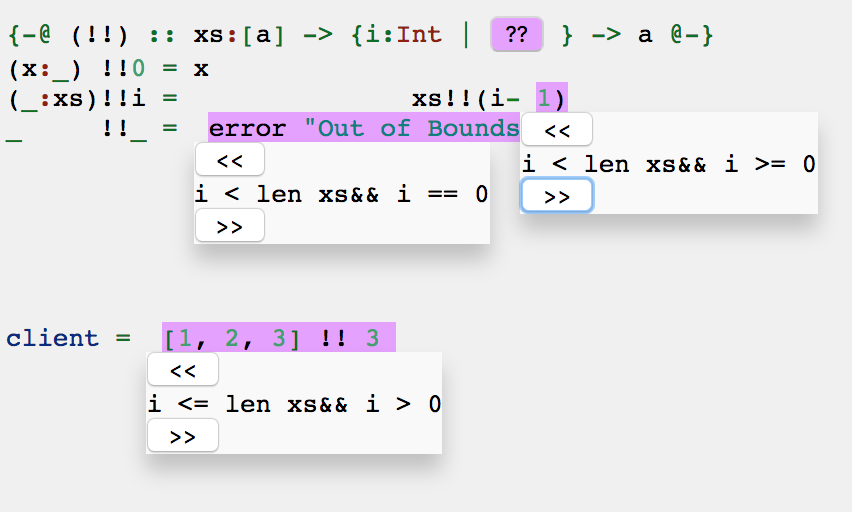}
\caption{\toolname GUI for error explanation of incompatible indexing.}
\label{figure:gui}
\end{figure*}

We illustrate how \toolname can be used interactively for error explanation.
Consider the list indexing function @(!!)@ in Haskell:
\begin{mcode}
  (x:_)  !! 0 = x 
  (_:xs) !! i = xs!!(i - 1)
  _      !! _ = error "Out of Bounds!"
\end{mcode}
Indexing is @0@-based and signals
a runtime error for @length xs <= i@.
Now consider a client indexing a list by its length:
\begin{code}
  client = [1, 2, 3] !! 3 
\end{code}
Refinement types can be used to impose a false precondition on the @error@ 
function and detect the out-of-bound crash:
\begin{mcode}
  error :: {i:String | false} -> a
\end{mcode}
But when the two above incompatible definitions coexist is the error
due to the {\em definition} of indexing, or to the {\em client} of indexing? 

This question of who is to blame has no definitive answer in the general case. 
Gradual liquid type inference is helpful to explore possible resolutions of
the error and decide for a suitable solution. 
To do so, we transform the statically ill-typed program
into a gradually well-typed program by giving an imprecise refinement 
as a precondition to the indexing function:
\begin{mcode}
  (!!) :: xs:[a] -> {i:Int | ?? } -> a
\end{mcode}
Using \toolname, we can explore all potential predicates 
that can be substituted for @??@. 
Among the candidates are both @0 <= i < len xs@ and @0 < i <= len xs@. While the former induces a type error in the client, the latter induces a type error in the definition of @(!!)@.

Figure~\ref{figure:gui}
was generated after running \toolname on 
the above specification and code.
The result of \toolname is an @.html@ file 
where each user specified @??@ is turned into a colored button. 
Pressing a @??@ button once 
highlights all of its usage occurrence, using different colors for each @??@ (Figure~\ref{figure:gui}(left)).
Pressing it again 
presents all safe concretizations to scroll through (Figure~\ref{figure:gui}(right)).
In the indexing example, three gradual constraints are generated: 
\textit{1)} for the recursive call of indexing, 
\textit{2)} for the unreachable (due to the false precondition) error call, and 
\textit{3)} for the client.
Unsurprisingly, 
the \scshorts for the recursive call and the client 
include the incompatible @0 <= i < len xs@ and @0 < i <= len xs@, respectively.
Interestingly, the unreachable constraint enjoys many \scshorts
including @0 <= i <= len xs@ and 
one given in the right of Figure~\ref{figure:gui}, 
\ie @i < len xs && i == 0@,
since the case of indexing @0@ from a non-empty list is covered 
in the first case of the indexing function.

The user explores all \scshorts with the @<<@ and @>>@ buttons.
In the example, the three \scshorts are independent, but in many cases
\scshorts can depend on each other (\eg dependencies in function
preconditions).
In such cases, pressing a navigation button changes the values of all the dependent occurrences.

\begin{figure*}
  \centering
    \includegraphics[width=0.45\textwidth]{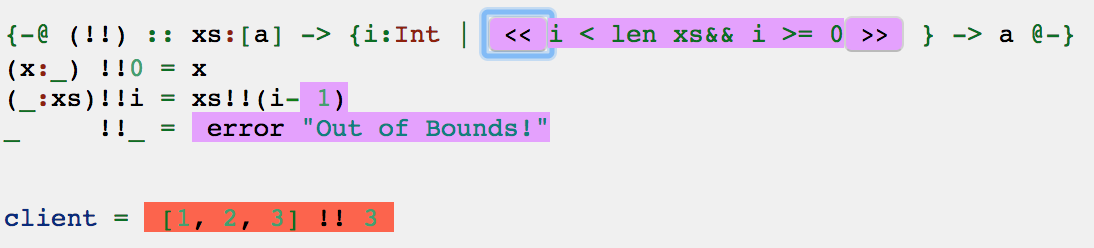} 
    \hfill
    \includegraphics[width=0.45\textwidth]{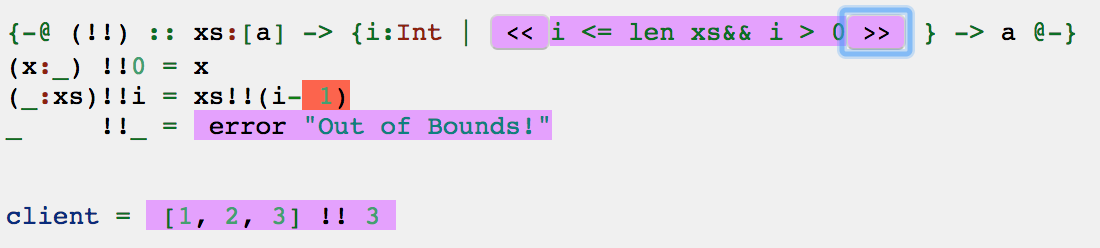}
  \caption{\toolname GUI for liquid types exploration.}
  \label{figure:guialt}
\end{figure*}

When the goal is to replace the @??@ with a concrete refinement, 
going through all \scshorts can be overwhelming. 
In the example, there are 
22, 10, and 13 \scshorts
for the 
unreachable, client, and recursive calls, respectively.
To accommodate the user, \toolname generates an alternative 
interactive @.out.html@ file by which the user can 
replace the @??@ with any \scshort and observe 
the generated refinement errors. 
In Figure~\ref{figure:guialt}-left
the @??@ is replaced with the concrete refinement @0 <= i < len xs@,
generating an error at the client site. 
Pressing @>>@, the user explores the next concrete refinement, where 
@0 < i <= len xs@ generates an error in the recursive case of indexing (Figure~\ref{figure:guialt}-right). 
This exposes all the concretizations of @??@ 
that render at least one constraint safe (here 22).

\paragraph{Gradual types for Error Explanation}
This interactive replacement of @??@ allows the user to try 
all the possible concretizations inferred by the algorithm. 
For each selection the associated type errors are generated and highlighted. 
This process is effective for two major reasons:
\begin{enumerate}[leftmargin=*]
\item \textit{Provide all error locations:} 
The inference algorithm does not arbitrarily blame one program location
based on one inferred predicate, 
instead the user navigates via \textit{all} the potential error locations. 
Thus, during navigation the user can observe all potential sources of program errors.
\item \textit{Predicate Synthesis:}
Moreover, the user does not have to come up with the refinements, 
since the inference procedure synthesizes them. 
It is only up to the user to pick the adequate refinement that she chooses 
as the desired specification---which could be fully static, or imprecise.\footnote{For sound imprecise refinements, runtime support for gradual refinements is needed, as described by \citet{Lehmann17}. Implementing such support in Liquid Haskell is an interesting perspective, outside the scope of this work.}
\end{enumerate}

\paragraph{Quantitative Evaluation}
\begin{table*}
\begin{center}
\begin{tabular}{ | c | c || c | c | c | c | c || c | c | c | c || r |}
\hline
\depth & \gradNum  & \occs & \cands & \sens & \local & \precise & 
\parts & \instan & \sols & \staticsols & \ttime (s)\\          
    \hline
	1 &	1 &	[3] &	[12*] &	[11*] &	[11*] &	[11*] &	3/12 &	11* &	[8,0,6] &	0
	& 0.47\\
	2 &	1 &	[3] &	[68*] &	[38*] &	[34*] &	[34*] &	3/12 &	34* &	[22,10,13] &	0
    & 4.91\\
\hline
\end{tabular}
\end{center}
\caption{Quantitative Evaluation of the Indexing Example}
\vspace{-4mm}
\label{table:indexing}
\end{table*}
Table~\ref{table:indexing}
summarizes a quantitative evaluation of the indexing example. 
We run \toolname with instantiation depth (\depth) 1 and 2, 
\ie the size of the conjunctions of the templates we consider (\S~\ref{sec:implementation}).
The \gradNum column gives the number of imprecise refinements 
(as added by the user)
and \occs gives the number each @??@
appearing in the generated constraints. 
Column \cands denotes the candidate (\ie well-typed) 
templates for each occurrence (\S~\ref{sec:implementation}). 
In the example, for depth 2, 68 templates are generated for each occurrence 
(\ie [68, 68, 68] simplified as 68* for space).
Then, \toolname decides how many of the templates are sensible
(\sens), local (\local), and specific (\precise); here, 38, 34, and
34, respectively.
We note that all the local templates are specific, 
when the static part of the gradual refinement is true 
(\ie @true && ??@, simplified as @??@). 
Moreover, note that most non-local solutions were filtered out 
by the sensibility check.
The constraints were split in 12 partitions (\parts), 
out of which 3 contained gradual refinements and had to be concretized. 
Each partition had 34 concretizations (\instan)
out of which 22, 10, and 13 were safe concretizations (\sols), respectively.
None of these concretizations was common for all the occurrences of the @??@, 
thus the \toolname reports 0 static solutions (\staticsols), as expected. 
Finally, the running time of \toolname was 4.91 sec. 

In short, we observe how the implementation optimizations, 
as discussed in~\S~\ref{sec:implementation},
allow for possible error reporting using gradual liquid types. 
With depth of 2 and 3 occurrences of the @?@ each with 68 candidates, 
the exponential inference algorithm would need to run $68^3$ times. 
In practice, we filter the candidates to 34 and since each occurrence
appears in a different partition, we only run the algorithm $34 * 3$
times.   
Next, we use these metrics to 
evaluate \toolname as a possible tool for migrating 
three standard Haskell list libraries to \liquidHaskell.

\section{Application II: Migration Assistance}\label{sec:migration}
As a second application, we use
\toolname's error reporting GUI from \S~\ref{sec:error-explanation} to assist the
migration of commonly used Haskell libraries to Liquid
Haskell.  
Inter-language migration to strengthen type safety
guarantees is one of the main motivations for gradual type systems~\citep{STH-DLS06-scripts-to-programs}.
Our study confirms that gradual liquid type inference provides an effective
bridge to migrate programs written in a traditional polymorphic functional language like Haskell to a stronger discipline with refinement types as provided by Liquid Haskell.

\paragraph{Benchmarks}
We used  \toolname to migrate three
interdependent Haskell list 
libraries: %
\begin{itemize}%
\item 
\bench{GHC.List}: %
provides commonly used list functions available at Haskell's \bench{Prelude},
\item 
\bench{Data.List}: %
defines more sophisticated list functions, \eg list transposing, and
\item 
\bench{Data.List.NonEmpty}: %
lifts list functions to a non-empty list data type. 
\end{itemize}

\paragraph{Migration Process}
A library consists of a set of function 
imports and definitions.  
Each function
comes with its Haskell type
and may be assigned to a gradual refinement type
during migration. 
Migration is complete when, if possible,   
all functions are given fully-static refinement types,
and type check under \liquidHaskell.   
The process proceeds in four steps:

\begin{description}%
\item [Step 1:] run \liquidHaskell to generate a set of
type errors,
\item [Step 2:] fix the errors by \textit{manually} inserting gradual refinements (@??@),
\item [Step 3:] use \toolname to replace @??@ with an \textit{automatically} generated \scshort, and 
\item [Step 4:] go back to \textbf{Step 1} until no type errors are reported.
\end{description}
This process is iterative and interactive since refinement errors
propagate between imported and client libraries 
and it is
up to the user to decide how to resolve these errors.

\textbf{Step 1:}
At the beginning of the migration process 
the source files given to \liquidHaskell 
have no refinement type specifications. 
Still, there are 
two sources\footnote{Termination checking, by default activate in \liquidHaskell,
is a third source of type errors in programs without any refinement type specifications. 
To simplify our case study we deactivated termination checking.}
of %
refinement type errors:
\textit{1)} failure to satisfy imported functions preconditions, 
\eg in \S~\ref{sec:error-explanation} 
the \bench{error} function assumes the \bench{false} precondition and 
\textit{2)} incomplete patterns, \ie a pattern-match that might fail at runtime, 
\eg \bench{scanr}'s result is matched to a non-empty list.
\begin{mcode}
  scanr _ q []     =  [q]
  scanr f q (x:xs) =  f x q : qs where qs@(q:_) = scanr f q xs 
\end{mcode}
Each time \textbf{Step 1} is reiterated, new type errors can occur as a consequence of new function preconditions added in the following steps. 

\textbf{Step 2:}
The insertion of gradual refinements (@??@) 
is left to the user; they can add @??@
either in the preconditions of defined functions 
or postconditions of imported functions,
thereby resolving the type errors of \textbf{Step 1}.
If the user places redundant @??@, then they will be solved to the @True@ static refinement.
If the user misses some @??@, type errors will remain.

\textbf{Step 3:}
Once the program gradually typechecks, 
the user explores the generated
\scshorts, and chooses which one to replace @??@ with. 
In our benchmarks, often, the decision is trivial
since only one \scshort coincides for all occurrences of the gradually-refined variable.

\textbf{Step 4:}
Depending on the refinement inserted at \textbf{Step 3}, new errors might appear in the current or imported libraries; there are three possible scenarios:
\begin{enumerate}[label=(\alph*)]
  \item If the user refined the precondition of a function (\eg \bench{head} requires non-empty lists), then a type error might be generated 
  at clients of the function both inside and outside the
  library;
  \item If the user refined the post-condition of a function (\eg \bench{scanr} always returns non-empty lists), then no error can be generated;
  \item If the user refined the postcondition of an imported function upon which the function to be verified relies, 
  then the imported function's specification may not be satisfied by its implementation. In this case the user can either {\em assume} the
  imported type (thus relying on gradual checking), or update and re-check the imported library.
\end{enumerate}

\begin{small}
\begin{table*}[!h]
\begin{center}
\begin{small}
  \begin{tabular}{ | c || l ||@{ }c@{ }| c | c | c || c | c | c |@{ }c@{ }|| r |}
    \hline
\round & \funname \deps{& Dependencies}
\fdepth{& \depth} & \gradNum  & \occs & \cands &  \precise & 
\parts & \instan & \sols & \staticsols & \ttime (s)\\          
    \hline\hline
\multicolumn{11}{|l|}{\ghclist (56 functions defined and verified)}\\
	\hline
$1^{\text{st}}$      
	& \errorEmptyList & 
	\deps{- &} 
    \fdepth{1 &}	1 &	[1] &	[[5]] &		[[4]] &	1/4 &	[4] &	[0] & 0
    & 1.00
    \\
   	& \bench{scanr}   	&
   	\deps{- &}
	\fdepth{1 &}	1 &	[4] &	[6*] &		[5*] &	1/5 &	[625] &	[125] & 1
	& \textcolor{red}{4.6K}
    \\
   	& \bench{scanr1}   	& 
   	\deps{\errorEmptyList &}
    \fdepth{1 &}	2 &	[4,6] &	[12*,5*] &		[5*,4*] &	1/5 &	[2.5M] & \todo & \todo
    & \textcolor{red}{N/A}
    \\
	\hline
$2^{\text{nd}}$ 
	& \bench{head} 		& 
	\deps{\errorEmptyList &}
	\fdepth{1 &}	1 &	[1] &	[[5]] &	[[4]] &	1/3 &	[4] &	[1]& 1
	& 0.70   
	\\
   	& \bench{tail} 		& 
   	\deps{\errorEmptyList &} 
	\fdepth{1 &}	1 &	[1] &	[[5]] &	[[4]] &	1/5 &	[4] &	[1] & 1 
	& 0.77    \\
   	& \bench{last} 		& 
   	\deps{\errorEmptyList &}
	\fdepth{1 &}	1 &	[2] &	[5*] &	[4*] &	2/4 &	4* &	[4,1] & 1
	&  1.04    \\
   	& \bench{init}		& 
   	\deps{\errorEmptyList &} 
    \fdepth{1 &}	1 &	[3] &	[5*] & [4*] &	2/8 &	[16,4] &	[16,1] & 1
    & 3.12
    \\
   	& \bench{fold1} 		& 
   	\deps{\errorEmptyList &}
    \fdepth{1 &}	1 &	[3] &	[5*] &	[4*] &	2/5 &	[4,16] &	[1,16] & 1
	& 2.41
    \\
   	& \bench{foldr1} 	& 
   	\deps{\errorEmptyList &}
    \fdepth{1 &}	1 &	[1] &	[[5]] &		[[4]] &	1/2 &	[4] &	[1] & 1
    & 1.08 \\
   	& \bench{(!!)}   	& 
   	\deps{\errorEmptyList &}
    \fdepth{1 &}	2 &	[4,4] &	[5*,10*] &	[4*,9*] &	4/9 &	36* &	[12,24,36,36] & 6
    &  7.81
    \\
   	& \bench{cycle}    	& 
   	\deps{\errorEmptyList &}
        \fdepth{1 &}	1 &	[2] &	[5*] &	[4*] &	2/6 &	4* &	[4,1] & 1 
	& 1.37
    \\
\hline
$3^{\text{rd}}$ 
   	& \bench{maximum}    & 
   	\deps{\errorEmptyList, \bench{fold1} &}
    \fdepth{1 &}	1 &	[3] &	[5*] &		[4*] &	2/4 &	[4,16] &	[1,16] &	1
    & 3.38
    \\
   	& \bench{minimum}    & 
   	\deps{\errorEmptyList, \bench{fold1} &}
	\fdepth{1 &}	1 &	[3] &	[5*] &	[4*] &	2/4 &	[4,16] &	[1,16] &	1    
	& 2.80
    \\
    \hline\hline
\multicolumn{11}{|l|}{\datalist (115 functions defined and verified)}\\
	    \hline
$1^{\text{st}}$  
	& \bench{maximumBy} & 
	\deps{- \bench{error} & } 
	\fdepth{1 &}	1 &	[3] &	[5*] &		[4*] &	2/5 &	[16,4] &	[16,1] &	1 & 
	2.24
    \\
	& \bench{minimumBy} & 
	\deps{- \bench{error} & }
	\fdepth{1 &}	1 &	[3] &	[5*] &		[4*] &	2/5 &	[16,4] &	[16,1] &	1
	& 2.40
    \\
	& \bench{transpose} & 
	\deps{- \bench{error} & }
	\fdepth{1 &}	1 &	[3] &	[12*] &		[5*] &	2/11 &	[25,5] &	[0,4] &	1 & 
	51.02
    \\
	& \bench{genIndex} & 
	\deps{- \bench{error} & }
	\fdepth{1 &}	2 &	[6,6] &	[2*,5*] &	[1*,4*] &	6/12 &	4* &	[3,4,4,1,1,4] &	0 & 1.83
    \\
    \hline\hline
\multicolumn{11}{|l|}{\listnonempty  (57 functions defined and verified)}\\
    \hline
$1^{\text{st}}$  
	& \bench{fromList} & 
	\deps{- &}
	\fdepth{1 &}	1 &	[2] &	[5*] &	[4*] &	2/11 &	4* &	[4,1] &	1  
	    & 1.41
    \\
	& \bench{cycle} & 
	\deps{- &}
    \fdepth{1 &}	1 &	[1] &	[[2]] &	[[1]] &	1/3 &	[1] &	[0] &	0 
    & 1.27
    \\
	& \ \ - \bench{toList} & 
	\deps{- &}
    \fdepth{1 &}	2 &	[1,2] &	[[2],5*] &		[[1],4*] &	2/3 &	4* &	[1,3] &	1
    & 3.11
    \\
	& \bench{(!!)} & 
	\deps{- &}
    \fdepth{1 &}	1 &	[4] &	[10*] &		[9*] &	4/22 &	9* &	[9,4,4,9] &	1
    & 5.96
    \\
    \hline
$2^{\text{nd}}$  
    & \bench{cycle} & 
    \deps{\bench{fromList} &}
    \fdepth{1 &}	2 &	[1,1] &	[[12],[5]] &		[[5],[1]] &	1/2 &	[5] &	[2] &	2
    & 3.13\\
    & \bench{lift} & 
    \deps{\bench{fromList} &}
    \fdepth{1 &}	1 &	[1] &	[[6]] &	[[5]] &	1/2 &	[5] &	[2] &	2
    & 2.90
    \\
    & \bench{inits} & 
    \deps{\bench{fromList}, \bench{DL.inits} &}
    \fdepth{1 &}	1 &	[1] &	[[6]] &		[[5]] &	1/5 &	[5] &	[2] &	2
    & 2.95 \\
    & \bench{tails} & 
    \deps{\bench{fromList}, \bench{DL.tails} & }
    \fdepth{1 &}	2 &	[1,1] &	[[5],[6]] &	[[4],[5]] &	1/5 &	[20] &	[6] &	6
    & 10.22
    \\
    & \bench{scanl} & 
    \deps{\bench{fromList}, \bench{L.scanl} & }
    \fdepth{1 &}	1 &	[1] &	[[6]] & [[5]] &	1/5 &	[5] &	[2] &	2
    & 2.23
    \\
    & \bench{scanl1} & 
    \deps{\bench{fromList}, \bench{L.scanl1} & }
    \fdepth{1 &}	1 &	[1] &	[[6]] &	[[5]] &	1/2 &	[5] &	[1] &	1
    & 1.13
    \\
    & \bench{insert} & 
    \deps{\bench{fromList}, \bench{DL.insert} & }
    \fdepth{1 &}	1 &	[1] &	[[12]] &	[[5]] &	1/5 &	[5] &	[2] &	2
    & 2.25 \\
    & \bench{transpose} & 
    \deps{\bench{fromList}, \bench{DL.transpose} & }
    \fdepth{1 &}	2 &	[1,1] &	[[7],[12]] &	[[6],[5]] &	1/7 &	[30] &	[6] &	6
    & 37.48
    \\
    \hline
$3^{\text{rd}}$  
    & \bench{reverse} & 
    \deps{\bench{lift}, \bench{L.reverse} & }
    \fdepth{1 &}	2 &	[1,1] &	[[5],[12]] &		[[4],[5]] &	2/3 &	[5,4] &	[2,3] & 5
    & 0.96
    \\
    & \bench{sort} & 
    \deps{\bench{lift}, \bench{DL.sort} &} 
    \fdepth{1 &}	2 &	[1,1] &	[[12],[5]]  &	[[5],[4]] &	2/3 &	[5,4] &	[2,3] &	5
    & 1.03 \\
    & \bench{sortBy} & 
    \deps{\bench{lift}, \bench{DL.sortBy} & }
    \fdepth{1 &}	2 &	[1,1] &	[[12],[5]] &	[[5],[4]] &	2/3 &	[5,4] &	[2,3] & 5 
    & 0.97
    \\
    \hline
\end{tabular}
\end{small}
\caption{Evaluation of Migrations Assistance.
\round: number of iterations to verify the function.
\funname: name of the function. 
\gradNum: number of $\egrad$ inserted. 
\occs:  times each $\egrad$ is used. 
For each occurrence, we give the number of template candidates (\cands)
and how many are specific (\precise).
\parts: the number of partitions. 
For each partition, we give
the number of 
concretizations (\instan) and safe concretizations (\sols).
\staticsols: number of static solutions found.
\ttime: time in sec.   
}
\label{table:evaluation}
\end{center}
\vspace{-6mm}
\end{table*}
\end{small}
 
\paragraph{Evaluation}

Table~\ref{table:evaluation} summarizes the migration case study;
there are three subtables, one for each library: \bench{GHC.List},
\bench{Data.List}, and \bench{Data.List.NonEmpty}.  Within each of these
tables, there is a row for every function which requires a refinement
type to type check.
Rows are collected into groupings of
rounds (\ie \textbf{Steps 1 - 4}), where round $i$ lead to refinement type preconditions 
that trigger type errors in the functions of round $i+1$.  
The columns of the table have
the same meaning as that of Table~\ref{table:indexing}; all results
are for depth 1.

\paragraph{\bench{GHC.List}:} 
\liquidHaskell reported three static errors on the original version of \bench{GHC.List}, 
\ie with no user refinement type specifications. 
The function \errorEmptyList is rejected as it is merely a wrapper around
the \bench{error} function; \bench{scanr}
and \bench{scanr1} each have incomplete patterns assuming a non-empty
list postcondition.
\toolname performed bad at all these three initial cases. 
It was unable to generate any \scshort for \errorEmptyList, 
since the required false precondition is non local. 
It required more than one hour to generate \scshorts of
\bench{scanr} and it timed-out in the case of \bench{scanr1}.
The reason for this is that @??@ in post-conditions
generated dependent set of constraints that made the partitioning 
optimization (\S~\ref{sec:implementation}) useless, despite its importance in the other scenarios.
Yet, \toolname was very efficient in the next two rounds. 
The specification of \errorEmptyList introduced errors in eight functions that were 
fixed using \toolname: in seven cases the generated \scshorts correspond to 
exactly one predicate that was used to replace the @??@.
One of these functions, \bench{fold1}, was further used by two more functions 
that were interactively migrated at the third and final specification round. 

In short, out of the 56 functions defined, 
13 required refinement type specifications, since the rest 
were typed with the true-default types generated by Liquid Haskell. 
\toolname was unusable in 2 cases, but synthesized exactly one, the correct, 
predicate in 10 of them. For the case of @(!!)@, \toolname generated 6
correct predicates, out of which we manually picked the most general one. 
The two cases where \toolname was unusable are due to many dependencies within the same partition; as future work, we should explore whether it is possible to devise a more advanced partitioning scheme.

\paragraph{\bench{Data.List}:} 
Migration of \bench{Data.List} only required one round. 
Four functions errored
due to incomplete patterns or violation of preconditions 
of functions imported from previously verified \bench{GHC.List}.
Unsurprisingly, \toolname was unable to find any \scshorts for \bench{genIndex}, 
a generic variant of \bench{(!!)} that indexes lists using any integral (instead of integer) index,
because it lacks arithmetic templates for integrals. 
To complete migration, the user needs to either 
manually provide the refinement type of \bench{genIndex}
or appropriately extend the set of templates. 

\paragraph{\bench{Data.List.NonEmpty}:}
Migration was more interesting for the  \bench{Data.List.NonEmpty} library that manipulates 
the data type @NonEmpty a@ of non-empty lists. 
The first round exposed that \bench{fromList} %
requires the non-empty precondition. 
The \bench{GHC.List} function \bench{cycle}
has a non-empty precondition, thus lifted to non-empty lists
does not type check.
\begin{mcode}
   cycle :: NonEmpty a -> NonEmpty a
   cycle = fromList . List.cycle . toList
\end{mcode}
To migrate \bench{cycle}, we first gradually refined the result type of 
@toList :: NonEmpty a -> {xs:[a] | ??}@ 
for which \toolname suggested the single static refinement of 
@0 < len xs@. 

Verification of non-empty list indexing calls requires
invariants that relate the lengths of the 
empty and non-empty lists. 
Similarly, \toolname finds \scshorts
only after predicate templates that
express such invariants are added.  
In general, to aid migration on user-defined structures,
\toolname requires the definition of domain-specific templates.

On the second round, 
the non-empty precondition of \bench{fromList}
triggers errors to eight clients (step 4a), 
all of which call functions that return (non-provably)
non-empty lists. 
For example, \bench{inits} lifts @List.inits@ 
to non-empty lists.
\begin{mcode}
  inits = fromList . List.inits . toList
\end{mcode}
To migrate such functions, we first assume an unknown specification, for example: 
\begin{mcode}
 assume List.inits :: [a] -> {o:[a] | ?? }
\end{mcode}
We then use \toolname to solve the unknown specification to @0 < len o@. 
At this point, we can either update the imported function with the discovered static specification and recheck the library (step 4c), or stick to gradual checking and stay with the unknown specification for @inits@.

The function @cycle@ reappears in the second round, due to the new 
precondition of @fromList@. 
Since the imported @List.cycle@ already has a precondition @{i:[a] | 0 < len i}@, 
at this round we further strengthened the existing precondition with the 
imprecise refinement: %
\begin{mcode}
 assume List.cycle :: i:{ [a] | 0 < len i && ?? } -> {o:[a] | ?? }
\end{mcode}
This is the only case in our experiments 
where we used a gradual refinement with a static part
and thus the only case in which some local 
templates are rejected as non-specific 
(here 3 out of 4 local templates are not specific).

Finally,
we use \toolname to derive higher-order specifications. 
The \bench{lift} function 
lifts a list transformation to non-empty lists,
@lift f = fromList . f . toList@,
and comes with a comment that 
``If the provided function returns an empty list, this will raise an error.''.
Alerted by this comment, we use a @??@ in higher-order position:
\begin{code}
  lift :: (i:[a] -> {o:[b] | ??}) -> NonEmpty a -> NonEmpty b
\end{code}
\toolname
produces two static solutions @0 < len o@
and @len i == len o@. We choose the second, which leads 
to type errors in three clients, which are resolved in
later rounds.

To sum up,
\toolname indeed is aiding Haskell to Liquid Haskell 
migration of real libraries, since the user 
can place the initial @??@s and choose from the suggested \scshorts, 
instead of writing specifications from scratch. 
Often times, there is only one possible \scshort
coinciding to all concretizations, thus the choice is trivial. 
When no suggestions are generated, 
\eg @errorEmp@ or @genIndex@, 
the user can fall back to the standard verification process. 

Automatic insertion of @??@s would further reduce the required user input, but is left as future work. 
Theorem~\ref{thm:embedding} suggests a complete way to automatically insert @??@s so that the migrating libraries gradually type check. 
The types suggested by this theorem are very imprecise, yet they provide 
a sound starting point, upon which one can increase precision so long as 
the libraries type check.

\section{Related Work}
\label{sec:related}

\paragraph{Liquid Types.} 
Dependent types 
allow arbitrary expressions at the type level to express theorems on programs, 
while theorem proving is simplified by various automations 
ranging from tactics (\eg Coq~\citep{coq-book}, Isabelle~\citep{IsabelleManual})
to external SMT solvers (\eg F*~\citep{fstar}). 
Liquid types~\citep{LiquidPLDI08} restrict the expressiveness of 
the type specifications to decidable fragments of logic 
to achieve decidable type checking and inference. 

\paragraph{Gradual Refinement Types.}
Several refinement type systems mix 
static verification with runtime checking.
Hybrid types~\citep{Knowles10} use an external prover
to statically verify subtyping when possible, 
otherwise a cast is automatically inserted to defer checking at runtime. 
Soft contract verification~\citep{HOSE12,SCV14,Nguyen18} works in the
other direction, statically verifying contracts wherever
possible, and otherwise leaving unverified contracts for checking at
runtime.
\citet{Ou2004} allow the programmer to explicitly annotate 
whether an assertion is verified 
at compile- or runtime. 
Manifest contracts~\citep{Greenberg10}
formalize the metatheory of refinement typing
in the presence of dynamic contract checking.
\citet{Lehmann17} developed the first gradual refinement type system, which adheres to the refined criteria of~\citet{siek15}.
None of these systems support inference; 
on the contrary, because refinements can be arbitrary, inference is impossible in these systems.
Here we restrict gradual refinements 
to a finite set of predicates, in order to achieve
inference by adaptation of the liquid inference procedure. 

\paragraph{Gradual Type Inference.}
Several approaches have been developed to combine type inference and gradual types. 
\citet{siek08} infer gradual types using unification, 
while~\citet{RastogiCH12}
exploit type inference to improve the performance of gradually-typed programs.
\citet{Garcia2015}
lift an inference algorithm from a core system to its 
gradual counterpart, by ignoring the unification constraints imposed by gradual types.
In contrast, in gradual liquid inference
the constraints imposed by gradual refinements cannot be ignored, since  
unlike Hindley-Milner inference, 
the liquid algorithm starts from the strongest solution 
for the liquid variables 
and uses constraints to iteratively weaken the solution. Our work is the first gradual inference work to be systematically derived from the static algorithm based on the Abstracting Gradual Typing approach (AGT)~\citep{adt}, and proven to satisfy the static gradual guarantee~\citep{siek15}.

\paragraph{Error Reporting.} Properly localizing, diagnosing and reporting type errors is a long-standing challenge, especially for inference algorithms. Since the seminal work of \citet{Wand86}, which keeps track of all unification steps in order to help debugging, many algorithms have been proposed to better assist programmers. A large variety of techniques has been explored, including
slicing~\citep{TipDineshTOSEM01,HaackWellsESOP03}, heuristics~\citep{Zhang15}, SMT constraint solving~\citep{PavlinovicOOPSLA14}, counter-example generation~\citep{NguyenH15,SeidelICFP16}, machine learning~\citep{SeidelSCWJ17}, and other search-based approaches, such as Seminal~\citep{LernerPLDI07} and counterfactual change inference~\citep{ChenErwigJFP18}.

Apart from~\citet{SeidelICFP16}, which focuses on identifying dynamic witnesses for static refinement type errors, none of the above approaches target refinement types. In this work, we uncover a novel application of gradual typing for error explanation, by observing that the notion of concretizations of gradual types that stem from AGT, embed useful justifications (or lack thereof) that can be exploited to identify refinement errors and guide migration.

\paragraph{Gradual Program Migration.} Recently, \citet{CamporaPOPL18} exploit variational typing~\citep{ChenTOPLAS14} to assist programmers in making their program more static, by analyzing the impact of replacing unknown types with static types. Variational typing greatly reduces the complexity of exploring all possible combinations. Though the considered type system is simple, it seems likely that the approach could be extended to refinements and combined with our technique. This might be an effective way to address the scalability issues we have encountered with our implementation approach in certain scenarios.

\section{Conclusion}
\label{sec:conclusion}

This paper makes the novel observation that gradual inference based on abstract interpretation can be fruitfully exploited to assist in explaining type errors and migrating programs to a stronger typing discipline. 
We develop this intuition in the context of refinement
types, yielding a novel integration of liquid type inference and
gradual refinements. Gradual liquid type inference computes possible
concretizations of unknown refinements in order for a program to be
well-typed. In addition to laying down the theoretical foundations of gradual liquid type inference, we provide an implementation integrated with \liquidHaskell. Thanks to a number of heuristics and optimizations, the current implementation of the interactive tool \toolname proves useful for migrating existing Haskell libraries to the stronger discipline of \liquidHaskell. Our experience however also shows that enhancing the scalability of our prototype further is necessary; we believe that variational typing could be particularly helpful towards that goal. Similarly, integrating prior work on ranking suggested type error sources would help our inference algorithm suggest the most relevant concretizations first. 
Finally, we conjecture that the idea of using gradual typing for error reporting and migration generalizes to other typing disciplines.

\clearpage
{
\bibliography{sw}
\balance
}

\clearpage
\appendix

\section{Auxiliary Definitions and Proofs}
\label{appendix:section:metatheory}

We now provide the full inference algorithm 
and prove the theorems of~\S~\ref{subsec:metatheory}.

\subsection{The Inference Algorithm}\label{subsec:inference-algorithm}

We define $\ginfername$ as in~\S~\ref{subsec:metatheory}: 
\begin{mcode}
$\ginfername$ :: $\glenvtyp$ -> $\glexprtyp$ -> Quals -> [$\gltypetyp$] 
$\ginfer{\glenv}{\glexpr}{\iquals}$ = { $\gltyp'$ | Just $\gltyp'$ <- $\maybeapplysub{\gltyp}{\ltsol}$ 
                    , $\ltsol \in \qgsolve{\ltsol_0}{\gliquid{C}}$ }
  where 
    ltsolzero = $\lambda \kvar$. iquals
    ($\gltyp$, $\gliquid{C}$) = Cons $\glenv$ $\glexpr$ 
    
$\gsolvename$ :: Sol -> [$\glconstyp$] -> Quals -> [Maybe Sol]
$\qgsolve{\ltsol}{\gliquid{C}}$ = $\{\solve{\ltsol}{\liquid{C}} \spmid \algradualinstance{C}\}$ 
\end{mcode}
Note that unlike~\citep{LiquidPLDI08}, for simplicity, 
we assume that the 
set of refinement predicates $\iquals$ is not 
a set of templates, but an ``instantiated'' set of predicates.
The algorithmic concretization on a list of constraints is defined as follows
\[ \arraycolsep=1pt 
\begin{array}{rcl}
\lconcrete{\bpred{\glpred}{\rbind}} &\defeq&
  \powerset{\iquals} \cap \concrete{\bpred{\glpred}{\rbind}}\\
\lconcrete{\gliquid{\env}}
&\defeq& \{\lenv \mid 
\bind{x}{\ltyp}\in\lenv \ \textit{iff}\
\bind{x}{\gliquid{\typ}}\in\gliquid{\env},
\algradualinstance{\typ} \}\\
\lconcrete{\issubtype{}{\glenv}{\gltyp_1}{\gltyp_2}}
&\defeq& \{\issubtype{}{\lenv}{\ltyp_1}{\ltyp_2} \mid 
\algradualinstance{\env},
\algradualinstance{\typ_i} \}\\
\lconcrete{\iswellformed{}{\glenv}{\gltyp}}
&\defeq& \{\iswellformed{}{\lenv}{\ltyp} \mid 
\algradualinstance{\env},
\algradualinstance{\typ}, \}\\
\lconcrete{\gliquid{C}}
&\defeq& \{\liquid{C} \mid 
c\in \liquid{C}\ \textit{iff}\
\gliquid{c}\in\gliquid{C},
\algradualinstance{c} \}
\end{array}
\]

We complete the definitions by repeating the 
@Solve@ and @Cons@ algorithms from~\citep{LiquidPLDI08}.

The procedure @Solve@ $\ltsol$ $\liquid{C}$
repeatedly weakens the solution $\ltsol$
until the set of constraints $\liquid{C}$ is satisfied, 
or returns @Nothing@.
\begin{mcode}
Solve :: Sol -> [$\lconstyp$] -> Maybe Sol
Solve $\ltsol$ $\liquid{C}$ = 
  if exists $\liquid{c}$ $\in$ $\liquid{C}$ s.t. not $\valid{\applysub{\liquid{c}}{\ltsol}}$
  then case Weaken $\liquid{c}$ $\ltsol$ of 
        Just $\ltsol'$ -> Solve $\ltsol'$ $\liquid{C}$ 
        Nothing -> Nothing
  else Just $\ltsol$

Weaken :: $\lconstyp$ -> Sol -> Maybe Sol
Weaken ($\iswellformed{}{\ltenv}{\tref{v}{\btyp}{\applysub{\kvar}{\theta}}}$) $\ltsol$ = Just $\$$ 
  A[$\kvar$ $\mapsto$ {q|q $\in$ $\applysub{\kvar}{\ltsol}$, $\valid{\iswellformed{}{\applysub{\ltenv}{\ltsol}}{\tref{v}{\btyp}{\applysub{q}{\theta}}}}$}]
Weaken $\issubtype{}{\ltenv}{\tref{v}{\btyp}{\pred}}{\tref{v}{\btyp}{\applysub{\kappa}{\theta}}}$ $\ltsol$ = Just $\$$ 
  A[$\kvar$ $\mapsto$ {q|q $\in$ $\applysub{\kvar}{\ltsol}$, $\valid{\issubtype{}{\applysub{\ltenv}{\ltsol}}{\tref{v}{\btyp}{\applysub{\pred}{\ltsol}}}{\tref{v}{\btyp}{\applysub{q}{\theta}}}}$}]
Weaken _ _  = Nothing
\end{mcode}

The procedure @Cons@ $\glenv$ $\glexpr$ is following the rules of Figure~\ref{fig:rules}
to generate a template type of the expression $\glexpr$ and the set of 
constraints that should be satisfied. 
\begin{mcode}
Cons :: $\glenvtyp$ -> $\glexprtyp$ -> (Maybe $\gltypetyp$, [$\glconstyp$])
Cons $\glenv$ $\glexpr$ = 
  let ($\gltyp$,$\glC$) = Gen $\glenv$ $\glexpr$
  ($\gltyp$, Split $\glC$)
\end{mcode}

The procedure @Gen @ $\glenv$ $\glexpr$ is using the typing rules of Figure~\ref{fig:rules}
to generate a template type and a set of constraints.
\begin{mcode}
Gen :: $\glenvtyp$ -> $\glexprtyp$ -> (Maybe $\gltypetyp$, [$\glconstyp$])
Gen $\glenv$ $x$ 
  = if $\glenv(x)$ = $\tref{v}{\btyp}{\_}$ 
     then (Just $\gtref{v}{\btyp}{v = x}$, $\emptyset$) 
     else (Just $\glenv(x)$, $\emptyset$)
Gen $\glenv$ $c$  
  = (Just $\tc{c}$, $\emptyset$) 
Gen $\glenv$ ($\ecast{\glexpr}{\gltyp}$) =
  let (Just $\gltyp_e$, $\glC$) = Gen $\glenv$ $\glexpr$ in 
  (Just $\gltyp$, ($\issubtype{}{\glenv}{\gltyp_e}{\gltyp}$, $\iswellformed{}{\glenv}{\gltyp}$, $\glC$)) 
Gen $\glenv$ ($\efun{x}{\gltyp_x}{\glexpr}$) = 
  let Just $\tfun{x}{\gltyp_x}{\gltyp}$ = Fresh $\glenv$ $\efun{x}{\gtyp_x}{\glexpr}$ in 
  let (Just $\gltyp_e$,$\glC$) = Gen ($\glenv,\bind{x}{\gltyp_x}$) $\glexpr$ in 
  (Just ($\tfun{x}{\gltyp_x}{\gltyp}$), ($\issubtype{}{\glenv}{\gltyp_e}{\gltyp}$, $\iswellformed{}{\glenv}{\tfun{x}{\gltyp_x}{\gltyp}}$,$\glC$))
Gen $\glenv$ ($\eapp{\glexpr}{y}$) = 
  let (Just ($\tfun{x}{\gltyp_x}{\gltyp}$), $\glC_1$) = Gen $\glenv$ $\glexpr$ in 
  let (Just $\gltyp_y$,$\glC_2$) = Gen $\glenv$ $y$ in 
  (Just $\gltyp\sub{x}{y}$,($\issubtype{}{\glenv}{\gltyp_x}{\gltyp_y}$, $\glC_1$ $\cup$ $\glC_2$))
Gen $\glenv$ $\eif{x}{\glexpr_1}{\glexpr_2}$) = 
  let Just $\gltyp$ = Fresh $\glenv$ $\eif{x}{\glexpr_1}{\glexpr_2}$ in 
  let (Just $\gltyp_1$,$\glC_1$) = Gen $\glenv,\bind{\_}{\tref{v}{\tbool}{\refparam{\rparam}{x}}}$ $\lexpr_1$ in 
  let (Just $\gltyp_2$,$\glC_2$) = Gen ($\glenv,\bind{\_}{\tref{v}{\tbool}{\refparam{\rparam}{\lnot x}}}$) $\glexpr_2$ in 
  (Just $\gltyp$, ($\iswellformed{}{\glenv}{\gltyp}$, $\issubtype{}{\glenv}{\gltyp_1}{\gltyp}$, $\issubtype{}{\glenv}{\gltyp_2}{\gltyp}$,$\glC_1$ $\cup$ $\glC_2$))
Gen $\glenv$ ($\elet{x}{\gltyp_x}{\glexpr_x}{\glexpr}$) = 
  let Just $\gltyp$ = Fresh $\glenv$ ($\elet{x}{\gltyp_x}{\glexpr_x}{\glexpr}$) in 
  let (Just $\gltyp_x$,$\glC_x$) = Gen $\glenv$ $\glexpr_x$ in 
  let (Just $\gltyp_e$,$\glC_e$) = Gen ($\glenv,\bind{x}{\gltyp}$) $\glexpr$ in 
  (Just $\gltyp$, ($\iswellformed{}{\glenv}{\gltyp}$, $\issubtype{}{\glenv}{\gltyp_e}{\gltyp}$, $\glC_x$ $\cup$ $\glC_e$))
Gen $\glenv$ ($\elettyp{x}{\gltyp_x}{\glexpr_x}{\glexpr}$) = 
  let Just $\gltyp$ = Fresh $\glenv$ ($\elet{x}{\gltyp_x}{\glexpr_x}{\glexpr}$) in 
  let (Just $\gltyp_e$,$\glC_e$) = Gen ($\glenv,\bind{x}{\gltyp}$) $\glexpr$ in 
  (Just $\gltyp$, ($\iswellformed{}{\glenv}{\gltyp_x}$, $\iswellformed{}{\glenv}{\gltyp}$, $\issubtype{}{\glenv}{\gltyp_e}{\gltyp}$, $\glC_x$ $\cup$ $\glC_e$))
Gen _ _ = 
  (Nothing, $\emptyset$)

Fresh :: $\glenvtyp$ -> $\glexprtyp$ -> Maybe $\gltypetyp$
Fresh $\glenv$ $\glexpr$ = HM type inference 
\end{mcode}
The procedure @Gen@ is using @Fresh@, the Hyndler Miller, unrefined type inference algorithm 
to generate liquid type templates with fresh refinement variables for the unknown types. 

Finally, @Split C@ using the well-formedness and sub-typing rules of Figure~\ref{fig:rules}
to split the constraints into basic constraints.
\begin{mcode}
Split :: [$\glconstyp$]->  [$\glconstyp$]
Split $\emptyset$ 
  = $\emptyset$
Split ($\iswellformed{}{\glenv}{\tref{v}{\btyp}{\pred}}$,C) 
  = ($\iswellformed{}{\glenv}{\tref{v}{\btyp}{\pred}}$, Split C)
Split ($\iswellformed{}{\glenv}{\tfun{x}{\gltyp_x}{\gltyp}}$,C) 
  = Split ($\iswellformed{}{\glenv}{\gltyp_x}$, $\iswellformed{}{\glenv,\bind{x}{\gltyp_x}}{\gltyp}$,C)
Split ($\issubtype{}{\glenv}{\tref{v}{\btyp}{\pred_1}}{\tref{v}{\btyp}{\pred_2}}$, C) 
  = ($\issubtype{}{\glenv}{\tref{v}{\btyp}{\pred_1}}{\tref{v}{\btyp}{\pred_2}}$, Split C)
Split ($\issubtype{}{\glenv}{\tfun{x}{\gltyp_{x1}}{\gltyp_1}}{\tfun{x}{\gltyp_{x2}}{\gltyp_2}}$, C) 
  = Split ($\issubtype{}{\glenv}{\gltyp_2}{\gltyp_1}$, $\issubtype{}{\glenv, \bind{x}{\gltyp_{x2}}}{\gltyp_1}{\gltyp_2}$, C)
\end{mcode}

\subsection{Correctness of Inference}\label{subsec:correctness}
Next, we prove Theorem~\ref{theorem:correctness}.
Let $\iquals$ be a finite set of predicates  from SMT-decidable logic, 
\glenv be a gradual liquid environment, and
\glexpr be a gradual liquid expression. 

The proofs rely on the properties of the functions 
@Solve@ and @Cons@. Since these two functions 
operate on liquid types, and are ignorant of the gradual setting, 
we directly port the proofs from~\citep{rondonthesis}. 

\begin{lemma}[Constraint Generation]\label{theorem:constraints}
Let $(\texttt{Just}\ \gltyp, \glC)$ $=$ @Cons@ \glenv \glexpr. 
\hastype{}{\glenv}{\glexpr}{\gltyp'}
\textit{iff}
there exists $\ltsol$ so that 
$\gltyp' \equiv \applysub{\gltyp}{\ltsol}$ and
\gradualvalid{\applysub{\glC}{\ltsol}}.
\end{lemma}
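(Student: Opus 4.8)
The plan is to prove both directions by structural induction on the expression $\glexpr$, exploiting the fact that \texttt{Cons} is nothing more than an algorithmic transcription of the typing rules of Figure~\ref{fig:rules}: the procedure \texttt{Gen} walks the syntax of $\glexpr$ following exactly those rules, and wherever a rule refers to an \emph{inferred} (blue) type---the argument type in \ruletfun, the common branch type in \ruletif, and the result type in \ruletlet and \ruletspec---\texttt{Gen} inserts a template with a fresh liquid variable supplied by \texttt{Fresh}. Consequently a solution $\ltsol$ of these liquid variables is in one-to-one correspondence with a choice of inferred types along a derivation, and the relation $\gltyp' \equiv \applysub{\gltyp}{\ltsol}$ simply records that $\gltyp'$ agrees with the template $\gltyp$ on its underlying simple-type skeleton while $\ltsol$ fills in the refinement predicates. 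I would first state this skeleton-agreement invariant explicitly---every type derivable for $\glexpr$ shares the simple type produced by \texttt{Fresh}, which follows from the correctness of Hindley--Milner inference and is unaffected by imprecise refinements since those touch only the predicate part---so that ``there exists $\gltyp'$'' can be reduced to ``there exists $\ltsol$''.

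For the forward (completeness) direction I would assume a derivation of $\hastype{}{\glenv}{\glexpr}{\gltyp'}$ and read off, at each rule application, the type the derivation chose for the corresponding blue position; defining $\ltsol$ to map each fresh liquid variable to the predicate of that chosen type yields $\gltyp' \equiv \applysub{\gltyp}{\ltsol}$. The side conditions appearing at the leaves of the derivation are exactly the base subtyping and well-formedness judgments emitted by \texttt{Split}, so their validity in the gradual typing relation is precisely $\gradualvalid{\applysub{\glC}{\ltsol}}$. For the backward (soundness) direction I run the argument in reverse: given $\ltsol$ with $\gltyp' \equiv \applysub{\gltyp}{\ltsol}$ and $\gradualvalid{\applysub{\glC}{\ltsol}}$, I instantiate each blue position with the matching subterm of $\applysub{\gltyp}{\ltsol}$ and assemble a derivation, discharging each \rulesbase and \rulewbase premise with the corresponding split constraint and recombining compound obligations through \rulesfun and \rulewfun. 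A small auxiliary lemma---that \texttt{Split} is sound and complete with respect to the structural rules \rulesfun and \rulewfun, i.e.\ the base constraints it emits are gradually valid iff the original compound constraints are---closes the gap between $\glC$ and the premises of the typing rules.

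The main obstacle is the interaction between gradual validity and the decomposition performed by \texttt{Split}. Because $\gradualvalid{\cdot}$ is defined judgment-by-judgment as an existential over concretizations (\S~\ref{subsec:gradualtypes}), and \texttt{Split} turns each compound obligation into several independent base obligations, I must argue that gradual validity \emph{commutes} with \texttt{Split}: the existential witness for a function-typed obligation can be chosen componentwise---contravariantly on the domain and covariantly on the codomain---matching the two premises of \rulesfun. This is the one place where the gradual setting genuinely differs from the static one, so it deserves an explicit argument rather than a direct appeal to prior work. Everything else about \texttt{Cons}, \texttt{Gen}, \texttt{Split} and \texttt{Fresh} is untouched by gradualization, since these procedures depend only on the structure of types and not on their refinements; I would therefore port their static correctness (the skeleton invariant and soundness/completeness of generation) directly from~\citep{rondonthesis}, exactly as the surrounding text already does for \texttt{Solve} and \texttt{Cons}, and reuse it as a black box.
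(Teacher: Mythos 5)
Your overall strategy---structural induction on $\glexpr$, reading a solution $\ltsol$ off the blue positions of a derivation, reassembling a derivation from a solution, and importing the static facts about \texttt{Gen}, \texttt{Split} and \texttt{Fresh} from~\citep{rondonthesis} as a black box---is essentially the paper's proof unfolded. The paper compresses all of this into a single observation: \texttt{Cons} is the algorithmic transcription of the rules of Figure~\ref{fig:rules}, so the constraint-generation theorem of~\citep{rondonthesis} holds for \emph{any} refinement type system with refinement variables whose leaf-level validity relation is exactly the one appearing in the premises of those rules; instantiating that relation with $\gradualvalid{\cdot}$ instead of $\valid{\cdot}$ yields the lemma.

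However, the step you single out as the ``main obstacle'' is a misdiagnosis, and the auxiliary lemma you propose for it is false as stated. In this system, $\gradualvalid{\cdot}$ is never applied to a compound (function-typed) obligation: the rules of Figure~\ref{fig:rules} are used ``as is'', so function subtyping and well-formedness are decomposed structurally by \rulesfun and \rulewfun \emph{before} any validity predicate is consulted, and both the constraint grammar of Figure~\ref{fig:syntax} and the output of \texttt{Split} consist exclusively of base constraints. Hence, on both sides of the iff, the existentials over concretizations are taken per base judgment, independently---exactly the per-occurrence semantics announced in \S~\ref{subsec:overview:gradual}---and there is nothing to ``commute'': the leaves of a gradual derivation and the base constraints emitted by \texttt{Split} already agree one-to-one. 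Worse, if you do formulate your lemma as ``the split base constraints are gradually valid iff the original compound obligation is gradually valid'', the direction from split to compound fails: gradual validity of $\issubtype{}{\glenv}{\gltyp_{x2}}{\gltyp_{x1}}$ and of $\issubtype{}{\glenv, \bind{x}{\gltyp_{x2}}}{\gltyp_1}{\gltyp_2}$ provides two \emph{independent} witnesses for the concretization of $\gltyp_{x2}$, and these need not be combinable into the single shared witness that a concretization of the whole function type demands. This is precisely the failure of combining existentials that the paper isolates in ``Try 2. Lifting \texttt{isValid}'': $(\exists q.\, q \Rightarrow p_1) \land (\exists q.\, q \Rightarrow p_2)$ does not yield $\exists q.\, q \Rightarrow p_1 \land p_2$. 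So that lemma is both unprovable and unnecessary; drop it, and the remainder of your induction goes through unchanged.
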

\begin{proof}
Following Theorem 4 of Appendix A of~\citep{rondonthesis}. 
Since @Cons@ is just the algorithmic version of the rules of Figure~\ref{fig:rules}
the theorem holds for any refinement type system with refinement variables, 
when the \valid{\cdot} relation is exactly the same as in the premises of the rules in 
Figure~\ref{fig:rules}.
\end{proof}

\begin{lemma}[Constraint Solving]\label{lemma:solve}
For every set of constraints $\lC$ and qualifiers $\quals$, 
\begin{enumerate}[leftmargin=*]
\item \solve{(\lambda \kappa. \quals)}{\lC} terminates. 
\item If $\solve{(\lambda \kappa. \quals)}{\lC}= \texttt{Just}\ \ltsol$ then 
\valid{\applysub{\lC}{\ltsol}}. 
\item If $\solve{(\lambda \kappa. \quals)}{\lC} = \texttt{Nothing}$
then $\lC$ has no solution on $\quals$.
\end{enumerate}
\end{lemma}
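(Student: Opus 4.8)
The plan is to prove the three parts by the standard liquid-types fixpoint argument. Since @Solve@ and @Weaken@ operate purely on liquid constraints and are oblivious to the gradual layer, the reasoning is exactly that of the static system~\citep{rondonthesis}; the only facts about validity checking I use are: \emph{(a)} a conjunction on the right-hand side of a subtyping judgment is valid \textit{iff} each conjunct is valid (the property recalled in the ``Try 2'' discussion of \S~\ref{subsec:adt}), and \emph{(b)} \emph{monotonicity}: enlarging a solution (adding qualifiers to some $\ltsol(\kappa)$) only strengthens the assumed environment $\applysub{\ltenv}{\ltsol}$ and the left-hand predicate $\applysub{\pred}{\ltsol}$, and therefore preserves validity of any subtyping or well-formedness judgment whose right-hand side is held fixed.

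For termination (part 1), I would take the well-founded measure $\mu(\ltsol) = \sum_{\kappa} |\ltsol(\kappa)|$, the total number of qualifiers assigned across all liquid variables, which is finite (bounded by the number of liquid variables times $|\quals|$). A call to @Solve@ either returns through the final \texttt{else} branch or recurses after a successful @Weaken@, so it suffices to show each successful @Weaken@ strictly decreases $\mu$. @Weaken@ fires on a constraint $c$ that is invalid under $\ltsol$ and whose right-hand side is some $\applysub{\kappa}{\theta}$, with $\applysub{\kappa}{\ltsol}$ equal to the conjunction $\bigwedge_{q\in\ltsol(\kappa)} q$; by fact \emph{(a)}, invalidity of the whole conjunction forces at least one $q\in\ltsol(\kappa)$ to individually fail its validity check. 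Hence @Weaken@ drops at least that $q$ from $\ltsol(\kappa)$ while leaving every other variable untouched, so $\mu$ strictly decreases and the recursion must bottom out.

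Part 2 (soundness) is immediate from the control flow: @Solve@ returns $\texttt{Just}\ \ltsol$ only through the \texttt{else} branch, which is reached exactly when no $c\in\lC$ satisfies $\lnot\,\valid{\applysub{c}{\ltsol}}$, \ie when every constraint is valid under $\ltsol$, which is precisely $\valid{\applysub{\lC}{\ltsol}}$.

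Part 3 (completeness) is the crux and the main obstacle. I would maintain the invariant that the running solution dominates every actual solution: writing $\ltsol^* \leqslant \ltsol$ for the pointwise containment $\forall\kappa.\ \ltsol^*(\kappa)\subseteq\ltsol(\kappa)$, I claim that for every solution $\ltsol^*$ of $\lC$ over $\quals$ (every $\ltsol^*$ with $\ltsol^*(\kappa)\subseteq\quals$ and $\valid{\applysub{\lC}{\ltsol^*}}$) the relation $\ltsol^*\leqslant\ltsol$ holds at each recursive call. It holds initially because $\ltsol_0=\lambda\kappa.\quals$ dominates every $\ltsol^*$. For preservation, assume $\ltsol^*\leqslant\ltsol$ and that @Weaken@ turns $\ltsol$ into $\ltsol'$ on an invalid subtyping (\resp well-formedness) constraint with right-hand side $\applysub{\kappa}{\theta}$: for any $q\in\ltsol^*(\kappa)$, validity of $c$ under $\ltsol^*$ together with fact \emph{(a)} yields validity of the $q$-specific judgment under $\ltsol^*$, and then fact \emph{(b)} applied along $\ltsol^*\leqslant\ltsol$ upgrades it to validity under $\ltsol$ (the right-hand side $\applysub{q}{\theta}$ is fixed), so $q$ survives the filter and $q\in\ltsol'(\kappa)$; thus $\ltsol^*\leqslant\ltsol'$. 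Finally, when @Solve@ returns \texttt{Nothing}, it is because @Weaken@ reached its default case, \ie some $c$ is invalid under $\ltsol$ but its relevant refinement is a concrete predicate rather than a liquid variable. Were a solution $\ltsol^*$ to exist, the invariant would give $\ltsol^*\leqslant\ltsol$, and validity of $c$ under $\ltsol^*$ would, by fact \emph{(b)}, force validity of $c$ under $\ltsol$---contradicting the invalidity that triggered the failure. Hence no solution over $\quals$ exists. The delicate point throughout is \emph{(b)}: I must check that enlarging the solution strengthens environment and left predicate simultaneously while the right-hand side stays fixed, so that validity is preserved in the correct direction; this is exactly where the liquid restriction to conjunctions of qualifiers and the SMT semantics of \valid{\cdot} are used.
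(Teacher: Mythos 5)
Your proof is correct and takes essentially the same route as the paper: the paper's entire proof of this lemma is a citation to Theorem 6 of Appendix A of \citep{rondonthesis}, and your argument---termination via the decreasing qualifier-count measure, part 2 by the control flow of \texttt{Solve}, and part 3 via the invariant that the running solution pointwise dominates every actual solution, using right-hand-side conjunction splitting and monotonicity of validity under strengthened hypotheses---is a faithful reconstruction of exactly that standard liquid-types fixpoint argument. Both you and the paper rest on the same key observation that \texttt{Solve} and \texttt{Weaken} operate purely on static liquid constraints, so the gradual layer plays no role here.
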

\begin{proof}
Theorem 6 of Appendix A of~\citep{rondonthesis}.
\end{proof}

\begin{lemma}[Gradual Validity]\label{lemma:validity} \ 
\begin{enumerate}[leftmargin=*]
\item
\gradualvalid{\applysub{\gliquid{c}}{\ltsol}}
\textit{iff}
$\exists \algradualinstance{c}. \valid{\applysub{\gliquid{c}}{\ltsol}}$
\item
\gradualvalid{\applysub{\gliquid{C}}{\ltsol}}
\textit{iff}
$\exists \algradualinstance{C}. \valid{\applysub{\gliquid{C}}{\ltsol}}$
\end{enumerate}
\end{lemma}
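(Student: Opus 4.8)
The plan is to prove both biconditionals, reducing the list-level statement (2) to the single-constraint statement (1), with the workhorse being the defining identity $\lconcrete{\bpred{\glpred}{\rbind}} = \powerset{\iquals} \cap \concrete{\bpred{\glpred}{\rbind}}$, which says that every algorithmic concretization is in particular a semantic one. I would first dispatch the easy direction of (1): if some $\algradualinstance{c}$ satisfies $\valid{\applysub{c}{\ltsol}}$, then since $c \in \lconcrete{\gliquid{c}} \subseteq \concrete{\gliquid{c}}$ this same $c$ already witnesses the existential in the definition of $\gradualvalid{\applysub{\gliquid{c}}{\ltsol}}$ from \S~\ref{subsec:gradualtypes}. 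Unfolding $\gradualvalid{\cdot}$ for both the subtyping and the well-formedness forms of $\gliquid{c}$ makes this inclusion immediate; the well-formedness case is trivial since every element of $\powerset{\iquals}$ is well-sorted by construction.

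For the converse direction of (1) I would analyze each imprecise refinement $\lpred_i \land \egrad$ by the position in which it occurs. Gradual validity hands us arbitrary local witnesses $p'_i$ with $p'_i \preceq \lpred_i$ that render $\applysub{c}{\ltsol}$ valid, and the goal is to replace each $p'_i$ by an element of $\powerset{\iquals}$ that remains local, remains more specific than $\lpred_i$, and still validates the constraint. When $\lpred_i \land \egrad$ occurs in conclusion (right-hand) position, the weakest admissible concretization is the static part $\lpred_i$ itself, which is liquid by construction and local by the locality requirement of \gradualliquidlang; since every concretization satisfies $\preceq \lpred_i$, validity under any $p'_i$ entails validity under $\lpred_i$ by monotonicity of implication in its conclusion, so $\lpred_i \in \powerset{\iquals}$ is the required witness. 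When $\lpred_i \land \egrad$ occurs in assumption (environment / left-hand) position the situation is dual: we need a liquid predicate that is at least as strong an assumption as $p'_i$. Here I would exhibit the strongest local liquid predicate below $\lpred_i$ drawn from $\powerset{\iquals}$ and argue, exactly as the weakening search of Solve does starting from $\ltsol_0$, that if any concretization validates the constraint then this liquid one does. Preservation of locality and of $\preceq \lpred_i$ under this choice is routine from the definitions of $\islocal{\rbind}{\cdot}$ and $\preceq$.

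I expect this assumption-position case to be the main obstacle, because it is precisely where the infinite semantic domain underlying $\concrete{\cdot}$ could in principle supply a witness with no counterpart in the finite $\powerset{\iquals}$. The step is legitimate only under the standing adequacy hypothesis on $\iquals$ that already underpins liquid completeness --- the same assumption that makes \infername complete --- so I would state that reliance explicitly and discharge it by reduction to Lemma~\ref{lemma:solve}(3) rather than reproving it from scratch, matching the paper's stated strategy of porting the properties of Solve and Cons directly from the liquid development.

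With (1) established, (2) follows by induction on the list $\gliquid{C}$. Validity of a constraint list is the conjunction of the validity of its constraints, and the definition of $\lconcrete{\gliquid{C}}$ concretizes each constraint independently, so the per-constraint existentials supplied by (1) combine into a single existential over $\algradualinstance{C}$. The independence of the concretization choices across distinct constraints is exactly what allows the existential quantifier to commute past the conjunction in both directions, completing the induction.
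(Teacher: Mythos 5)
Your easy direction of (1) and your treatment of (2) (commuting the existential past the per-constraint conjunction, which is licensed by the fact that $\lconcrete{\gliquid{C}}$ concretizes each constraint independently) are fine and agree with the paper. The genuine gap is in the forward direction of (1), exactly where you predicted it: the assumption-position case. There you must convert an arbitrary local semantic witness $p'$ into an element of $\powerset{\iquals}$, and no such conversion exists in general. Concretely, take $\iquals = \emptyset$ and the constraint $\issubtype{}{\bind{x}{\tref{v}{\tint}{\egrad}}}{\tref{v}{\tint}{v = x}}{\tref{v}{\tint}{0 < v}}$: the predicate $0 < x$ is local in $x$ and validates the implication, so the constraint is gradually valid under the semantic concretization of \S~\ref{subsec:gradualtypes}; but $\lconcrete{\bpred{\egrad}{x}} = \{\etrue\}$, and $\etrue$ does not validate it. Your proposed witness, ``the strongest local liquid predicate below $\lpred_i$,'' is precisely $\etrue$ here and fails --- monotonicity only rescues the conclusion-position case. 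Nor does the appeal to an ``adequacy hypothesis discharged by Lemma~\ref{lemma:solve}(3)'' repair this: that lemma asserts completeness of \solve{\cdot}{\cdot} \emph{within} the liquid domain (if it returns \texttt{Nothing}, no solution over $\iquals$ exists) and says nothing about replacing witnesses outside $\powerset{\iquals}$ by witnesses inside it. No such expressiveness hypothesis on $\iquals$ is stated anywhere in the paper, and under your reading the lemma as stated is simply false.

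The paper avoids this problem because its proof is not a transfer argument at all: it is a case analysis on the shape of $\gliquid{c}$ (subtyping vs.\ well-formedness) followed by definitional unfolding, in which $\gradualvalid{\applysub{\gliquid{c}}{\ltsol}}$ is unfolded directly into existence of component witnesses drawn from $\lconcrete{\cdot}$, and then repackaged as $\exists \algradualinstance{c}$. In other words, for gradual \emph{liquid} constraints the concretization underlying gradual validity is taken to be the same finite, liquid-restricted one that appears on the right-hand side of the lemma, so both sides quantify over the identical witness set and there is nothing to carry across domains; the only non-definitional content is the quantifier commutation in part (2). Your proposal instead interprets $\gradualvalid{\cdot}$ over the unrestricted semantic concretization, which turns the lemma into a finite-expressiveness claim about $\iquals$ that the counterexample above refutes. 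So the gap is not a missing sub-lemma you could patch in; you set out to prove a strictly stronger statement than the one the paper proves, and that stronger statement does not hold.
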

\begin{proof}\
\begin{enumerate}[leftmargin=*]
\item By case analysis on the shape of the constraint:
\begin{itemize}[leftmargin=*]
\item 
$\gliquid{c} \equiv \issubtype{}{\glenv}{\gltyp_1}{\gltyp_2}$. 
$$
\begin{array}{c}
\gradualvalid{\issubtype{}{\applysub{\glenv}{\ltsol}}{\applysub{\gltyp_1}{\ltsol}}{\applysub{\gltyp_2}{\ltsol}}} \\
\Leftrightarrow\\
\exists
\lenv\in\lconcrete{\glenv},
\ltyp_i\in\lconcrete{\gltyp_i}.
\valid{\issubtype{}{\applysub{\lenv}{\ltsol}}{\applysub{\ltyp_1}{\ltsol}}{\applysub{\ltyp_2}{\ltsol}}}\\
\Leftrightarrow\\
\exists
\algradualinstance{c}.
\valid{\applysub{\liquid{c}}{\ltsol}}\\
\end{array}
$$
\item $\gliquid{c} \equiv \iswellformed{}{\genv}{\gtyp}$.
$$
\begin{array}{c}
\gradualvalid{\iswellformed{}{\applysub{\glenv}{\ltsol}}{\applysub{\gltyp}{\ltsol}}}\\
\Leftrightarrow\\
\exists
\lenv\in\lconcrete{\glenv},
\ltyp\in\lconcrete{\gltyp}.
\valid{\iswellformed{}{\applysub{\lenv}{\ltsol}}{\applysub{\ltyp}{\ltsol}}}\\
\Leftrightarrow\\
\exists
\algradualinstance{c}.
\valid{\applysub{\liquid{c}}{\ltsol}}\\
\end{array}
$$
\end{itemize}
\item By the definition of \gradualvalid{\cdot}
and concretization of list of constraints. 
$$
\begin{array}{c}
\gradualvalid{\applysub{\gliquid{C}}{\ltsol}}\\
\Leftrightarrow\\
\forall \gliquid{c}\in\gliquid{C}.
\gradualvalid{\applysub{\gliquid{c}}{\ltsol}}\\
\Leftrightarrow\\
\forall \gliquid{c}\in\gliquid{C}.
\exists \algradualinstance{c}. \valid{\applysub{\gliquid{c}}{\ltsol}}\\
\Leftrightarrow\\
\exists \algradualinstance{C}. \valid{\applysub{\gliquid{C}}{\ltsol}}
\end{array}
$$
\end{enumerate}
\end{proof}

\begin{theorem}[Soundness]\ \\
If $\gltyp \in \ginfer{\glenv}{\glexpr}{\iquals}$, 
then \hastype{}{\glenv}{\glexpr}{\gltyp}. 
\end{theorem}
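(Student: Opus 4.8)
The plan is to reduce the claim to the Constraint Generation lemma (Lemma~\ref{theorem:constraints}), which characterizes $\hastype{}{\glenv}{\glexpr}{\gltyp}$ as the existence of a solution $\ltsol$ with $\gltyp\equiv\applysub{\gltyp_t}{\ltsol}$ and $\gradualvalid{\applysub{\gliquid{C}}{\ltsol}}$, where $(\texttt{Just}\ \gltyp_t,\gliquid{C}) = \texttt{Cons}\ \glenv\ \glexpr$. The hypothesis $\gltyp\in\ginfer{\glenv}{\glexpr}{\iquals}$ already supplies the template $\gltyp_t$, the constraint set $\gliquid{C}$, and a candidate solution $\ltsol$, so the only real work is to establish gradual validity of $\gliquid{C}$ under $\ltsol$.

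First I would unfold the definition of $\ginfername$ from \S~\ref{subsec:inference-algorithm}. Since the hypothesis presupposes a non-empty inferred set, $\texttt{Cons}$ must succeed, giving $(\texttt{Just}\ \gltyp_t,\gliquid{C}) = \texttt{Cons}\ \glenv\ \glexpr$, together with a successful element $\texttt{Just}\ \ltsol\in\qgsolve{\ltsol_0}{\gliquid{C}}$ — a \texttt{Nothing} element cannot yield a \texttt{Just} type and is filtered out by the comprehension defining $\ginfername$ — such that $\gltyp\equiv\applysub{\gltyp_t}{\ltsol}$. Unfolding $\gsolvename$ one step further, this means there is a concrete constraint set $\liquid{C}\in\lconcrete{\gliquid{C}}$ with $\solve{\ltsol_0}{\liquid{C}} = \texttt{Just}\ \ltsol$, where $\ltsol_0 = \lambda\kappa.\iquals$.

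Next I would invoke Constraint Solving (Lemma~\ref{lemma:solve}, part~2): since $\solve{\ltsol_0}{\liquid{C}}$ returns $\texttt{Just}\ \ltsol$, we get $\valid{\applysub{\liquid{C}}{\ltsol}}$. This is exactly the existential witness demanded by Gradual Validity (Lemma~\ref{lemma:validity}, part~2): we have exhibited a concretization $\liquid{C}\in\lconcrete{\gliquid{C}}$ for which $\valid{\applysub{\liquid{C}}{\ltsol}}$ holds, so the right-to-left direction of that lemma yields $\gradualvalid{\applysub{\gliquid{C}}{\ltsol}}$. Finally, applying Constraint Generation with the witness $\ltsol$ and $\gltyp\equiv\applysub{\gltyp_t}{\ltsol}$ gives $\hastype{}{\glenv}{\glexpr}{\gltyp}$, as desired.

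This is essentially a bookkeeping argument, so there is no deep obstacle; all the substantive content lives in the three auxiliary lemmas, and especially in the already-established Gradual Validity lemma. The one point that requires care is threading the \emph{same} concretization $\liquid{C}$ through both steps: the concretization that the algorithm happened to solve inside $\gsolvename$ must be reused as the existential witness in Gradual Validity. Because $\gsolvename$ is defined by a comprehension over exactly $\lconcrete{\gliquid{C}}$, this identification is immediate; yet it is precisely where the decision to solve each concretization independently (rather than to lift $\texttt{isValid}$) pays off, and it is what the earlier naive lifting attempts of \S~\ref{subsec:adt} got wrong.
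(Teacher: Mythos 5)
Your proposal is correct and follows essentially the same route as the paper's own proof: unfold $\ginfername$ and $\gsolvename$ to extract a concretization $\liquid{C}\in\lconcrete{\gliquid{C}}$ solved to $\texttt{Just}\ \ltsol$, apply the Constraint Solving lemma to get $\valid{\applysub{\liquid{C}}{\ltsol}}$, lift this via the Gradual Validity lemma to $\gradualvalid{\applysub{\gliquid{C}}{\ltsol}}$, and conclude with the Constraint Generation lemma. Your added remarks on filtering out \texttt{Nothing} elements and on reusing the same concretization as the existential witness are sound refinements of the paper's argument, not deviations from it.
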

\begin{proof}
Since $\gltyp \in \ginfer{\glenv}{\glexpr}{\iquals}$
then $\exists \ltsol$ so that

\noindent
$
\begin{array}{rl}
&\\
\quad(1)&\gltyp = \applysub{\gltyp'}{\ltsol}\hfill \\
(2)&\texttt{Just}\ \ltsol \in \qgsolve{(\lambda \kappa. \quals)}{\gliquid{C}} \\
(3)&(\texttt{Just}\ \gltyp',\glC) = \texttt{Cons}\ \glenv\ \glexpr \\
&\\
\multicolumn{2}{l}{\text{From}\ (2),\ \exists \algradualinstance{C}\ \text{so that}}\\
&\\
(4)&\texttt{Just}\ \ltsol \in \solve{(\lambda \kappa. \quals)}{\lC} \\
&\\
\multicolumn{2}{l}{\text{From}\ (4)\ \text{and Lemma~\ref{lemma:solve} we get}}\\
&\\
(5)&\valid{\applysub{\lC}{\ltsol}}\\
&\\
\multicolumn{2}{l}{\text{By Lemma~\ref{lemma:validity} we get}}\\
&\\
(6)&\gradualvalid{\applysub{\glC}{\ltsol}}\\
&\\
\multicolumn{2}{l}{\text{By}\ (1),\ (3),\ \text{and Theorem~\ref{theorem:constraints}},}\\
&\\
&\hastype{}{\glenv}{\glexpr}{\gltyp}\\
&\\
\end{array}
$

\end{proof}

\begin{theorem}[Completeness]\label{theorem:completeness}\ \\
If $\ginfer{\glenv}{\glexpr}{\iquals} = \emptyset$, 
then $ \not \exists\gltyp.\
\hastype{}{\glenv}{\glexpr}{\gltyp}$.
\end{theorem}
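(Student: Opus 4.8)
The plan is to establish the contrapositive by contradiction, exactly as the paper's remark (``completeness by contradiction'') and the structure of the Soundness proof suggest. I would assume $\ginfer{\glenv}{\glexpr}{\iquals} = \emptyset$ and, for contradiction, that there exists a type $\gltyp$ with $\hastype{}{\glenv}{\glexpr}{\gltyp}$. The goal is then to exhibit an explicit member of $\ginfer{\glenv}{\glexpr}{\iquals}$, which contradicts emptiness. The three lemmas already proved---Constraint Generation (Lemma~\ref{theorem:constraints}), Constraint Solving (Lemma~\ref{lemma:solve}), and Gradual Validity (Lemma~\ref{lemma:validity})---are exactly the ingredients needed, run ``in reverse'' relative to the Soundness proof.

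Concretely, the chain of steps I would carry out is as follows. First, since a refinement typing derivation exists, its simple-type skeleton is derivable, so $\texttt{Cons}\ \glenv\ \glexpr$ succeeds and returns $(\texttt{Just}\ \gltyp', \glC)$ for some template $\gltyp'$ and gradual constraint set $\glC$. Applying the forward direction of Lemma~\ref{theorem:constraints} to $\hastype{}{\glenv}{\glexpr}{\gltyp}$ yields a substitution $\ltsol$ with $\gltyp \equiv \applysub{\gltyp'}{\ltsol}$ and $\gradualvalid{\applysub{\glC}{\ltsol}}$. Next, by part~(2) of Lemma~\ref{lemma:validity}, $\gradualvalid{\applysub{\glC}{\ltsol}}$ is equivalent to the existence of a concrete constraint set $\algradualinstance{C}$ with $\valid{\applysub{\lC}{\ltsol}}$; I fix such an $\lC$. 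Thus $\lC$ admits a solution over $\iquals$ (namely $\ltsol$), so by the contrapositive of part~(3) of Lemma~\ref{lemma:solve}, $\solve{(\lambda \kappa. \iquals)}{\lC} \neq \texttt{Nothing}$, i.e.\ it returns $\texttt{Just}\ \ltsol'$ for some $\ltsol'$. By the definition of $\gsolvename$, since $\algradualinstance{C}$, we obtain $\texttt{Just}\ \ltsol' \in \qgsolve{(\lambda \kappa. \iquals)}{\glC}$, and hence $\applysub{\gltyp'}{\ltsol'} \in \ginfer{\glenv}{\glexpr}{\iquals}$ via $\maybeapplysub{\cdot}{\cdot}$, contradicting $\ginfer{\glenv}{\glexpr}{\iquals} = \emptyset$.

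The main obstacle I anticipate is the bookkeeping that aligns the two distinct notions of ``solution'' in play: the substitution $\ltsol$ produced by Constraint Generation is merely a witness that the full gradual typing holds, and it must be recognized as a legitimate solution \emph{over $\iquals$} before the contrapositive of Lemma~\ref{lemma:solve} can fire. This is where I would invoke that the gradual liquid predicate grammar restricts refinements to conjunctions of elements of the finite set $\iquals$, so every liquid variable in the witnessing derivation is necessarily assigned a member of $\powerset{\iquals}$. A secondary subtlety is that Lemma~\ref{theorem:constraints} is stated under the assumption that $\texttt{Cons}$ succeeds; I would discharge the $\texttt{Nothing}$ case separately by appealing to soundness of the underlying Hindley--Milner skeleton computed by $\texttt{Fresh}$, which guarantees that a failing $\texttt{Cons}$ corresponds exactly to an untypeable term (and thus to the vacuously valid conclusion). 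Once the solution spaces are aligned, the remaining steps are routine rewrites through the definitions of $\gsolvename$ and $\maybeapplysub{\cdot}{\cdot}$.
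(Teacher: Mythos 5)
Your proposal is correct and follows essentially the same route as the paper's proof: contradiction via Lemma~\ref{theorem:constraints} to obtain a witnessing $\ltsol$ with $\gradualvalid{\applysub{\glC}{\ltsol}}$, Lemma~\ref{lemma:validity} to extract a concrete $\lC\in\lconcrete{\glC}$, the contrapositive of part~(3) of Lemma~\ref{lemma:solve} to conclude $\solve{(\lambda\kappa.\iquals)}{\lC}\neq\texttt{Nothing}$, and the definitions of $\gsolvename$ and $\ginfername$ to contradict emptiness. Your two anticipated subtleties (that $\ltsol$ must be a solution over $\powerset{\iquals}$, and the case where $\texttt{Cons}$ returns $\texttt{Nothing}$) are points the paper leaves implicit, and making them explicit only strengthens the argument.
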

\begin{proof}
Assume that there exists $\gltyp$ so that
$\hastype{}{\glenv}{\glexpr}{\gltyp}$. 
Then, by Lemma~\ref{theorem:constraints}, there exists an \ltsol  so that

\noindent
$
\begin{array}{rl}
\quad(1)&(\texttt{Just}\ \gltyp',\glC) = \texttt{Cons}\ \glenv\ \glexpr \\
(2)&\gltyp = \applysub{\gltyp'}{\ltsol}\hfill \\
(3)&\gradualvalid{\applysub{\glC}{\ltsol}} \\
&\\
\multicolumn{2}{l}{\text{From}\ (3)\ \text{and Lemma~\ref{lemma:validity}}\ 
\exists \algradualinstance{C}\ \text{so that}}\\
&\\
(4)&\valid{\applysub{\lC}{\ltsol}} \\
&\\
\multicolumn{2}{l}{\text{From}\ (4)\ \text{and inverting 3 of Lemma~\ref{lemma:solve} we get}}\\
&\\
(5)&\solve{(\lambda \kappa. \quals)}{\lC}\not = \texttt{Nothing}  \\
&\\
\multicolumn{2}{l}{\text{By the definition of \gsolvename we get}}\\
&\\
(6)&\exists\ltsol'.\texttt{Just}\ \ltsol' \in \qgsolve{(\lambda \kappa. \quals)}{\gliquid{C}}\\
&\\
\multicolumn{2}{l}{\text{By the definition of \ginfername we get}}\\
&\\
&\ginfer{\glenv}{\glexpr}{\iquals} \not = \emptyset\\
\end{array}
$

Since we reached a contradiction,  
there cannot exist $\gltyp$ so that
$\hastype{}{\glenv}{\glexpr}{\gltyp}$. 
\end{proof}

\begin{theorem}[Termination]
\ginfer{\glenv}{\glexpr}{\iquals} terminates.
\end{theorem}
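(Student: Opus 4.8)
The plan is to decompose $\ginfername$ into its three algorithmic components---constraint generation by $\texttt{Cons}$, concretization via $\lconcrete{\cdot}$, and solving by $\texttt{Solve}$---and argue that each terminates and that only finitely many of them are invoked. First I would observe that $\texttt{Cons}\ \glenv\ \glexpr$ terminates: as made explicit in §\ref{subsec:inference-algorithm}, it is the syntax-directed, structurally recursive presentation of the typing rules of Figure~\ref{fig:rules}, so on any (finite) $\glexpr$ it returns in finite time a template type $\gltyp$ together with a \emph{finite} list of constraints $\gliquid{C}$, each of finite size.

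The crux is then to show that $\qgsolve{\ltsol_0}{\gliquid{C}}$ issues only finitely many terminating calls to $\texttt{Solve}$. By definition $\qgsolve{\ltsol_0}{\gliquid{C}} = \{\solve{\ltsol_0}{\liquid{C}} \mid \liquid{C}\in\lconcrete{\gliquid{C}}\}$, so it suffices to bound the cardinality of $\lconcrete{\gliquid{C}}$. Here the finiteness argument that distinguishes the liquid setting from general gradual refinements is load-bearing: since $\iquals$ is finite, $\powerset{\iquals}$ is finite, and each imprecise refinement concretizes into $\lconcrete{\bpred{\glpred}{\rbind}} = \powerset{\iquals} \cap \concrete{\bpred{\glpred}{\rbind}} \subseteq \powerset{\iquals}$, a finite set. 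Because $\gliquid{C}$ is a finite list of constraints of finite size, it contains only finitely many imprecise refinements, and $\lconcrete{\gliquid{C}}$ is formed by choosing one element of $\powerset{\iquals}$ per imprecise refinement; hence $\lconcrete{\gliquid{C}}$ is finite (crudely bounded by $|\powerset{\iquals}|$ raised to the number of imprecise refinements).

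To conclude I would check that this finite set is also \emph{effectively} computable and that the downstream steps terminate. Computing each $\lconcrete{\bpred{\glpred}{\rbind}}$ amounts to filtering the finitely many elements of $\powerset{\iquals}$ by the specificity test $\lpred'\preceq\lpred$ and the locality test $\islocal{\rbind}{\lpred'}$; as noted in §\ref{subsec:algorithmic} both reduce to decidable SMT checks, so each filtering step halts. Then, for each of the finitely many $\liquid{C}\in\lconcrete{\gliquid{C}}$, the call $\solve{\ltsol_0}{\liquid{C}}$ terminates by Lemma~\ref{lemma:solve}(1), and applying the resulting solution to the template via $\maybeapplysub{\gltyp}{\ltsol}$ is a single substitution. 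Since the output set is assembled from finitely many terminating computations, $\ginfer{\glenv}{\glexpr}{\iquals}$ terminates. The proof is largely bookkeeping, as the paper anticipates; the one genuinely essential observation---and thus the ``hard part,'' though it is mild---is the finiteness of the concretization domain, which rests entirely on restricting predicates to $\powerset{\iquals}$ with $\iquals$ finite. Without this restriction the concretization would be infinite and the enumeration would not halt.
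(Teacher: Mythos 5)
Your proof is correct and follows essentially the same route as the paper: finiteness of $\lconcrete{\gliquid{C}}$ (from the finiteness of $\iquals$ and of the generated constraint set) plus termination of each $\solve{\cdot}{\cdot}$ call via Lemma~\ref{lemma:solve}(1). Your additions---explicit termination of \texttt{Cons} and the decidability of the specificity and locality filters---are sound elaborations of details the paper leaves implicit, not a different argument.
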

\begin{proof}
Since both the set of constraints $\glC$ and 
the set of refinement predicates $\iquals$ are finite, 
the concretizations $\lconcrete{\glC}$ are also finite. 
Thus, \ginfer{\glenv}{\glexpr}{\iquals} calls 
\solve{\cdot}{\cdot} finite times and by Theorem~\ref{lemma:solve}
\solve{\cdot}{\cdot} terminates, 
thus so does \ginfer{\glenv}{\glexpr}{\iquals}.
\end{proof}

\subsection{Criteria for Gradual Typing}\label{subsec:gradual:criteria}
Finally we prove the static criteria for gradual typing.

\paragraph{(i) Conservative Extension}

\begin{theorem}[Conservative Extension]
If $\texttt{Just}\ \ltyp = \infer{\lenv}{\lexpr}{\iquals}$, 
then $\ginfer{\lenv}{\lexpr}{\iquals} = \{\ltyp\}$. 
Otherwise, 
$\ginfer{\lenv}{\lexpr}{\iquals} = \emptyset$. 
\end{theorem}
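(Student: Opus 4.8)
The plan is to show that on purely liquid inputs the gradual machinery degenerates to the static one, because a precise constraint set has nothing to concretize. First I would pin down the behavior of \texttt{Cons}. Since $\lenv$ and $\lexpr$ contain no imprecise refinements, and (by the footnote in~\S\ref{subsec:adt}) \texttt{Cons} depends only on the structure of types, introducing fresh liquid variables $\kappa$ but never the unknown $\egrad$, the pair $(\gltyp,\gliquid{C})$ returned by $\texttt{Cons}\ \lenv\ \lexpr$ is exactly the static pair $(\lttyp,C)$ used inside $\infername$; in particular $\gliquid{C}$ is precise, i.e.\ free of $\egrad$.

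The key lemma I would then establish is that algorithmic concretization is the identity singleton on precise constraint lists: $\lconcrete{\gliquid{C}} = \{C\}$. This rests on the definitional fact that $\concrete{\bpred{\lpred}{\rbind}} = \{\lpred\}$ for a precise predicate $\lpred$, so concretization performs no $\egrad$-replacement and leaves every liquid variable $\kappa$ in place, to be resolved later by \texttt{Solve}. Propagating this singleton through the structural definitions of $\lconcrete{\cdot}$ on types, environments, subtyping and well-formedness constraints, and finally constraint lists, yields $\lconcrete{\gliquid{C}} = \{C\}$. Consequently $\gsolve{\ltsol_0}{\gliquid{C}}$ is itself a singleton, namely $\{\solve{\ltsol_0}{C}\}$, whose unique element is precisely the solution that $\infername$ computes by $\solve{\ltsol_0}{C}$ (appealing to Lemma~\ref{lemma:solve}).

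The proof then finishes by a two-way case split on the static result. If $\infer{\lenv}{\lexpr}{\iquals} = \texttt{Just}\ \ltyp$, then $\gltyp = \texttt{Just}\ \lttyp$ and $\solve{\ltsol_0}{C} = \texttt{Just}\ \ltsol$ with $\ltyp = \applysub{\lttyp}{\ltsol}$; unfolding the set comprehension of $\ginfername$ over the singleton $\gsolve{\ltsol_0}{\gliquid{C}}$ gives $\maybeapplysub{\lttyp}{\ltsol} = \texttt{Just}\ \ltyp$, hence $\ginfer{\lenv}{\lexpr}{\iquals} = \{\ltyp\}$. Otherwise $\infername$ returns $\texttt{Nothing}$, which occurs either because \texttt{Cons} fails, giving $\gltyp = \texttt{Nothing}$, or because $\solve{\ltsol_0}{C} = \texttt{Nothing}$. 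In the first case $\maybeapplysub{\gltyp}{\ltsol} = \texttt{Nothing}$ for the unique $\ltsol$, and in the second $\gsolve{\ltsol_0}{\gliquid{C}} = \{\texttt{Nothing}\}$, so $\maybeapplysub{\lttyp}{\texttt{Nothing}} = \texttt{Nothing}$; either way no $\gltyp'$ is contributed and $\ginfer{\lenv}{\lexpr}{\iquals} = \emptyset$.

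I expect the main obstacle to be the key lemma, and specifically the treatment of liquid variables. A precise liquid predicate generally contains a variable $\kappa$ with $\lpred \notin \powerset{\iquals}$, so a literal reading of $\lconcrete{\bpred{\lpred}{\rbind}} = \powerset{\iquals} \cap \concrete{\bpred{\lpred}{\rbind}}$ would produce $\emptyset$ rather than $\{\lpred\}$. The argument must therefore make explicit that algorithmic concretization acts only on the imprecise $\egrad$ components, treating a precise predicate as its own sole concretization, with liquid variables deferred to \texttt{Solve}. Once this definitional point is settled, everything else is routine unfolding of $\ginfername$, $\gsolvename$, and $\maybeapplysub{\cdot}{\cdot}$ over a singleton.
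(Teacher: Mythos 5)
Your proposal is correct and takes essentially the same route as the paper's own proof: unfold $\texttt{Just}\ \ltyp = \infer{\lenv}{\lexpr}{\iquals}$ into its \texttt{Cons} and \texttt{Solve} components, observe that the $\egrad$-free constraint set concretizes to the singleton $\{\lC\}$ so that \gsolvename{} degenerates to the static \texttt{Solve}, conclude $\ginfer{\lenv}{\lexpr}{\iquals} = \{\ltyp\}$, and dispatch the failure case by noting that both failure modes (constraint generation or solving) propagate to \ginfername{} yielding $\emptyset$. Your flagged obstacle is a genuine subtlety the paper elides: when it asserts $\{\lC\} = \lconcrete{\lC}$ it silently adopts exactly your resolution, namely that algorithmic concretization acts only on imprecise components and leaves liquid variables $\kvar$ (which lie outside $\powerset{\iquals}$) untouched for \texttt{Solve}.
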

\begin{proof}
If $\texttt{Just}\ \ltyp = \infer{\lenv}{\lexpr}{\iquals}$, 
then there exists an $\ltsol$, so that

\noindent
$
\begin{array}{rl}
\quad(1)&\ltyp = \applysub{\ltyp'}{\ltsol}\hfill \\
(2)&\texttt{Just}\ \ltsol = \solve{(\lambda \kappa. \quals)}{\liquid{C}} \\
(3)&(\texttt{Just}\ \ltyp',\lC) = \texttt{Cons}\ \lenv\ \lexpr \\
\end{array}
$
Since the generated constraints $\lC$ contain no \egrad, 
then 
$\{ \lC\} = \lconcrete{\lC}$.

Thus, 
$\qgsolve{(\lambda \kappa. \quals)}{\lC} = \texttt{Just}\ \ltsol$.
So, $\ginfer{\lenv}{\lexpr}{\iquals} = \{\ltyp\}$.

Otherwise, $\infer{\lenv}{\lexpr}{\iquals} = \texttt{Nothing}$,
because of a failure either at constraint generation 
or at solving. In either case \ginfername will also return $\emptyset$.
\end{proof}

\paragraph{(ii) Embedding of Unrefined Terms}

\begin{definition}[Unrefined Type \& Terms]
Unrefined types and terms represent base types and lambda calculus terms
that are typed using HindleyMilner inference: 
$\hastype{}{\shape{\env}}{\shape{\expr}}{\shape{\typ}}$.

\begin{align*}
\shape{\tref{v}{\btyp}{\pred}} &= \btyp \\
\shape{\tfun{x}{\typ}{\typ}} &= \shape{\typ_x} \rightarrow\shape{\typ}
\end{align*}

\begin{align*}
\shape{c} &= c \\
\shape{\efun{x}{\typ}{\expr}} &= \efun{x}{\typ}{\shape{\expr}}\\
\shape{x} &=x\\
\shape{\eapp{\expr}{x}} &= \eapp{\shape{\expr}}{x}\\
\shape{\eif{x}{\expr_1}{\expr_2}} &= \eif{x}{\shape{\expr_1}}{\shape{\expr_2}}\\
\shape{\elet{x}{\typ}{\expr_x}{\expr}} &= \elet{x}{\shape{\typ}}{\shape{\expr_x}}{\shape{\expr}}\\
\shape{\elettyp{x}{\typ}{\expr_x}{\expr}} &= \elettyp{x}{\shape{\typ}}{\shape{\expr_x}}{\shape{\expr}}\\
\end{align*}
\end{definition}

\begin{definition}[Imprecise Types \& Terms]
Imprecise types are refined with only \egrad.
Imprecise terms only use imprecise type annotations.
\begin{align*}
\dyn{\tref{v}{\btyp}{\pred}} &= \tref{v}{\btyp}{\egrad} \\
\dyn{\tfun{x}{\typ}{\typ}} &= \tfun{x}{\dyn{\typ_x}}{\dyn{\typ}}
\end{align*}

\begin{align*}
\dyn{c} &= c' \text{where } c' = c \wedge \tc{c'}=\dyn{\tc{c}} \\
\dyn{\efun{x}{\typ}{\expr}} &= \efun{x}{\typ}{\dyn{\expr}}\\
\dyn{x} &=x\\
\dyn{\eapp{\expr}{x}} &= \eapp{\dyn{\expr}}{x}\\
\dyn{\eif{x}{\expr_1}{\expr_2}} &= \eif{x}{\dyn{\expr_1}}{\dyn{\expr_2}}\\
\dyn{\elet{x}{\typ}{\expr_x}{\expr}} &= \elet{x}{\dyn{\typ}}{\dyn{\expr_x}}{\dyn{\expr}}\\
\dyn{\elettyp{x}{\typ}{\expr_x}{\expr}} &= \elettyp{x}{\dyn{\typ}}{\dyn{\expr_x}}{\dyn{\expr}}\\
\end{align*}
\end{definition}

\begin{lemma}[Well formedness of imprecise types]\label{lemma:dynwellformed}
\iswellformed{}{\env}{\dyn{\typ}}
\end{lemma}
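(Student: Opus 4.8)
The plan is to prove the environment-generalized statement that $\iswellformed{}{\env}{\dyn{\typ}}$ holds for \emph{every} (gradual liquid) environment $\env$ and every type $\typ$, by structural induction on $\typ$. Keeping $\env$ universally quantified is the key design choice: it lets the inductive hypothesis be reused under an extended environment when we reach the function case, where the codomain is checked under $\env$ augmented with a binding whose type is itself dynamized.

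In the base case $\typ = \tref{v}{\btyp}{\pred}$, so $\dyn{\typ} = \tref{v}{\btyp}{\egrad}$ with $\egrad$ abbreviating $\etrue \land \egrad$. By rule \rulewbase it suffices to establish $\gradualvalid{\iswellformed{}{\env}{\tref{v}{\btyp}{\etrue \land \egrad}}}$, which by the definition of $\gradualvalid{\cdot}$ from \S\ref{subsec:gradualtypes} reduces to exhibiting one $\env' \in \concrete{\env}$ and one static refinement in $\concrete{\bpred{(\etrue \land \egrad)}{v}}$ for which static well-formedness holds. I would take the witness refinement to be $\etrue$ itself: reflexivity of $\preceq$ gives $\specific{\etrue}{\etrue}$, and $\etrue$ is trivially local, so indeed $\etrue \in \concrete{\bpred{(\etrue \land \egrad)}{v}}$. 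It then remains to verify $\valid{\iswellformed{}{\env'}{\tref{v}{\btyp}{\etrue}}}$, which by the definition of static base well-formedness unfolds to $\hastype{}{\env', \bind{v}{\btyp}}{\etrue}{\tbool}$; this holds because $\etrue$ is a boolean constant, irrespective of $\env'$.

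In the function case $\typ = \tfun{x}{\typ_x}{\typ_2}$ and $\dyn{\typ} = \tfun{x}{\dyn{\typ_x}}{\dyn{\typ_2}}$, so rule \rulewfun demands $\iswellformed{}{\env}{\dyn{\typ_x}}$ together with $\iswellformed{}{\env, \bind{x}{\dyn{\typ_x}}}{\dyn{\typ_2}}$; the former is the inductive hypothesis on $\typ_x$ under $\env$, and the latter is the inductive hypothesis on $\typ_2$ under the enlarged environment $\env, \bind{x}{\dyn{\typ_x}}$. The one step I would treat with care is supplying the witness $\env'$ in the base case, i.e.\ the non-emptiness of $\concrete{\env}$. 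This follows from a short auxiliary observation that every gradual liquid type has a non-empty concretization: a precise refinement concretizes to itself, while the static part $\lpred$ of any imprecise refinement $\lpred \land \egrad$ is required by the syntax of \gradualliquidlang to be local, so $\lpred \in \concrete{\bpred{(\lpred \land \egrad)}{v}}$ by reflexivity of $\preceq$; the environment claim then follows binding by binding. Everything else is routine unfolding of the concretization and well-formedness definitions.
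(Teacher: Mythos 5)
Your proof is correct and follows essentially the same route as the paper's: the paper's entire proof is the one-line observation that $\etrue$ always belongs to the concretization of imprecise gradual refinements, which is exactly the witness you use. Your version merely spells out the structural induction, the environment generalization, and the non-emptiness of $\concrete{\env}$ that the paper leaves implicit under the word ``trivial.''
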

\begin{proof}
Trivial, since true always belongs to the concretization of imprecise gradual refinements. 
\end{proof}

\begin{lemma}[Imprecise Subtyping]\label{lemma:dynsubtype}
If all of the refinements in $\typ$ are local, then 
\issubtype{}{\env}{\typ}{\dyn{\typ}} and 
\issubtype{}{\env}{\dyn{\typ}}{\typ}.
\end{lemma}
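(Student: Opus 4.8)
The plan is to proceed by structural induction on $\typ$, establishing both subtyping directions \emph{simultaneously} (as a single conjunction) and for an \emph{arbitrary} environment $\env$. Carrying both directions together is essential: the forward direction on a function type will rely on the backward direction of the induction hypothesis at the (contravariant) argument position, and dually for the backward direction. Since every subterm of $\typ$ again has only local refinements, the induction hypothesis is available for the argument and result types in both directions and at any environment.

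For the base case $\typ = \tref{v}{\btyp}{\pred}$ we have $\dyn{\typ} = \tref{v}{\btyp}{\egrad}$. Both goals are base-refinement subtypings, so rule \rulesbase reduces each to the gradual validity judgment \gradualvalid{\cdot}, which by definition requires exhibiting \emph{one} concretization of the gradual refinement $\egrad$ under which the corresponding static base subtyping holds. Since $\egrad$ abbreviates $\etrue \land \egrad$, its concretization $\concrete{\bpred{\egrad}{v}}$ is precisely the set of local predicates (every predicate is more specific than $\etrue$). By hypothesis $\pred$ is local, so $\pred$ itself lies in this set; choosing it as the witness in \emph{both} directions collapses each goal to the reflexive instance \valid{\issubtype{}{\env}{\tref{v}{\btyp}{\pred}}{\tref{v}{\btyp}{\pred}}}, which holds trivially (the reflexive case of base subtyping).

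For the inductive case $\typ = \tfun{x}{\typ_x}{\typ_r}$ we have $\dyn{\typ} = \tfun{x}{\dyn{\typ_x}}{\dyn{\typ_r}}$, and we apply rule \rulesfun. The forward goal \issubtype{}{\env}{\typ}{\dyn{\typ}} splits into the contravariant obligation \issubtype{}{\env}{\dyn{\typ_x}}{\typ_x}, discharged by the backward direction of the induction hypothesis for $\typ_x$, and the covariant obligation \issubtype{}{\env, \bind{x}{\dyn{\typ_x}}}{\typ_r}{\dyn{\typ_r}}, discharged by the forward direction of the induction hypothesis for $\typ_r$ at the extended environment. The backward goal is symmetric, invoking the opposite directions of the hypothesis. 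The only real subtlety is the contravariance bookkeeping here, which is exactly why the two directions must be threaded through the induction together rather than proved in isolation; I also note that locality, being defined by quantification over \emph{all} closing substitutions, is independent of the typing environment, so the refinements of $\typ_x$ and $\typ_r$ remain local (and hence remain valid witnesses for $\egrad$) in the extended environments arising in this case.
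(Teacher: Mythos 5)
Your proof is correct and follows essentially the same route as the paper's: structural induction on $\typ$, carrying both subtyping directions together so that the contravariant argument position in the function case can appeal to the opposite direction of the induction hypothesis, and discharging the base case by exhibiting a concretization witness for $\egrad$. The only cosmetic difference is in the base-case forward direction, where the paper takes $\etrue$ as the witness in $\concrete{\tref{v}{\btyp}{\egrad}}$ (so that direction does not even need locality of $\pred$), while you reuse $\pred$ itself; both witnesses are valid.
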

\begin{proof}
By induction on $\typ$.
\begin{itemize}[leftmargin=*]
\item \issubtype{}{\env}{\tref{v}{\btyp}{\pred}}{\tref{v}{\btyp}{\egrad}}, since
$\tref{v}{\btyp}{\etrue}\in\concrete{\tref{v}{\btyp}{\egrad}}$.
\item \issubtype{}{\env}{\tref{v}{\btyp}{\egrad}}{\tref{v}{\btyp}{\pred}}, since
$\tref{v}{\btyp}{\pred}\in\concrete{\tref{v}{\btyp}{\egrad}}$.
\item \issubtype{}{\env}{\tfun{x}{\typ_x}{\typ}}{\tfun{x}{\dyn{\typ_x}}{\dyn{\typ}}}, since
\issubtype{}{\env}{\dyn{\typ_x}}{\typ_x} and 
\issubtype{}{\env, \bind{x}{\dyn{\typ_x}}}{\typ}{\dyn{\typ}}
by inductive hypothesis. 
\item \issubtype{}{\env}{\tfun{x}{\dyn{\typ_x}}{\dyn{\typ}}}{\tfun{x}{\typ_x}{\typ}}, since
\issubtype{}{\env}{\typ_x}{\dyn{\typ_x}} and 
\issubtype{}{\env, \bind{x}{\typ_x}}{\dyn{\typ}}{\typ}
by inductive hypothesis. 
\end{itemize}
\end{proof}

\begin{lemma}[Imprecise Terms]\label{lemma:dynamic}
If all the refinements for constants and user types are local, then
if $\hastype{}{\shape{\env}}{\shape{\expr}}{\shape{\typ}}$, 
then $\hastype{}{\dyn{\env}}{\dyn{\expr}}{\dyn{\typ}}$.
\end{lemma}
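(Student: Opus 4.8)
The plan is to proceed by induction on the derivation of $\hastype{}{\shape{\env}}{\shape{\expr}}{\shape{\typ}}$, casing on the last typing rule of Figure~\ref{fig:rules}. Two structural facts make the cases go through. First, on types $\dyn{\cdot}$ is insensitive to the original refinement, so $\dyn{\shape{\typ}} = \dyn{\typ}$; hence any derivation I build ending at the shape-type $\shape{\typ}$ automatically lands at the required $\dyn{\typ}$. Second, since every type in the premise is unrefined, all subtyping side-conditions in the shape derivation are trivially valid, i.e. shape typing is just the simply-typed discipline and is insensitive to the refinements carried by environment bindings. The two preceding lemmas then do the real work: Lemma~\ref{lemma:dynwellformed} discharges every well-formedness premise (each $\dyn{\typ}$ is well-formed because $\etrue \in \concrete{\egrad}$), and Lemma~\ref{lemma:dynsubtype} converts between a precise type and its $\dyn$-image by subtyping in both directions wherever a rule forces a type to be weakened.

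For the leaf and structural cases the construction is routine. For a variable of base type, rule \ruletvarbase yields $\tref{v}{\btyp}{v=x}$; since $v=x$ is local, Lemma~\ref{lemma:dynsubtype} gives $\issubtype{}{\dyn{\env}}{\tref{v}{\btyp}{v=x}}{\tref{v}{\btyp}{\egrad}}$, so \ruletsub lifts it to $\dyn{\btyp}$, while a variable of function type is handled directly by \ruletvar since $\dyn{\env}(x)=\dyn{(\env(x))}$. For a constant, $\dyn{c}=c'$ with $\tc{c'}=\dyn{\tc{c}}$, and the hypothesis that constant refinements are local guarantees $c' \in \embed{\tc{c'}}$, so \ruletconst applies. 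Rule \ruletsub on the shape side is trivial (unrefined subtyping) and is simply dropped. In the \ruletfun, \ruletapp, \ruletlet, and \ruletspec cases I appeal to the induction hypothesis on the sub-derivations and rebuild the same rule in the $\dyn$-system; here I use that $\dyn{\cdot}$ commutes with the variable substitution $\sub{x}{y}$ of \ruletapp (the $\egrad$ refinements contain no free program variables), Lemma~\ref{lemma:dynwellformed} to discharge the well-formedness premises---including the ``escaping variable'' condition of the let rules, which is immediate because $\dyn{\typ}$ never mentions the bound variable---and, for \ruletspec, Lemma~\ref{lemma:dynsubtype} to bridge the $\dyn$-annotation and the inferred type.

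I expect the conditional (rule \ruletif) to be the main obstacle, because of path sensitivity. Inverting \ruletif on the shape side yields branch premises under environments strengthened with a guard binding $\bind{x'}{\tref{v}{\tbool}{\refparam{\rparam}{x}}}$, and the very same binding must reappear verbatim in the $\dyn$-system derivation, since the rule fixes the guard refinement and I cannot replace it by $\egrad$. The difficulty is that the induction hypothesis only produces a derivation whose extra binding is the $\dyn$-image $\bind{x'}{\tref{v}{\tbool}{\egrad}}$, not the precise guard. I would close this gap by observing that the guard predicate is local, hence $\tref{v}{\tbool}{\refparam{\rparam}{x}} \in \concrete{\tref{v}{\tbool}{\egrad}}$ and thus $\tref{v}{\tbool}{\refparam{\rparam}{x}}$ is a subtype of $\dyn{\tbool}$; a standard narrowing argument then lets me replace the $\egrad$ guard binding by the precise one while preserving the branch typing (equivalently, I would generalize the statement to carry an auxiliary context of local guard bindings that is exempt from $\dyn$).

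Once both branches type-check at the common type $\dyn{\typ}$ under the correctly strengthened environments, \ruletif reassembles the result, its well-formedness premise again supplied by Lemma~\ref{lemma:dynwellformed}. Assembling all cases yields $\hastype{}{\dyn{\env}}{\dyn{\expr}}{\dyn{\typ}}$, which is exactly what Theorem~\ref{thm:embedding} will later combine with completeness of inference.
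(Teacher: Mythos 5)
Your proof has the same skeleton as the paper's: induction on the shape derivation, with Lemma~\ref{lemma:dynwellformed} discharging every well-formedness premise and Lemma~\ref{lemma:dynsubtype} mediating between precise types and their $\dyn$-images, and your leaf and structural cases (variables, constants, abstraction, application, the two lets) match the paper's case analysis essentially step for step. The divergence is the conditional case, which you correctly identify as the delicate point, but the justification you give there is wrong. You claim the guard predicate is local, hence $\tref{v}{\tbool}{\refparam{\rparam}{x}} \in \concrete{\tref{v}{\tbool}{\egrad}}$. It is not: locality, $\islocal{\rbind}{\pred} \defeq \forall \esubst, \exists \vparam.\ \totrue{\applysub{\pred\sub{\rbind}{\vparam}}{\esubst}}$, quantifies over the \emph{refined} variable, and the guard refinement $\refparam{\rparam}{x}$ never mentions that variable; under a substitution sending $x$ to $\efalse$ no choice of $\vparam$ makes it true. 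Such environment-constraining predicates are exactly what locality is designed to exclude, so the precise guard type is not a concretization of $\dyn{\tbool}$. Your fallback formulation, an auxiliary context of ``local guard bindings'' exempt from $\dyn$, inherits the same defect, since the guard bindings are precisely the non-local ones.

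The step is repairable: the subtyping you actually need, $\issubtype{}{\dyn{\env}}{\tref{v}{\tbool}{\refparam{\rparam}{x}}}{\tref{v}{\tbool}{\egrad}}$, does hold, because gradual validity is existential over concretizations and concretizing $\egrad$ to $\etrue$ yields a valid static judgment (the guard implies $\etrue$). This is exactly how the paper proves the direction $\issubtype{}{\env}{\typ}{\dyn{\typ}}$ of Lemma~\ref{lemma:dynsubtype}, and note that this direction never uses locality; only the converse $\issubtype{}{\env}{\dyn{\typ}}{\typ}$ does. Even so, you would additionally owe a proof of the narrowing lemma itself, which the paper never establishes. The paper's own proof sidesteps the entire issue: since the shape judgment is ordinary simple typing, the fresh guard binding is irrelevant to the branch premises, so inversion yields $\hastype{}{\shape{\env}}{\shape{\expr_1}}{\shape{\typ}}$ and $\hastype{}{\shape{\env}}{\shape{\expr_2}}{\shape{\typ}}$ with no guard binding at all; the induction hypothesis then gives $\hastype{}{\dyn{\env}}{\dyn{\expr_1}}{\dyn{\typ}}$ and $\hastype{}{\dyn{\env}}{\dyn{\expr_2}}{\dyn{\typ}}$, and plain \emph{weakening} (adding a binding for a fresh variable, which needs no subtyping, concretization, or locality reasoning) puts the environments into the form required by rule \ruletif. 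You should restructure the conditional case along those lines, or else state and prove narrowing with the corrected subtyping argument.
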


\begin{proof}
The proof proceeds by induction on the derivation tree of 
$\hastype{}{\shape{\env}}{\shape{\expr}}{\shape{\typ}}$.
\begin{itemize}[leftmargin=*]
\item $\expr \equiv x$.
By assumption, $x\in\dyn{\env}$. 
\begin{itemize}[leftmargin=*]
\item If $\dyn{\env}(x) = \tref{v}{\btyp}{\_}$, 
then $\hastype{}{\dyn{\env}}{x}{\tref{v}{\btyp}{v = x}}$. 
By Rule~\ruletsub and Lemmas~\ref{lemma:dynsubtype} and~\ref{lemma:dynwellformed}
 $\hastype{}{\dyn{\env}}{x}{\tref{v}{\btyp}{\egrad}}$.
\item Otherwise, $\hastype{}{\dyn{\env}}{x}{\dyn{\env(x)}}$. 
\end{itemize}

\item $\expr \equiv \efun{x}{\typ}{\expr'}$. 
By inversion of hypothesis 
$\hastype{}{\shape{\env, \bind{x}{\typ_x}}}{\shape{\expr'}}{\shape{\typ}}$. 
By inductive hypothesis
$\hastype{}{\dyn{\env, \bind{x}{\typ_x}}}{\dyn{\expr'}}{\dyn{\typ}}$. 
By Lemma~\ref{lemma:dynwellformed}
$\iswellformed{}{\dyn{\env}}{\dyn{\tfun{x}{\typ_x}{\typ}}}$. 
So, by rule \ruletfun
$\hastype{}{\dyn{\env}}{\dyn{\expr}}{\dyn{\tfun{x}{\typ_x}{\typ}}}$.
\item $\expr \equiv \eapp{\expr'}{x}$.
By inversion of hypothesis 
$\hastype{}{\shape{\env}}{\shape{\expr'}}{\shape{\tfun{x}{\typ_x}{\typ}}}$
and
$\hastype{}{\shape{\env}}{\shape{x}}{\shape{\typ_x}}$.
By inductive hypothesis
$\hastype{}{\dyn{\env}}{\dyn{\expr'}}{\dyn{\tfun{x}{\typ_x}{\typ}}}$
and
$\hastype{}{\dyn{\env}}{\dyn{x}}{\dyn{\typ_x}}$.
By rule \ruletapp
$\hastype{}{\dyn{\env}}{\dyn{\expr}}{\dyn{\typ}}$.

\item $\expr \equiv c$. 
By definition of $\dyn{c}$ this case falls in the previous. 
\item $\expr \equiv \eif{x}{\expr_1}{\expr_2}$. 
By inversion of the hypothesis
$\hastype{}{\shape{\env}}{x}{\shape{\tref{v}{\tbool}{\_}}}$, 
$\hastype{}{\shape{\env}}{\shape{\expr_1}}{\shape{\typ}}$, and  
$\hastype{}{\shape{\env}}{\shape{\expr_2}}{\shape{\typ}}$.
By inductive hypothesis 
$\hastype{}{\dyn{\env}}{x}{\dyn{\tref{v}{\tbool}{\_}}}$, 
$\hastype{}{\dyn{\env}}{\dyn{\expr_1}}{\dyn{\typ}}$, and  
$\hastype{}{\dyn{\env}}{\dyn{\expr_2}}{\dyn{\typ}}$.
By weakening, Lemma~\ref{lemma:dynwellformed} and the rule \ruletif
$\hastype{}{\dyn{\env}}{\dyn{e}}{\dyn{\typ}}$.
\item $\expr \equiv \elet{x}{\typ}{\expr_x}{\expr'}$
By inversion of the hypothesis 
$\hastype{}{\shape{\env}}{\shape{\expr_x}}{\shape{\typ_x}}$
and
$\hastype{}{\shape{\env, \bind{x}{\typ_x}}}{\shape{\expr'}}{\shape{\typ}}$.
By inductive hypothesis, rule \ruletlet, and Lemma~\ref{lemma:dynwellformed},
$\hastype{}{\dyn{\env}}{\dyn{e}}{\dyn{\typ}}$.
\item $\expr \equiv \elettyp{x}{\typ}{\expr_x}{\expr}$
By inversion of the hypothesis 
$\hastype{}{\shape{\env}}{\shape{\expr_x}}{\shape{\typ'_x}}$
and
$\hastype{}{\shape{\env, \bind{x}{\typ_x}}}{\shape{\expr'}}{\shape{\typ}}$.
By hypothesis, Lemma~\ref{lemma:dynsubtype} and rule \ruletsub
$\hastype{}{\shape{\env}}{\shape{\expr_x}}{\shape{\typ_x}}$.
By inductive hypothesis, rule \ruletspec, and Lemma~\ref{lemma:dynwellformed},
$\hastype{}{\dyn{\env}}{\dyn{e}}{\dyn{\typ}}$.
\end{itemize}
\end{proof}

\begin{theorem}[Embedding of Unrefined Terms]
If all the refinements in constants and user provided specifications are local, 
then if $\hastype{}{\shape{\env}}{\shape{\expr}}{\shape{\typ}}$, 
then $\ginfer{\dyn{\env}}{\dyn{\expr}}{\iquals}\not = \emptyset$.
\end{theorem}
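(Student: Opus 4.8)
The plan is to derive the result from two facts already established above: the \emph{Imprecise Terms} lemma (Lemma~\ref{lemma:dynamic}) and the \emph{Completeness} theorem (Theorem~\ref{theorem:completeness}). The crucial simplification is that I do not need to show the inference algorithm reconstructs any \emph{particular} type for $\dyn{\expr}$; by completeness it is enough to exhibit \emph{some} gradual liquid type that $\dyn{\expr}$ inhabits under $\dyn{\env}$, and then argue that the inferred set cannot be empty.

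First I would apply Lemma~\ref{lemma:dynamic} directly. Its two premises are precisely the hypotheses available here: the assumption that every refinement occurring in constants and user-provided specifications is local, and the simply-typed derivation $\hastype{}{\shape{\env}}{\shape{\expr}}{\shape{\typ}}$. The lemma then produces a full gradual liquid typing derivation $\hastype{}{\dyn{\env}}{\dyn{\expr}}{\dyn{\typ}}$, so that $\dyn{\typ}$ is a witness type for $\dyn{\expr}$ under $\dyn{\env}$. Second, I would invoke the contrapositive of Theorem~\ref{theorem:completeness}: since there exists a gradual type $\gltyp$ (namely $\gltyp = \dyn{\typ}$) with $\hastype{}{\dyn{\env}}{\dyn{\expr}}{\gltyp}$, the inference algorithm cannot return the empty set. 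Hence $\ginfer{\dyn{\env}}{\dyn{\expr}}{\iquals} \neq \emptyset$, which is exactly the conclusion sought. The reduction to completeness matters because $\ginfername$ computes strongest postconditions and need not produce $\dyn{\typ}$ itself; completeness decouples non-emptiness from the specific inferred type.

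All the genuine difficulty sits inside Lemma~\ref{lemma:dynamic}, which is proved by induction on the simply-typed derivation and relies on the \emph{Well-formedness} and \emph{Imprecise Subtyping} lemmas (Lemmas~\ref{lemma:dynwellformed} and~\ref{lemma:dynsubtype}). I expect the main obstacle there to be the two cases where the derivation does not line up syntactically with $\dyn{\cdot}$ and a use of subsumption (Rule \ruletsub) is forced: the variable case, where the exact refinement $v = x$ delivered by Rule \ruletvarbase must be weakened to $\egrad$ via $\issubtype{}{\dyn{\env}}{\tref{v}{\btyp}{v=x}}{\tref{v}{\btyp}{\egrad}}$, and the explicitly-typed let case (Rule \ruletspec), where the type inferred for the bound expression must be coerced to the declared imprecise type. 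Both coercions are supplied by Lemma~\ref{lemma:dynsubtype}, but that lemma requires every refinement in the source type to be local; the care therefore lies in threading the locality invariant through the induction, rather than in the final appeal to completeness, which is immediate.
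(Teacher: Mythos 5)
Your proposal is correct and takes essentially the same route as the paper: the paper's proof likewise consists of applying Lemma~\ref{lemma:dynamic} to obtain the witness derivation $\hastype{}{\dyn{\env}}{\dyn{\expr}}{\dyn{\typ}}$ and then concluding $\ginfer{\dyn{\env}}{\dyn{\expr}}{\iquals}\neq\emptyset$ by the contrapositive of completeness (Theorem~\ref{theorem:completeness}). Your remarks about where the real work sits (the induction in Lemma~\ref{lemma:dynamic}, with subsumption via Lemma~\ref{lemma:dynsubtype} in the variable and \ruletspec cases) also match the paper's development.
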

\begin{proof}
Since by Lemma~\ref{lemma:dynamic}
the theorem is proved by completeness of our inference algorithm, 
\ie Theorem~\ref{theorem:completeness}. 
\end{proof}

\paragraph{(iii) Static Gradual Guarantee}

\begin{definition}[Precision of Gradual Types]
$\gless{\gltyp_1}{\gltyp_2}$ \textit{iff}
$\lconcrete{\gltyp_1} \subseteq  \lconcrete{\gltyp_2}$.
\end{definition}

\begin{theorem}[Static Gradual Guarantee]
If 
$\glenv_1\sqsubseteq\glenv_2$ and
$\glexpr_1\sqsubseteq\glexpr_2$,
then 
for every 
$\gltyp_{1i}\in\ginfer{\glenv_1}{\glexpr_1}{\iquals}$
there exists \gless{\gltyp_{1i}}{\gltyp_{2i}} so that 
$\gltyp_{2i}\in \ginfer{\glenv_2}{\gltyp_2}{\iquals}$.
\end{theorem}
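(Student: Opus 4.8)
The plan is to exploit the purely set-theoretic character of the precision order, together with the fact (noted in the footnote of \S~\ref{subsec:adt}) that \texttt{Cons} depends only on the structure of types and terms and not on their refinements. First I would observe that $\glexpr_1\sqsubseteq\glexpr_2$ and $\glenv_1\sqsubseteq\glenv_2$ relax only refinements and never the underlying type skeleton, so the two runs of constraint generation have identical shape. Writing $(\gltyp_1,\gliquid{C}_1)=\texttt{Cons}\ \glenv_1\ \glexpr_1$ and $(\gltyp_2,\gliquid{C}_2)=\texttt{Cons}\ \glenv_2\ \glexpr_2$, the first lemma to establish is that constraint generation is monotone in precision: $\gltyp_1\sqsubseteq\gltyp_2$ and $\gliquid{C}_1\sqsubseteq\gliquid{C}_2$, where $\sqsubseteq$ on constraint lists is the pointwise lifting of type precision. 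Here the invariance of \texttt{Cons} does the heavy lifting, since the fresh liquid variables introduced for inferred positions are identical on both sides while every user-written refinement on the right is at least as precise as its counterpart on the left.

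Next I would convert precision of constraints into containment of concretizations. By the definition of precision, $\gltyp_1\sqsubseteq\gltyp_2$ unfolds to $\lconcrete{\gltyp_1}\subseteq\lconcrete{\gltyp_2}$; propagating this through the clauses defining $\lconcrete{\cdot}$ on environments, on subtyping and well-formedness constraints, and on constraint lists, I would obtain the key inclusion $\lconcrete{\gliquid{C}_1}\subseteq\lconcrete{\gliquid{C}_2}$. Intuitively, every concretization of the more relaxed constraints $\gliquid{C}_1$ into a purely liquid constraint set $\liquid{C}$ is also a legal concretization of the more committed constraints $\gliquid{C}_2$.

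With these two facts the main argument is short. Take any $\gltyp_{1i}\in\ginfer{\glenv_1}{\glexpr_1}{\iquals}$. By the definitions of $\ginfer$ and $\gsolve$ there is a solution $\ltsol$ and a concretization $\liquid{C}\in\lconcrete{\gliquid{C}_1}$ with $\solve{\ltsol_0}{\liquid{C}}=\texttt{Just}\ \ltsol$ and $\gltyp_{1i}=\applysub{\gltyp_1}{\ltsol}$. Since $\liquid{C}\in\lconcrete{\gliquid{C}_2}$ by the inclusion above, the very same $\ltsol$ lies in $\gsolve{\ltsol_0}{\gliquid{C}_2}$, so $\gltyp_{2i}\defeq\applysub{\gltyp_2}{\ltsol}\in\ginfer{\glenv_2}{\glexpr_2}{\iquals}$. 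It then remains to verify $\gless{\gltyp_{1i}}{\gltyp_{2i}}$, which follows from a final monotonicity lemma: substituting a fixed liquid solution for the shared liquid variables preserves precision, so $\gltyp_1\sqsubseteq\gltyp_2$ yields $\applysub{\gltyp_1}{\ltsol}\sqsubseteq\applysub{\gltyp_2}{\ltsol}$.

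The main obstacle I anticipate is not the final chaining step but the precision-monotonicity of \texttt{Cons}, and specifically pinning down the definition of $\sqsubseteq$ on terms and its interaction with the fresh liquid variables generated at inferred positions. I must guarantee that relaxing precision leaves the liquid-variable skeleton identical on both sides, so that the two constraint systems literally share their unknowns, while only the explicit refinements are weakened; once that is secured, the crucial inclusion $\lconcrete{\gliquid{C}_1}\subseteq\lconcrete{\gliquid{C}_2}$, and hence the reuse of a single concretization $\liquid{C}$ together with its solution $\ltsol$ for both solves, drops out directly from the definitions.
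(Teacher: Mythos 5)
Your proposal is correct and follows essentially the same route as the paper's proof: both hinge on the monotonicity of \texttt{Cons} with respect to precision (the paper phrases this as ``\texttt{Cons} preserves $\egrad$''), use the resulting containment of constraint concretizations to reuse the very same concretization and solution $\ltsol$ on the less precise side, and conclude with the fact that applying a fixed solution preserves precision of the template types. The only difference is presentational: you state as explicit lemmas (monotonicity of constraint generation, concretization containment, substitution monotonicity) what the paper treats as immediate consequences of \texttt{Cons} and $\ltsol$ preserving $\egrad$.
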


\begin{proof}
Since $\gltyp_{1i} \in \ginfer{\glenv_1}{\glexpr_1}{\iquals}$
then $\exists \ltsol$ so that

\noindent
$
\begin{array}{rl}
\quad(1)&\gltyp_{1i} = \applysub{\gltyp_1}{\ltsol}\hfill \\
(2)&\texttt{Just}\ \ltsol \in \qgsolve{(\lambda \kappa. \quals)}{\gliquid{C}_1} \\
(3)&(\texttt{Just}\ \gltyp_1,\glC_1) = \texttt{Cons}\ \glenv_1\ \glexpr_1 \\
&\\
\multicolumn{2}{l}{\text{From}\ (2),}\\
&\\
(4)&\exists \algradualinstance{C_1}.
\texttt{Just}\ \ltsol \in \solve{(\lambda \kappa. \quals)}{\lC_1} \\
&\\
\multicolumn{2}{l}{\text{Since \texttt{Cons} is preserves}\ \egrad}\\
&\\
(5)&(\texttt{Just}\ \gltyp_2,\glC_2) = \texttt{Cons}\ \glenv_2\ \glexpr_2 \\
(6)&\gless{\gltyp_1}{\gltyp_2}\\
(7)& \gless{\glC_1}{\glC_2}\\
&\\
\multicolumn{2}{l}{\text{By}\ (4) \text{ and } (7)\ \text{we get}}\\
&\\
(8)&\lC_1 \in \lconcrete{\glC_2} \\
&\\
\multicolumn{2}{l}{\text{So,}}\\
&\\
(9)&\texttt{Just}\ \ltsol \in \qgsolve{(\lambda \kappa. \quals)}{\gliquid{C}_2} \\
&\\
\multicolumn{2}{l}{\text{By}\ (5) \text{ and } (9)\ \text{we get}}\\
&\\
(10)&\applysub{\gltyp_2}{\ltsol} \in \ginfer{\glenv_2}{\glexpr_2}{\iquals} \\
&\\
\multicolumn{2}{l}{\text{By}\ (6), (10)\ \text{and since } \ltsol\ \text{preserves}\ \egrad}\\
&\\
&\gless{\applysub{\gltyp_1}{\ltsol}}{\applysub{\gltyp_2}{\ltsol}} \\
\end{array}
$

\end{proof}
 
\end{document}